\renewcommand{\baselinestretch}{1.5}
\definecolor{dbblue}{RGB}{10,65,155}				
\definecolor{blue}{RGB}{0,113.9850,188.9550} 			
\definecolor{red}{RGB}{216.7500,82.8750,24.9900} 		
\definecolor{green}{RGB}{118.8300,171.8700,47.9400} 	
\definecolor{grey}{RGB}{110,110,110}				
\definecolor{c1}{RGB}{0,113.9850,188.9550}			
\definecolor{c2}{RGB}{216.7500,82.8750,24.9900}		
\definecolor{c3}{RGB}{236.8950,176.9700,31.8750}		
\definecolor{c4}{RGB}{125.9700,46.9200,141.7800}		
\definecolor{c5}{RGB}{118.8300,171.8700,47.9400}		
\definecolor{c6}{RGB}{76.7550,189.9750,237.9150}		
\definecolor{c7}{RGB}{161.9250,19.8900,46.9200}		
\newtheorem{theorem}{Theorem}
\newtheorem{assumption}{Assumption}
\newtheorem{lemma}{Lemma}
\newtheorem{remark}{Remark}
\DeclareMathOperator{\sign}{sign}
\newcommand{\overbar}[1]{\mkern 1.5mu\overline{\mkern-1.5mu#1\mkern-1.5mu}\mkern 1.5mu}
\newcommand{\qed}{\hfill $\blacksquare$}
\newenvironment{proof}[1][Proof.]{\begin{trivlist} \item[\hskip \labelsep {\bfseries #1}]}{\end{trivlist}}
\newcommand{\doublewidetilde}[1]{{%
  \mathpalette\double@widetilde{#1}%
}}
\newcommand{\double@widetilde}[2]{%
  \sbox\z@{$\m@th#1\widetilde{#2}$}%
  \ht\z@=.9\ht\z@
  \widetilde{\box\z@}%
}
\begin{document}

\title{The drift burst hypothesis\vspace*{-0.50cm}}
\author{Kim Christensen \and Roel Oomen \and Roberto Ren\`{o}\thanks{Christensen: Department of Economics and Business
Economics, CREATES, Aarhus University, \texttt{kim@econ.au.dk}. Oomen: Deutsche Bank, London and Department of Statistics, London School of Economics. Ren\`{o}: Department of Economics, University of Verona, \texttt{roberto.reno@univr.it}. We thank Maria Flora, Alessandro Gnoatto, Martino Grasselli, Frederich Hubalek, Aleksey Kolokolov, Sebastian Laurent, Oliver Linton, Cecilia Mancini, Nour Meddahi, Per Mykland, Thorsten Rheinlander, anonymous referees, and participants at the 9th Annual SoFiE Conference in Hong Kong, XVII--XVIII Workshop in Quantitative Finance in Pisa and Milan, 3rd Empirical Finance Workshop at ESSEC, Paris, 10th CFE Conference in Seville, annual conference on Market Microstructure and High Frequency Data at the Stevanovich Center, U. of Chicago, and at seminars in DCU Dublin, Rady School of Management (UCSD), SAFE Frankfurt, Toulouse, TU Wien, Unicredit, U. of Venice, and CREATES for helpful comments and suggestions. Christensen received funding from the Danish Council for Independent Research (DFF -- 4182-00050) and was supported by CREATES. We thank the CME Group and Deutsche Bank AG for access to the data. The views and opinions rendered in this paper reflect the authors' personal views about the subject and do not necessarily represent the views of Deutsche Bank AG, any part thereof, or any other organization. This article is necessarily general and is not intended to be comprehensive, nor does it constitute legal or financial advice in relation to any particular situation. \textsc{Matlab} code to compute the proposed drift burst $t$-statistic is available on request. Commercial usage of the code is prohibited without explicit consent from the authors.}}
\date{November 2020}
\maketitle

\vspace*{-1.00cm}

\begin{abstract}
The drift burst hypothesis postulates the existence of short-lived locally explosive trends in the price paths of financial assets.
The recent U.S. equity and treasury flash crashes can be viewed as two high-profile manifestations of such dynamics, but we argue that drift bursts of varying magnitude are an expected and regular occurrence in financial markets that can arise through established mechanisms of liquidity provision. We show how to build drift bursts into the continuous-time It\^{o} semimartingale model, discuss the conditions required for the process to remain arbitrage-free, and propose a nonparametric test statistic that identifies drift bursts from noisy high-frequency data. We apply the test and demonstrate that drift bursts are a stylized fact of the price dynamics across equities, fixed income, currencies and commodities. Drift bursts occur once a week on average, and the majority of them are accompanied by subsequent price reversion and can thus be regarded as ``flash crashes.'' The reversal is found to be stronger for negative drift bursts with large trading volume, which is consistent with endogenous demand for immediacy during market crashes.

\bigskip \noindent \textbf{JEL Classification}: G10; C58.

\medskip \noindent \textbf{Keywords}:  flash crashes; gradual jumps; volatility bursts; liquidity; nonparametric statistics; microstructure noise
\end{abstract}

\thispagestyle{empty}

\pagebreak

\section{Introduction} \setcounter{page}{1}

The orderly functioning of financial markets is viewed by most regulators as their first and foremost objective. It is therefore unsurprising that the recent flash crashes in the U.S. equity and treasury markets are subject to intense debate and scrutiny, not least because they raise concerns around the stability of the market and the integrity of its design \citep*[see e.g.][and Figure \ref{figure:flash_crash} for an illustration]{cftc-sec:10a,cftc-sec:11a,cftc-sec:15a}. Moreover, there is growing consensus that flash crashes of varying magnitude are becoming more frequent across financial markets.\footnote{Nanex Research has reported hundreds of flash crashes across all major financial markets, see \url{http://www.nanex.net/NxResearch/}. Related work includes \citet*{golub-keane-poon:12a}.} The distinct price evolution over such events -- with highly directional and sustained price moves -- poses three direct challenges to the academic community. First, how can one formally model such dynamics? The literature on continuous-time finance has focused extensively on the volatility and jump components of the price process, but as we show, these are not sufficient to explain the observed dynamics. Secondly, how can one identify or test for the presence of such features in the data? And third, can such events be reconciled within the theory of price formation in the presence of market frictions? This paper addresses all these challenges.

Suppose the log-price of a traded asset, $X = (X_{t})_{t \geq 0}$, has the dynamic:
\begin{equation} \label{equation:main-model}
\text{d}X_{t} =  \mu_{t} \text{d}t + \sigma_{t} \text{d}W_{t} + \text{d} J_{t},
\end{equation}
where $\mu_{t}$ is the drift, $\sigma_{t}$ is the volatility, $W_{t}$ a Brownian motion, and $J_{t}$ is a jump process. In a conventional setup with locally bounded coefficients, over a vanishing time interval $\Delta \rightarrow 0$, the drift is $O_{p}( \Delta)$ (as is the jump term) and swamped by a diffusive component of larger order $O_{p}( \sqrt{ \Delta})$. Hence, much of the infill asymptotics is unaffected by the presence of a drift and the theory therefore invariably neglects it. Also, in empirical applications, particularly those relying on intraday data over short horizons, the drift term is generally small and estimates of it are subject to considerable measurement error \citep*[e.g.][]{merton:80a}. Thus, the common recommendation is to ignore it. Yet, to explain such events as those in Figure \ref{figure:flash_crash}, it is hard to see how the drift component can be dismissed. Our starting point is therefore -- what we refer to as -- the drift burst hypothesis, which postulates the existence of short-lived locally explosive trends in the price paths of financial assets. The objective of this paper is to build theoretical and empirical support for the hypothesis, thereby contributing towards a better understanding of financial market dynamics. We show how drift bursts can be embedded in Eq. \eqref{equation:main-model}. Next, we develop a feasible nonparametric identification strategy that enables the online detection of drift bursts from high-frequency data. What is tested here is a drift explosion against the null hypothesis of a jump-diffusion model.\footnote{We assume that volatility is strictly positive, so pure-jump It\^{o} semimartingale processes are ruled out.} The empirical application demonstrates that drift bursts are a stylized fact of the price process.

\begin{figure}[t!]
\begin{center}
\caption{The U.S. S\&P500 equity index and treasury market flash crash. \label{figure:flash_crash}}
\begin{tabular}{cc}
\small{Panel A: S\&P500 equity index.} & \small{Panel B: Treasury market.}\\
\includegraphics[height=0.4\textwidth,width=0.5\textwidth]{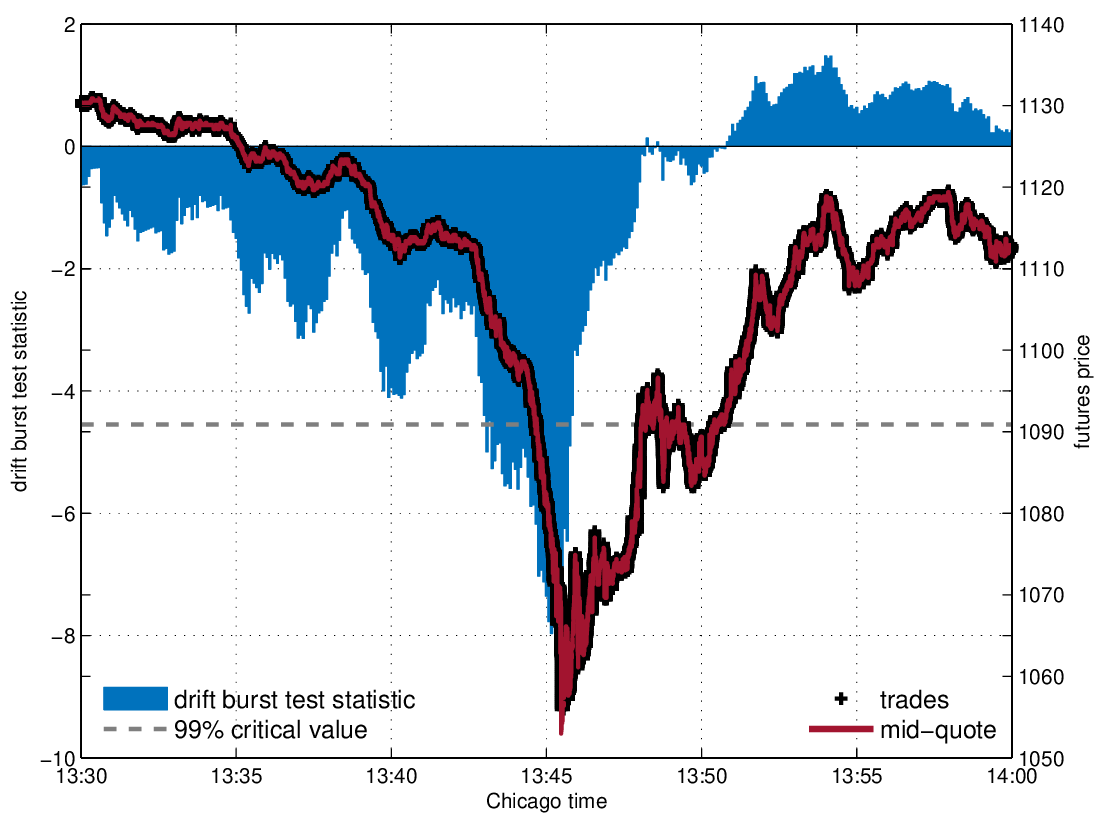} &
\includegraphics[height=0.4\textwidth,width=0.5\textwidth]{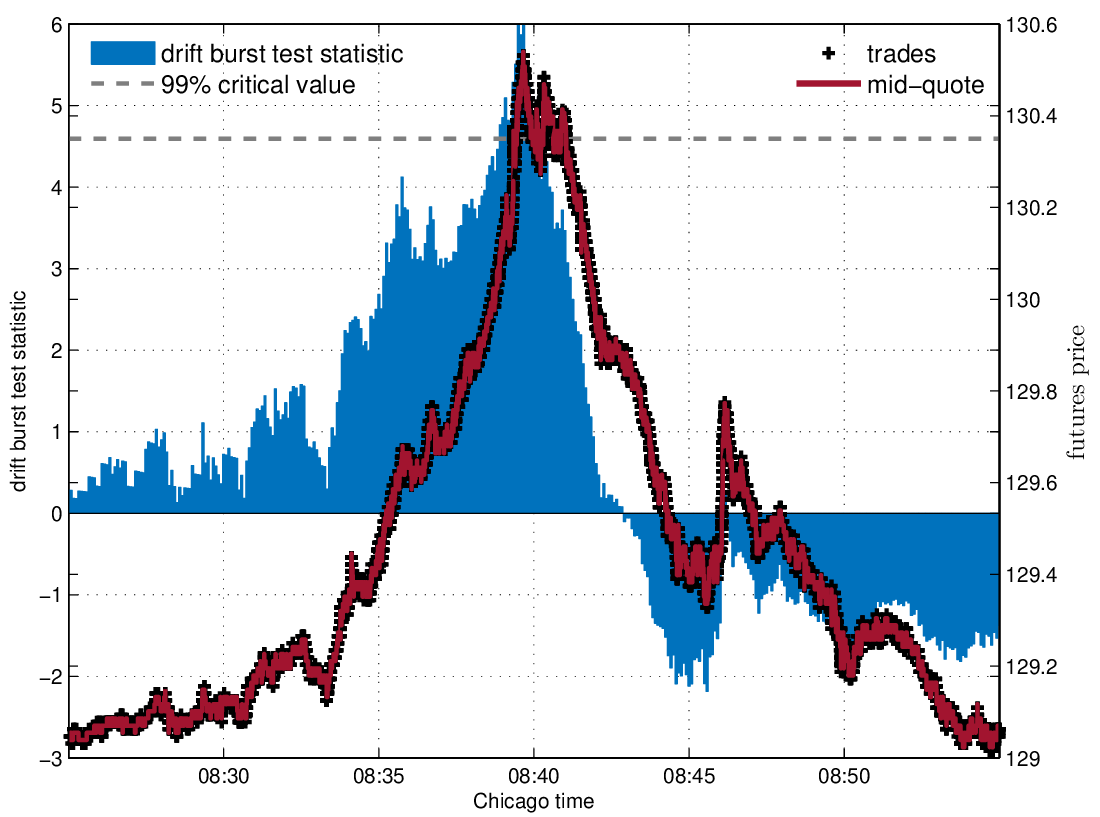} \\
\end{tabular}
\begin{scriptsize}
\parbox{\textwidth}{\emph{Note.} This figure draws the mid-quote and traded price (right axis) of the E-mini S\&P500 (in Panel A) and 10-Year Treasury Note (in Panel B) futures contracts over the flash crash episodes of May 6, 2010 and October 15, 2014. Superimposed is the nonparametric drift burst $t$-statistic (left axis) proposed in this paper.}
\end{scriptsize}
\end{center}
\end{figure}

An exploding drift is unconventional in continuous-time finance, but there are a number of theoretical models of price formation that back the idea. \citet{grossman-miller:88a} examine a collection of risk-averse market makers that provide immediacy in exchange for a positive expected (excess) return $\mu = E(P_1/P_0-1)$ of the form:
\begin{equation} \label{eqn:GM88}
\frac{ \mu}{ \sigma} = \frac{s \gamma}{1+M} \sigma P_{0},
\end{equation}
where $P_{0}$ is the initial price, $P_{1}$ is the price at which the market maker trades, $\sigma$ is the standard deviation of the price move, $s$ is the size of the order that requires execution, $\gamma$ is the risk aversion of market makers, and $M$ is the number of market makers competing for the order. Eq. \eqref{eqn:GM88} illustrates that the drift can dominate the volatility when liquidity demand ($s$) is unusually high or the willingness or capacity of the collective market makers to absorb order flow is impaired (i.e. increased risk aversion $\gamma$ or fewer active market makers $M$). This prediction fits the 2010 equity flash crash in that the price drop appeared to be accompanied by increased risk aversion and a rapid decline in the number of participating market makers. \citet*{cftc-sec:10a} write ``some market makers and other liquidity providers widened their quote spreads, others reduced offered liquidity, and a significant number withdrew completely from the markets.'' The subsequent price reversal observed in Figure \ref{figure:flash_crash} is also predicted by this model as the excess return is only temporary and the long-run price level returns to $P_{0}$. In related work, \citet{campbell-grossman-wang:93a} show that as liquidity demand increases (as measured by trading volume), the price reaction and subsequent reversal grow in size. Alternative mechanisms that can generate these price dynamics include trading frictions as in \citet{huang-wang:09a}, predatory trading and forced liquidation as in \citet{brunnermeier-pedersen:05a,brunnermeier-pedersen:09a}, or agents with tournament-type preferences and an aversion to missing out on trends as in \citet{Johnson:16a}.\footnote{There are also a number of practical mechanisms that cause or amplify price drops and surges, such as margin calls on leveraged positions (i.e. forced liquidation), dynamic hedging of short-gamma positions, stop-loss orders, or technical momentum trading strategies.} While this literature provides valuable insights and hypotheses regarding price dynamics, the testable implications often relate to confounding measures such as the unconditional serial correlation of price returns. Moreover, because the theory is typically cast as a two-period model, it does not readily translate into an econometric identification strategy of the impacted sample paths in continuous-time. In addition to the above, there is an extensive econometrics literature on bubble detection \citep*[see e.g.][]{phillips-yu:11a}, which is potentially related to our paper, but it operates on a much longer time scale than the one analyzed here. We deliver a foundation that increases the depth of the empirical work that can be conducted in these areas.

With the drift burst hypothesis and the corresponding It\^{o} semimartingale price process in place, we develop a nonparametric identification strategy for the online detection of drift burst sample paths from intraday noisy high-frequency data. Our approach aims to establish whether the observed price movement is generated by the drift rather than be the result of volatility. Unsurprisingly, it requires estimation of the local drift and volatility coefficients which is nontrivial for several reasons. First, from \citet*{merton:80a} we know that even when the drift term is a constant, it cannot be estimated consistently over a bounded time interval. Secondly, while infill asymptotics do provide consistent volatility estimates, in practice microstructure effects complicate inference. Building on the \citet*{bandi:02a,kristensen:10a} for coefficient estimation and \citet*{newey-west:87a,andrews:91a,barndorff-nielsen-hansen-lunde-shephard:08a,jacod-li-mykland-podolskij-vetter:09a} for the robustification to microstructure noise, we formulate a nonparametric kernel-based filtering approach that delivers estimates of the local drift and volatility on the basis of which we construct the test statistic. Under the null hypothesis of no drift burst, the test is asymptotically standard normal, but it diverges -- and therefore has power under the alternative -- when the drift explodes sufficiently fast. When calculated sequentially and using potentially overlapping data, the critical values of the test are determined on the basis of extreme value theory, as in \citet{lee-mykland:08a}. A simulation study confirms that the test is capable of identifying drift bursts. Interestingly, applying the test to high-frequency data for the days of the U.S. equity and treasury market flash crashes, displayed in Figure \ref{figure:flash_crash}, we find that they constitute highly significant drift bursts.

The introduction of drift bursts via an exploding drift coefficient provides an essential ingredient, which helps to reconcile a number of phenomena observed in financial markets. The first is the occurrence of flash crashes, where highly directional and sustained price movements are reversed shortly after. While there is a substantial body of research that looks at the 2010 equity market flash crash \citep*[a partial list includes][]{easley-prado-ohara:11a, madhavan:12a, andersen-bondarenko-kyle-obizhaeva:15a, kirilenko-kyle-samadi-tuzun:17a, menkveld-yueshen:18a}, there has been no attempt thus far to analyze these events in a more systematic fashion. Our test procedure lays down a framework that makes this possible. The second is that of ``gradual jumps'' -- in \citet*{barndorff-nielsen-hansen-lunde-shephard:09a} terminology -- where the price converges in a rapid but continuous fashion to a new level. This relates to a puzzle put forward by \citet{christensen-oomen-podolskij:14a}, who find that the total return variation that can be attributed to the jump component is an order of magnitude smaller than previously reported by extensive empirical literature \citep*[see also][]{bajgrowicz-scaillet-treccani:16a}. In particular, they show that jumps identified using data sampled at a five-minute frequency often vanish when viewed at the highest available tick frequency and instead appear as sharp but continuous price movements. They show that spurious detection of jumps at low frequency can be explained by an erratic volatility process. However, because volatility merely leads to wider price dispersion, it fails to reconcile the often steady and directional price evolution over such episodes. We argue that the drift burst hypothesis constitutes a more intuitive and appealing mechanism that can explain the reported over-estimation of the total jump variation.

We undertake an empirical analysis to determine the prevalence of drift bursts in practice and to characterize their basic features. To that end, we employ a comprehensive set of high-quality tick data covering some of the most liquid futures contracts across the equity, fixed income, currency, and commodity markets. We calculate the drift burst test statistic at five-second intervals over a multi-year sample. Our findings demonstrate that drift bursts are an integral part of the price process across all asset classes. Over the full sample we identify more than one thousand highly significant events, or roughly one per week. For most of the drift bursts we detect, the magnitude of the price drop or surge typically ranges between 25 and 200 basis points, with only a handful of extreme moves between 3\% and 8\%. We find that roughly two thirds of the drift bursts are followed by price reversion, which means that many of the identified events resemble (mini) flash crashes that are symptomatic of liquidity shocks. Consistent with the literature on price formation, particularly that in \citet{huang-wang:09a} of endogenous trading imbalances generated by costly market presence, we find that trading volume during a drift burst is highly correlated with the subsequent price reversal. The post-drift burst return can therefore -- as predicted by the theory -- be interpreted as a compensation for supplying immediacy during times of substantial market stress.

The remainder of the paper is organized as follows. Section \ref{section:dbh} introduces the drift burst hypothesis and describes the theoretical framework. Section \ref{section:identification} develops the identification strategy on the basis of noisy high-frequency data. Section \ref{section:simulation} includes an extensive simulation study that demonstrates the properties of the test. The empirical application is found in Section \ref{section:empirical}, while Section \ref{section:conclusion} concludes.

\section{The hypothesis} \label{section:dbh}

The log-price process $X = (X_{t})_{t \geq 0}$, introduced in Eq. \eqref{equation:main-model}, is an It\^{o} semimartingale defined on a filtered probability space $( \Omega, \mathcal{F}, (\mathcal{F}_{t})_{t \geq 0}, \mathcal{P})$. We assume the following about $X$.

\begin{assumption} \label{assumption:model}
$X$ is described by the dynamics in Eq. \eqref{equation:main-model}, where $X_{0}$ is $\mathcal{F}_{0}$-measurable, $\mu = ( \mu_{t})_{t \geq 0}$ is a locally bounded and predictable drift, $\sigma = ( \sigma_{t})_{t \geq 0}$ is an adapted, c\`{a}dl\`ag, locally bounded and almost surely (a.s.) strictly positive volatility, $W = (W_{t})_{t \geq 0}$ is a standard Brownian motion and $J = (J_{t})_{t \geq 0}$ is a pure-jump process.
\end{assumption}

The above is a standard formulation for continuous-time arbitrage-free price processes. It represents our frictionless null, where the price is in a ``normal'' state with non-explosive (locally bounded) coefficients $\mu_{t}$ and $\sigma_{t}$. We do not restrict the model in any essential way, other than by imposing mild regularity conditions on the driving terms, which are listed in Assumption \ref{assumption:model2} in Appendix \ref{appendix:proof}. As such, it encompasses a wide range of specifications and is compatible with time-varying expected returns, stochastic volatility, leverage effects, and jumps (both of finite and infinite activity) in the log-price and volatility. Below, we further add pre-announced jumps, explosive volatility, and additive microstructure noise. Note, however, that pure-jump processes without a Brownian term are not analyzed, because we require $\sigma_{t}$ to be strictly positive in Assumption \ref{assumption:model}.

To introduce the drift burst hypothesis, we momentarily enforce that $X$ has continuous sample paths, i.e.  $\text{d}J_{t} = 0$. The jump process is fully reactivated and driving functions are generalized in our theoretical results.

As $\mu$ and $\sigma$ are locally bounded under Assumption \ref{assumption:model}, it follows that for a fixed time point $\tau_{\textrm{db}}$:
\begin{equation} \label{equation:order_p}
\int_{ \tau_{ \textrm{db}}- \Delta}^{ \tau_{ \textrm{db}}+ \Delta} |\mu_{s}| \text{d}s = O_{p} \left( \Delta \right) \qquad \text{and} \qquad \int_{ \tau_{ \textrm{db}}- \Delta}^{ \tau_{ \textrm{db}}+ \Delta} \sigma_{s} \text{d}W_{s} = O_{p} \left( \sqrt{ \Delta} \right),
\end{equation}
as $\Delta \rightarrow 0$. Thus, the drift is much smaller than the volatility, because $\Delta \ll \sqrt{ \Delta}$. This is consistent with the notion that over short time intervals the main contributor to the log-return is volatility. It is this feature that has led the econometrics literature to largely neglect the drift.

However, the drift can prevail in an alternative model where, in a neighborhood of $\tau_{\textrm{db}}$, $\mu$ is allowed to diverge in such a way that:
\begin{equation} \label{equation:drift-burst}
\int_{ \tau_{ \textrm{db}}- \Delta}^{ \tau_{ \textrm{db}}+ \Delta} |\mu_{s}| \text{d}s = O_{p} \left( \Delta^{ \gamma_{ \mu}} \right),
\end{equation}
with $0 < \gamma_{ \mu} < 1/2$. We refer to an exploding drift coefficient as a \textit{drift burst} and to $\tau_{\textrm{db}}$ as a \textit{drift burst time}. The condition $\gamma_{ \mu} > 0$ ensures continuity of the sample path.

An example of an exploding drift leading to a drift burst is:
\begin{equation} \label{equation:model-db}
\mu_{t}^{\text{db}} = \left\{ \begin{array}{ll}a_1\left(\tau_{ \text{db}}-t\right)^{-\alpha} & t<\tau_{ \text{db}}  \\
a_2\left(t-\tau_{ \text{db}}\right)^{-\alpha} & t>\tau_{ \text{db}}
\end{array}\right.,\end{equation}
with $1/2 < \alpha < 1$ and $a_1$, $a_2$  constants. Setting $\gamma_{ \mu} = 1 - \alpha$, this formulation is consistent with Eq. \eqref{equation:drift-burst}. This specification of the drift can capture flash crashes when $a_1$ and $a_2$ are of the opposite sign (see e.g. Panel A in Figure \ref{figure:drift-burst}). It can also accommodate gradual jumps without reversion, e.g. when $a_2=0$.\footnote{The drift burst specification in Eq. \eqref{equation:model-db} also allows for gradual jumps that start off strong and then decelerate (when $a_1=0$ and $a_2\neq0$), akin to price behavior observed around, for instance, scheduled news announcements, where the first order price impact tends to be realized immediately, but it may then be followed by a gradual continuation as the market interprets and fully incorporates the shock.}

The log-price with explosive coefficients, replacing the drift in Eq. \eqref{equation:main-model} with a drift as in Eq. \eqref{equation:model-db}, is denoted by $\widetilde{X}$. As a process it is still a semimartingale, which is necessary -- but not sufficient -- to exclude arbitrage from the model \citep*[e.g.][]{delbaen-schachermayer:94a}.\footnote{While explosive drift does not impede the semimartingale structure, it can on the other hand negatively affect nonparametric estimation of volatility from high-frequency data, as unveiled by Example 3.4.2 in \citet*{jacod-protter:12a}, because the volatility is completely swamped by the drift. This is consistent with the findings of \citet*{li-todorov-tauchen:15a}, who note that standard OLS estimation of their proposed jump regression is seriously affected by the inclusion of two outliers in the sample. Incidentally, these are the equity flash crash of May 6, 2010 (in Figure \ref{figure:flash_crash}) and the hoax tweet of April 23, 2014 (in Figure \ref{figure:hoax}).} To prevent arbitrage, a further condition (imposed by Girsanov's Theorem) is necessary for the existence of an equivalent martingale measure:
\begin{equation} \label{equation:structural-condition}
\int_{\tau_{\textrm{db}}- \Delta}^{ \tau_{ \textrm{db}} + \Delta} \left( \frac{ \mu_s}{ \sigma_s} \right)^{2} \text{d}s < \infty ~(a.s.),
\end{equation}
which is known as a ``structural condition''.\footnote{The structural condition is sufficient together with the exponential moment condition $E\bigg[ \exp \Big\{- \int_{ \tau_{\textrm{db}}- \Delta}^{ \tau_{ \textrm{db}} + \Delta} \frac{\mu_{s}}{ \sigma_{s}} \text{d}W_{s} - \frac{1}{2} \int_{ \tau_{ \textrm{db}}- \Delta}^{ \tau_{\textrm{db}} + \Delta} \left( \frac{ \mu_{s}}{ \sigma_{s}} \right)^{2} \text{d}s \Big\} \bigg] = 1$, see Theorem 4.2 in \citet*[][]{karatzas-shreve:98a}.} This cannot hold if the drift explodes in the neighborhood of $\tau_{\textrm{db}}$, but the volatility remains bounded, as it allows for a so-called ``free lunch with vanishing risk,'' see Definition 10.6 in \citet*{bjork:03a}. Thus, explosive volatility is a necessary condition for drift bursts in a market free of arbitrage.

We say there is a volatility burst, if
\begin{equation} \label{equation:volatility_burst}
\int_{\tau_{\textrm{db}}- \Delta}^{\tau_{\textrm{db}}+ \Delta} \sigma_{s} \text{d}W_{s} = O_{p} \left( \Delta^{ \gamma_{ \sigma}} \right),
\end{equation}
with $0<\gamma_{\sigma}<1/2$. As above, an example of a bursting volatility is:
\begin{equation} \label{equation:model-vb}
\sigma_{t}^{\text{vb}} =  b \left|\tau_{ \text{db}}-t\right|^{-\beta}
\end{equation}
with $0 < \beta < 1/2$ and $b>0$. We here restrict $\beta$ to ensure that $\int_{\tau_{\textrm{db}}- \Delta}^{ \tau_{ \textrm{db}} + \Delta}  \left( \sigma_{s}^{ \text{vb}} \right)^{2} \text{d}s < \infty$, so that the stochastic integral in Eq. \eqref{equation:volatility_burst} can be defined. In this case,
\begin{equation}
\int_{\tau_{\textrm{db}}- \Delta}^{\tau_{\textrm{db}}+ \Delta}  \sigma_{s}^{\text{vb}} \text{d}W_{s} = O_{p} \left( \Delta^{1/2-\beta} \right),
\end{equation}
so that $\gamma_{ \sigma} = 1/2- \beta$.

We can thus introduce a ``canonical'' alternative model:
\begin{equation}\label{equation:canonical}
\text{d}\widetilde{X}_{t} =  \mu_{t}^{\text{db}} \text{d}t + \sigma_{t}^{\text{vb}}\text{d}W_{t},
\end{equation}
for which $\mu_{t}^{\text{db}} / \sigma_{t}^{\text{vb}} \rightarrow \infty$ as $t \rightarrow \tau_{ \text{db}}$ if $\alpha > \beta$. The structural condition in Eq. \eqref{equation:structural-condition} is readily fulfilled when $\alpha - \beta <  1/2$. Thus, this example shows that the drift coefficient can explode locally (even after normalizing by an exploding volatility) either preserving absence of arbitrage (when $0< \alpha - \beta < 1/2$), or allowing local arbitrage opportunities (when $\alpha - \beta> 1/2$). However, as our econometric analysis in Theorem \ref{theorem:alternative} reveals, we are only able to detect an explosive drift with $\alpha - \beta > 1/2$, that is when a short-lived absence of arbitrage is found. The intuition for this result is that under the condition in Eq. \eqref{equation:structural-condition} we can always switch to an equivalent probability measure with no drift and unaltered volatility. Hence, there is no way to detect an explosive drift in the arbitrage-free setting.

In Panel B of Figure \ref{figure:drift-burst}, we show simulated sample paths of the associated log-price in this framework. A flash crash is generated from Eq. \eqref{equation:canonical}. We set the drift burst rate at $\alpha = 0.65$ and $\alpha = 0.75$ with a volatility burst parameter $\beta = 0.2$. While the former parametrization preserves absence of arbitrage (since $\alpha- \beta<1/2$), the latter does not. Visually, however, the price dynamics in both scenarios is pronounced and qualitatively identical.

The notion that volatility can burst during market turbulence or dislocation is uncontroversial. \citet*{kirilenko-kyle-samadi-tuzun:17a} and \citet*{andersen-bondarenko-kyle-obizhaeva:15a} report elevated levels of volatility during the equity flash crash \citep*[see also][]{bates:18a}. However, as illustrated by Figure \ref{figure:drift-burst} and proved formally in Theorem \ref{theorem:alternative}, a volatility burst in itself is not sufficient to capture the gradual jump or flash crash dynamics regularly observed in practice. The introduction of a separate drift burst component as we propose in this paper is a convenient and effective tool to reconcile continuous-time semimartingale theory with such empirical observations.

\begin{figure}[t!]
\begin{center}
\caption{Illustration of a log-price with a drift burst. \label{figure:drift-burst}}
\begin{tabular}{cc}
\small{Panel A: Drift coefficient.} & \small{Panel B: Simulated log-return.} \\
\includegraphics[height=0.4\textwidth,width=0.48\textwidth]{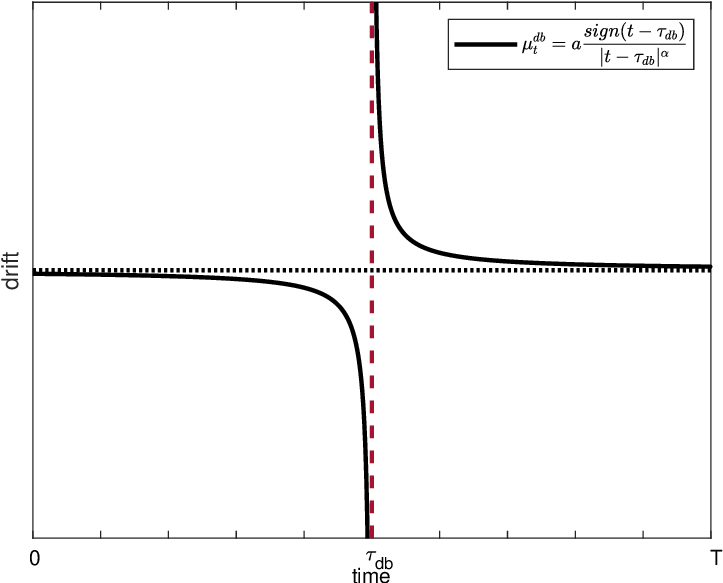} &
\includegraphics[height=0.4\textwidth,width=0.48\textwidth]{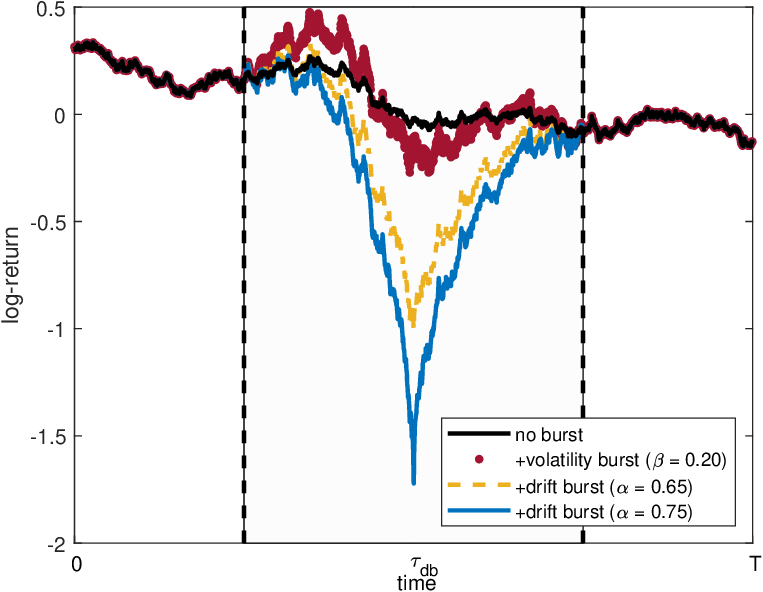} \\
\end{tabular}
\begin{scriptsize}
\parbox{\textwidth}{\emph{Note.} In Panel A, the drift coefficient is shown against time, while Panel B shows the evolution of a simulated log-price with a burst in: (i) nothing, (ii) volatility, and (iii) drift and volatility. The latter is based on Eq. \eqref{equation:model-db} and \eqref{equation:model-vb} with $-a_{1} = a_{2} = 3$, $b = 0.15$, $\alpha = 0.65$ or 0.75 and $\beta = 0.2$.}
\end{scriptsize}
\end{center}
\end{figure}

\section{Identification} \label{section:identification}

We now develop a nonparametric approach to detect drift bursts in real data. We propose a test statistic that exploits the message of Eq. \eqref{equation:drift-burst}, namely if there is a drift burst in the price process at time $\tau_{\text{db}}$, the drift can prevail over volatility and locally dominate log-returns in the vicinity of $\tau_{\text{db}}$. The test statistic thus compares a suitably rescaled estimate of $\mu_{t} / \sigma_{t}$ based on high-frequency data in a neighborhood of $t$. Later, we prove that our ``signal-to-noise'' measure uncovers drift bursts if they are sufficiently strong.

We extend existing work on nonparametric kernel-based estimation of the coefficients of diffusion processes to estimate $\mu_{t}$ and $\sigma_{t}$ \citep[e.g.][]{bandi:02a,kristensen:10a}. We assume that $X$ is recorded at times $0 = t_{0} < t_{1} < \ldots < t_{n} = T$, where $\Delta_{i,n} = t_{i} - t_{i-1}$ is the time gap between observations and $T$ is fixed. The sampling times are potentially irregular, as formalized in Assumption \ref{assumption:times} in Appendix \ref{appendix:proof}. The discretely sampled log-return over $[t_{i-1},t_{i}]$ is $\Delta_{i}^{n} X = X_{t_{i}} - X_{t_{i-1}}$, and we define:
\begin{equation}
\label{equation:drift_estimate_X}
\hat{ \mu}_{t}^{n} = \frac{1}{h_{n}} \sum_{i = 1}^{n} K \left( \frac{t_{i-1}-t}{h_{n}} \right) \Delta_{i}^{n} X, \quad \text{for }
t \in (0,T],
\end{equation}
where $h_{n}$ is the bandwidth of the mean estimator and $K$ is a kernel. We also set:
\begin{equation} \label{equation:variance_estimate_X}
\hat{ \sigma}_{t}^{n} = \left( \frac{1}{h_{n}'} \sum_{i = 1}^{n} K \left( \frac{t_{i-1} - t}{h_{n}'} \right) \left( \Delta_{i}^{n} X \right)^{2} \right)^{1/2}, \quad \text{for } t \in (0,T],
\end{equation}
where $h_{n}'$ is the bandwidth of the volatility estimator.

The bandwidths $h_{n}$, $h_{n}'$ and the kernel $K$ are assumed to fulfill some weak regularity conditions that are succinctly listed in Assumption \ref{assumption:kernel} in Appendix \ref{appendix:proof}.

In absence of a drift burst, the proof of Theorem \ref{theorem:null} stated below shows that, as $n \rightarrow \infty$:
\begin{equation}
\label{equation:null_distribution}
\sqrt{h_{n}} \left( \hat{ \mu}_{t}^{n} - \mu_{t-}^{*} \right) \overset{d}{ \rightarrow}
N \left(0, K_{2} \sigma_{t-}^{2} \right),
\end{equation}
where $\mu_{t}^{*} = \mu_{t} + \int_{ \mathbb{R}} \delta(t,x) I_{ \left\{| \delta(t,x)| > 1 \right\}} \lambda( \text{d}x)$, $K_{2}$ is a kernel-dependent constant, and the above convergence is stable in law.

As shown by Eq. \eqref{equation:null_distribution}, $\hat{ \mu}_{t}^{n}$ is asymptotically unbiased for the (jump compensated) drift term. It is inconsistent, because the variance explodes as $h_{n} \rightarrow 0$. This appears to rule out drift burst detection via $\hat{ \mu}_{t}^{n}$. On the other hand, if we rescale the left-hand side of Eq. \eqref{equation:null_distribution} with $\hat{ \sigma}_{t}^{n}\sqrt{K_{2}}$,  it appears the right-hand side has a standard normal distribution.\footnote{Lemma \ref{lemma:spot-volatility} in Appendix \ref{appendix:proof} shows that $\hat{ \sigma}_{t}^{n}$ is a consistent estimator of $\sigma_{t-}$.} It is this insight that facilitates the construction of a test statistic that can identify drift bursts, as we prove in Theorem \ref{theorem:null}, which describes the behavior of the $t$-statistic under the null of locally bounded coefficients, whereas Theorem \ref{theorem:alternative} does it under the alternative of explosive drift and volatility.

The test statistic is defined as:
\begin{equation} \label{equation:t-statistic}
T_{t}^{n} = \sqrt{ \frac{h_{n}}{K_{2}}} \frac{\hat{ \mu}_{t}^{n}}{
\hat{ \sigma}^{n}_{t}}.
\end{equation}
$T_{t}^{n}$ has an intuitive interpretation with the indicator kernel. Here, it is the ratio of the drift part to the volatility part of the log-return over the interval $[t-h_{n},t]$ (as $h_{n}, h_{n}' \rightarrow 0$, this holds for any valid kernel).

\begin{theorem}
\label{theorem:null} Assume that $X$ is a semimartingale as defined by Eq. \eqref{equation:main-model}, and that Assumption \ref{assumption:model} and \ref{assumption:model2} -- \ref{assumption:kernel} are fulfilled. As $n \rightarrow \infty$, it holds that:
\begin{equation}
\label{equation:H0}
T_{t}^{n} \overset{d}{ \rightarrow} N(0,1).
\end{equation}
\end{theorem}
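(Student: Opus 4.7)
The strategy is to establish the intermediate stable convergence $\sqrt{h_n}(\hat\mu_t^n - \mu_{t-}^*) \overset{d}{\rightarrow} N(0, K_2 \sigma_{t-}^2)$ stated in Eq.~\eqref{equation:null_distribution}, then divide by $\hat\sigma_t^n \overset{p}{\rightarrow} \sigma_{t-}$ (Lemma~\ref{lemma:spot-volatility}), and conclude via the stable-convergence version of Slutsky's theorem. The residual centering term $\sqrt{h_n/K_2}\,\mu_{t-}^*/\hat\sigma_t^n$ is asymptotically negligible in probability, since $\mu^*$ is locally bounded, $\hat\sigma_t^n$ is almost surely bounded away from zero by positivity of $\sigma$, and $h_n \to 0$. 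Concretely, I would decompose $\hat\mu_t^n = A_t^n + B_t^n + C_t^n$ by substituting Eq.~\eqref{equation:main-model} into Eq.~\eqref{equation:drift_estimate_X}, where $A_t^n$, $B_t^n$, $C_t^n$ are the kernel-weighted aggregations of the drift, diffusive, and jump increments respectively.

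The term $A_t^n$ is handled by local boundedness of $\mu$ combined with $\int K(u)\,du = 1$: a standard Riemann-sum argument yields $A_t^n = \mu_{t-} + o_p(1)$, so that $\sqrt{h_n}\,A_t^n \to 0$. The heart of the proof is the stable central limit theorem for $\sqrt{h_n}\,B_t^n$. Here I would invoke a martingale CLT for triangular arrays in the spirit of \citet{jacod-protter:12a}, first replacing $\sigma_s$ on the shrinking window $[t-h_n,t]$ by the left limit $\sigma_{t-}$ (justified by c\`adl\`ag local boundedness of $\sigma$ together with $h_n \to 0$), and then computing the predictable quadratic variation of the resulting array. Under the sampling regularity of Assumption~\ref{assumption:times}, the Riemann sum $h_n^{-1}\sum_{i} K^2((t_{i-1}-t)/h_n)\Delta_{i,n}$ converges to $\int K^2(u)\,du = K_2$, so that $\sqrt{h_n}\,B_t^n$ has asymptotic mixed-normal variance $K_2 \sigma_{t-}^2$; stability of the limit is inherited from the $\mathcal{F}_{t-}$-measurability of $\sigma_{t-}$, and the Lyapunov condition is immediate from Gaussian tails of the Brownian increments.

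The jump term $C_t^n$ splits via the canonical truncation of the jump measure. Large jumps ($|\delta(s,x)|>1$) occur in the window $[t-h_n,t]$ with probability $O(h_n)$; their predictable compensator $\int \delta(s,x)\,I_{\{|\delta(s,x)|>1\}}\lambda(dx)\,ds$ shifts the effective drift from $\mu$ to $\mu^*$, which accounts exactly for the centering in Eq.~\eqref{equation:null_distribution}. The compensated small-jump martingale contributes predictable quadratic variation of order $h_n$ after the $1/h_n$ scaling, so its contribution to $\sqrt{h_n}\,\hat\mu_t^n$ is $o_p(1)$. Combining the three pieces yields Eq.~\eqref{equation:null_distribution}, and the Slutsky step then delivers $T_t^n \overset{d}{\rightarrow} N(0,1)$, which is Eq.~\eqref{equation:H0}.

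The principal obstacle I expect is the stable CLT for $B_t^n$: one must simultaneously verify a Lindeberg-type condition under the irregular sampling scheme of Assumption~\ref{assumption:times}, justify uniform replacement of $\sigma_s$ by $\sigma_{t-}$ across the kernel window, and ensure that the compensated small-jump martingale does not contaminate the Gaussian limit when jumps have infinite activity. Once this technical core is in place, the drift and large-jump contributions reduce to routine Riemann-sum and compensator bookkeeping.
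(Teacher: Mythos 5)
Your overall architecture matches the paper's proof: the same decomposition of $\hat{\mu}_{t}^{n}$ into drift, diffusive and jump pieces, the same stable CLT for the Brownian term via the four conditions of Theorem 2.2.14 in \citet{jacod-protter:12a} (with the variance Riemann sum converging to $K_{2}\sigma_{t-}^{2}$), the same observation that the large-jump compensator produces the centering at $\mu_{t-}^{*}$ while $\sqrt{h_{n}}\,\mu_{t-}^{*}/\hat{\sigma}_{t}^{n}$ is negligible, and the same concluding step combining stable convergence with $\hat{\sigma}_{t}^{n}\overset{p}{\rightarrow}\sigma_{t-}$.

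There is, however, one genuine gap: your treatment of the compensated small-jump martingale. You claim its contribution to $\sqrt{h_{n}}\,\hat{\mu}_{t}^{n}$ is $o_{p}(1)$ because its ``predictable quadratic variation is of order $h_{n}$ after the $1/h_{n}$ scaling.'' Check the arithmetic: the relevant object is $\frac{1}{\sqrt{h_{n}}}\sum_{i}K\bigl(\frac{t_{i-1}-t}{h_{n}}\bigr)\int_{t_{i-1}}^{t_{i}}\int_{\mathbb{R}}\delta\, I_{\{|\delta|\leq 1\}}(\nu-\tilde{\nu})$, whose predictable quadratic variation is $\frac{1}{h_{n}}\sum_{i}K^{2}\bigl(\frac{t_{i-1}-t}{h_{n}}\bigr)\int_{t_{i-1}}^{t_{i}}\int_{\mathbb{R}}\delta^{2}I_{\{|\delta|\leq 1\}}\lambda(\mathrm{d}x)\,\mathrm{d}s \rightarrow K_{2}\int_{\mathbb{R}}\delta^{2}(t,x)I_{\{|\delta(t,x)|\leq 1\}}\lambda(\mathrm{d}x)$. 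That limit is a positive constant in general, i.e.\ the small-jump martingale sits at \emph{exactly the same order} as the Brownian term, and no $L^{2}$/Chebyshev argument can kill it. The paper circumvents this by bounding the small-jump piece in $L^{1}$ rather than $L^{2}$: Lemma 2.1.5 of \citet{jacod-protter:12a} with $p=1$ gives $E\big[|\Delta_{i}^{n}X_{1}''(\kappa)|\big]\leq C\Delta_{i,n}\int_{\{x:\overline{\Gamma}(x)\leq\kappa\}}\overline{\Gamma}(x)\lambda(\mathrm{d}x)$, which after the $1/\sqrt{h_{n}}$ scaling yields $O(\sqrt{h_{n}})$ -- but only because Assumption \ref{assumption:model2}(i) imposes $\int_{\{x:\Gamma_{n}(x)\leq\kappa\}}\Gamma_{n}(x)\lambda(\mathrm{d}x)<\infty$, a finite-variation condition on the small jumps. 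Your argument neither invokes this assumption nor supplies any substitute for it, so the step as written fails precisely in the infinite-activity case you flag as a concern at the end. (A secondary, more cosmetic point: the replacement of $\sigma_{s}$ by $\sigma_{t-}$ in the variance computation is justified in the paper via the $L^{2}$-Lipschitz condition in Eq.~\eqref{equation:lipschitz}, not merely via the c\`adl\`ag property.)
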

\begin{proof}
See Appendix \ref{appendix:proof}. \qed
\end{proof}
Theorem \ref{theorem:null} shows that, in absence of a drift burst, the $t$-statistic in Eq. \eqref{equation:t-statistic} has a limiting standard normal distribution.\footnote{While this statement appears to follow trivially from Eq. \eqref{equation:null_distribution} -- i.e., via application of Slutsky's Theorem -- this is not true. In general, we can only use Eq. \eqref{equation:null_distribution} to deduce Eq. \eqref{equation:H0}, if $\sigma_{t-}$ is a constant. In our paper, where $\sigma_{t-}$ is a random variable, the definition of convergence in distribution does not support such a conclusion. We therefore prove in Appendix \ref{appendix:proof} that the convergence in Eq. \eqref{equation:null_distribution} is in law stably, which is a stronger form of convergence that helps to recover this feature \citep*[the concept is explained in e.g.][]{jacod-protter:12a}. Moreover, we allow for leverage effects and jumps. In both these directions, Theorem \ref{theorem:null} extends \citet{kristensen:10a}.} We note that under the null, $T_{t}^{n}$ does not depend on $\mu_{t-}^{*}$ and $\sigma_{t-}$ for large $n$. Thus, although it is not possible to consistently estimate $\mu_{t-}^{*}$, we can exploit its asymptotic distribution to form a test of the drift burst hypothesis, since a large $t$-statistic signals that the realized log-return is mostly induced by drift.\footnote{\citet*{todorov-tauchen:14a} build a related type of self-normalized statistic to test whether the price process is jump-diffusion or of pure-jump type. Their procedure has power against the pure-jump alternative. It is therefore important in future work to establish the properties of our procedure in a pure-jump setting to ensure it has size control in such models.}

The drift burst alternative is formalized by an exploding $\mu_{t}$ term. We note that the alternative is broad enough to allow $\sigma_{t}$ to co-explode with the drift.

\begin{theorem} \label{theorem:alternative} Assume that $\widetilde{X}$ is of the form:
\begin{equation}
\label{equation:alternative}
\text{\upshape{d}} \widetilde{X}_{t} =  \text{\upshape{d}} X_{t} +  \frac{c_{1,t}}{ (\tau_{\text{\upshape{db}}} - t)^{ \alpha}} \text{\upshape{d}}t + \frac{c_{2,t}}{ (\tau_{\text{\upshape{db}}} - t)^{ \beta}} \text{\upshape{d}}W_{t},
\end{equation}
where $\tau_{\text{\upshape{db}}}>0$, $\text{\upshape{d}}X_{t}$ is the model with locally bounded coefficients in Eq. \eqref{equation:main-model} for which the conditions of Theorem \ref{theorem:null} hold, $c_{1,t}$ and $c_{2,t}$ are adapted stochastic processes adhering to identical conditions as for $\mu_{t}$ and $\sigma_{t}$ listed in Assumption \ref{assumption:model2}. Moreover, $\alpha$ and $\beta$ are constants such that $0 \leq \beta < 1/2$ and $0 < \alpha < 1$.
Then, as $n \rightarrow \infty$, it holds that:
\begin{equation}
T_{ \tau_{\text{\upshape{db}}}}^{n}\left\{
\begin{array}{lc}
\overset{a.s.}{ \rightarrow} \pm \infty, & \text{if} \quad \alpha - \beta > 1/2, \\[0.10cm]
\overset{d}{\rightarrow} c_{K, \beta} N(0,1) + d_{K, \beta, c_{1}, c_{2}}, & \text{if} \quad \alpha - \beta = 1/2, \\[0.10cm]
\overset{d}{\rightarrow} c_{K, \beta} N(0,1), & \text{if} \quad \alpha - \beta < 1/2,
\end{array}
\right.
\end{equation}
where
\begin{equation}
c_{K, \beta} = \sqrt{ \frac{ \int_{ \mathbb{R}}K^{2}(x)|x|^{-2\beta} \text{\upshape{d}}x}{K_{2} \int_{\mathbb{R}}K(x)|x|^{-2 \beta} \text{\upshape{d}}x}} \quad \text{and} \quad d_{K, \beta,c_{1},c_{2}} = \frac{c_{1, \tau_{\text{\upshape{db}}}}}{c_{2, \tau_{\text{\upshape{db}}}}} \frac{ \int_{ \mathbb{R}}K^2(x)|x|^{- \beta-1/2} \text{\upshape{d}}x}{ \sqrt{ K_{2} \int_{ \mathbb{R}}K(x)|x|^{-2 \beta} \text{\upshape{d}}x}}.
\end{equation}
\end{theorem}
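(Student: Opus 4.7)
The plan is to split $\hat{\mu}^n_{\tau_{\text{db}}}$ and $\hat{\sigma}^n_{\tau_{\text{db}}}$ into the contributions of the base jump-diffusion $X$, of the drift-burst term, and of the volatility-burst Brownian integral, then to track their orders in the bandwidth $h_n$ (assumed of the same order as $h'_n$ under Assumption~\ref{assumption:kernel}). Writing each discrete increment as $\Delta^n_i\widetilde{X}=\Delta^n_iX+R^{(1)}_i+R^{(2)}_i$ with $R^{(1)}_i=\int_{t_{i-1}}^{t_i}c_{1,s}|\tau_{\text{db}}-s|^{-\alpha}\,ds$ and $R^{(2)}_i=\int_{t_{i-1}}^{t_i}c_{2,s}|\tau_{\text{db}}-s|^{-\beta}\,dW_s$, linearity yields $\hat{\mu}^n_{\tau_{\text{db}}}=A_n+B_n+C_n$ and an analogous (but quadratic, with cross terms) decomposition of $(\hat{\sigma}^n_{\tau_{\text{db}}})^2$.

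For $A_n$, Theorem~\ref{theorem:null} directly gives $A_n=O_p(h_n^{-1/2})$. For the deterministic $B_n$, approximating the Riemann sum by an integral and substituting $u=(s-\tau_{\text{db}})/h_n$ yields $B_n=h_n^{-\alpha}c_{1,\tau_{\text{db}}}\int_{\mathbb{R}}K(u)|u|^{-\alpha}\,du\,(1+o_p(1))$, where continuity of $c_{1,\cdot}$ at $\tau_{\text{db}}$ (inherited from Assumption~\ref{assumption:model2}) is used to pull $c_{1,\tau_{\text{db}}}$ out of the integral. For $C_n$, the It\^{o} isometry combined with the same change of variable gives $\mathrm{Var}(C_n)\sim h_n^{-1-2\beta}c_{2,\tau_{\text{db}}}^2\int_{\mathbb{R}}K^2(u)|u|^{-2\beta}\,du$ (finite since $\beta<1/2$), and a stable CLT for Brownian integrals, analogous to the proof of Theorem~\ref{theorem:null} but with a singular integrand, delivers asymptotic mixed Gaussianity. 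For $(\hat{\sigma}^n_{\tau_{\text{db}}})^2$, the squared vol-burst piece dominates all other contributions when $\beta>0$ and produces $(\hat{\sigma}^n_{\tau_{\text{db}}})^2=h_n^{-2\beta}c_{2,\tau_{\text{db}}}^2\int_{\mathbb{R}}K(u)|u|^{-2\beta}\,du\,(1+o_p(1))$; the case $\beta=0$ is a trivial extension in which the base volatility and $c^2_{2,\tau_{\text{db}}}$ add.

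Assembling $T^n_{\tau_{\text{db}}}=\sqrt{h_n/K_2}(A_n+B_n+C_n)/\hat{\sigma}^n_{\tau_{\text{db}}}$: the scaled $A_n$ piece vanishes at rate $h_n^{1/2+\beta}$; the scaled $C_n$ piece is of order $h_n^{0}$ and, by the ratio of kernel integrals, converges stably to $c_{K,\beta}N(0,1)$ with exactly the constant claimed; the scaled $B_n$ piece is of order $h_n^{1/2-\alpha+\beta}$, multiplied by $c_{1,\tau_{\text{db}}}/c_{2,\tau_{\text{db}}}$ times the appropriate kernel ratio, which at $\alpha-\beta=1/2$ collapses to the stated $d_{K,\beta,c_1,c_2}$. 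The trichotomy of the theorem then corresponds exactly to the sign of $1/2-\alpha+\beta$: diverging (almost surely, with sign of $c_{1,\tau_{\text{db}}}$) when $\alpha-\beta>1/2$, contributing the non-trivial shift $d_{K,\beta,c_1,c_2}$ when $\alpha-\beta=1/2$, and asymptotically negligible when $\alpha-\beta<1/2$.

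The main obstacle is the stochastic analysis of $C_n$: one needs a stable CLT for a Brownian integral whose integrand is singular at $\tau_{\text{db}}$. A natural approach is to truncate the domain away from $\tau_{\text{db}}$ on a scale $\varepsilon h_n$, verify the Lindeberg condition on the truncated part via the change of variable above, and dominate the residual by a second-moment bound that uses integrability of $|u|^{-2\beta}$ near zero (ensured by $\beta<1/2$). Subsidiary difficulties include (i) showing that the drift-burst quadratic term $\sum_i K\cdot(R^{(1)}_i)^2\lesssim\sum_i K\cdot\Delta^2_{i,n}|\tau_{\text{db}}-t_i|^{-2\alpha}$ and the cross term $\sum_i K\cdot R^{(1)}_iR^{(2)}_i$ in $(\hat{\sigma}^n_{\tau_{\text{db}}})^2$ are negligible at rate $o_p(h_n^{-2\beta})$ under the sampling scheme of Assumption~\ref{assumption:times}, and (ii) upgrading the divergence in the first case from in-probability to almost sure via a subsequence Borel--Cantelli argument that exploits the deterministic character of the leading $h_n^{1/2-\alpha+\beta}$ term given $c_{1,\tau_{\text{db}}}$.
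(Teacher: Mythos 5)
Your proposal follows essentially the same route as the paper: the identical decomposition of the numerator into the base term $O_p(h_n^{-1/2})$, the deterministic drift-burst term of order $h_n^{-\alpha}$ with constant $c_{1,\tau_{\text{db}}}m_K(-\alpha)$, and the volatility-burst martingale of order $h_n^{-1/2-\beta}$ handled by a stable CLT (the paper also uses Theorem 2.2.14 of Jacod--Protter), together with the same treatment of $(\hat{\sigma}^n_{\tau_{\text{db}}})^2$ via negligibility of cross terms and an It\^{o}-lemma split of the squared volatility-burst increments. The only point where the paper is more careful is in the divergent regime $\alpha-\beta>1/2$: there the squared drift-burst contribution to the denominator, of order $\Delta_n^{2-2\alpha}/h_n$, can in fact dominate the $h_n^{-2\beta}$ volatility-burst term (contrary to your blanket domination claim), and the paper checks that the $t$-statistic still diverges, at rate $(h_n/\Delta_n)^{1-\alpha}$, in that sub-case.
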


\begin{proof}
See Appendix \ref{appendix:proof}. \qed
\end{proof}
Theorem \ref{theorem:alternative} implies the explosion of the $t$-statistic under the alternative, when the drift term explodes fast enough relative to the volatility (i.e. $\alpha- \beta> 1/2$). The condition $\alpha- \beta> 1/2$ is equivalent to require that, in a neighborhood of $\tau_{ \text{\upshape{db}}}$, the log-return is dominated by drift. In a frictionless economy this allows for a short-lived arbitrage around $\tau_{ \text{\upshape{db}}}$. In practice, to make the test statistic large, it is of course enough that the mean log-return is significantly nonzero over a small time interval. Our simulations confirm that the condition  $\alpha- \beta>1/2$ is not needed to achieve power in small samples.\footnote{As detailed in Appendix \ref{appendix:parametric}, we can estimate $\alpha$ and $\beta$ using a parametric maximum likelihood approach based on Eq. \eqref{equation:canonical}. The sample averages across detected events in the empirical high-frequency data analyzed in Section \ref{section:empirical} are $\bar{ \hat{ \alpha}}_{ \text{ML}} = 0.6250$ and $\bar{ \hat{ \beta}}_{ \text{ML}} = 0.1401$. Hence, assuming that bursts are of the form in Eq. \eqref{equation:canonical}, the process is found to be right at the margin of being arbitrage-free.}

An implication of Theorem \ref{theorem:alternative} is that $T_{ \tau_{\text{\upshape{db}}}}^{n}$ does not explode because of a volatility burst, even if it occurs without a drift burst. With $\alpha - \beta < 1/2$ the $t$-statistic is normally distributed in the limit with a controllable asymptotic variance that depends on the kernel shape and $\beta$. With a left-sided exponential kernel $K(x) = \exp(-|x|)$, for $x \leq 0$, adopted below, $c_{K, \beta} =2^{ \beta} \leq \sqrt{2}$. At the border, $\alpha - \beta=1/2$, the $t$-statistic is bounded but can take arbitrarily large values as $\alpha \rightarrow 1$ and $\beta \rightarrow 1/2$, because $\lim_{ \beta \rightarrow 1/2} d_{K, \beta, c_{1}, c_{2}} = \infty$.\footnote{\citet*{hengartner-linton:96a} study nonparametric regression estimation at design poles and zeros. A nearly identical set of ``bias'' and ``variance'' kernel constants to $d_{K, \beta, c_{1}, c_{2}}$ and $c_{K, \beta}$ also appear in the asymptotic distribution of their estimator.}

Thus, the large values of the test statistic frequently observed in the empirical application are neither explained by volatility bursts nor by jumps. We conclude that -- in our setting -- a significant $t$-statistic can only be induced by a drift explosion.\footnote{A likelihood ratio test based on the model in Eq. \eqref{equation:canonical}, also reported in Appendix \ref{appendix:parametric}, strongly rejects $\mathcal{H}_{0}^{ \prime} : \beta = 0$ against $\mathcal{H}_{1}^{ \prime} : \beta > 0$ during a drift burst, yielding empirical support for the volatility co-exploding with the drift.} These theoretical statements are also corroborated by our simulation analysis.

\begin{remark} \label{remark:alternative}
In this paper, we are testing against drift explosions. However, other mechanisms can possibly cause the $t$-statistic to deviate from the asymptotic normal distribution. First, pure-jump processes without a Brownian component can be a potential alternative \citep*[e.g.][]{todorov-tauchen:14a}. Second, it is possible to work with the persistence of jumps in a Hawkes-type process to construct a ``jump burst.'' Third, one can develop a market microstructure model with endogenous noise that depends on the fundamental value of the asset \citep*[e.g.][]{li-linton:20a}. These ideas to describe the departures from the jump-diffusion It\^{o} semimartingale framework that our paper reveals, via the drift burst hypothesis, constitute promising avenues for future research. In the end, it is the economic application that dictates how we label such deviations.
\end{remark}

\subsection{Inference via the maximum statistic}

The asymptotic theory asserts that $T_{t}^{n}$ is standard normal in absence of a drift burst, whereas it grows arbitrarily large under the alternative, as we approach a drift burst time. It suggests that a viable detection strategy is to compute the $t$-statistic progressively over time and reject the null when $|T_{t}^{n}|$ gets significantly large. This leads to a multiple testing problem, which can cause size distortions, if the quantile function of the standard normal distribution is used to determine a critical value of $T_{t}^{n}$.

To control the family-wise error rate, we evaluate a standardized version of the maximum of the absolute value of our $t$-statistic using extreme value theory.\footnote{\citet*{bajgrowicz-scaillet-treccani:16a} and \citet*{lee-mykland:08a} also exploit these ideas in the high-frequency framework to devise an unbiased jump-detection test, while in a related context \citet*{andersen-bollerslev-dobrev:07a} propose a Bonferroni correction. The latter was another viable tool to avoid systematic overrejection of the null hypothesis.} We compute $\left(T_{t_{i}^{*}}^{n} \right)_{i=1}^{m}$ at $m$ equispaced time points $t_{i}^{*} \in (0,T]$, where $T$ is fixed. We set:
\begin{equation} \label{equation:maxabsTn}
T_{m}^{*} = \max_{t_{i}^{*}} |T_{t_{i}^{*}}^{n}|, \quad i = 1, \ldots, m.
\end{equation}
The crucial point is that, in addition to $T_{t_{i}^{*}}^{n} \overset{d}{ \rightarrow} N(0,1)$ under the null, the $T_{t_{i}^{*}}^{n}$'s are also independent -- up to error terms that are asymptotically negligible -- if $m$ does not grow too fast. It follows that a normalized version of $T_{m}^{*}$ has a limiting Gumbel distribution, as $m \rightarrow \infty$ at a suitable rate.

\begin{theorem} \label{theorem:gumbel-distribution}
Assume that the conditions of Theorem \ref{theorem:null} hold. Then, if $n \rightarrow \infty$, $m \rightarrow \infty$ such that $mh_{n} \rightarrow 0$ and $\sqrt{ \log(m)} \bigg( \frac{m}{ \sqrt{nh_{n}}}+(mh_{n})^{-B} \sqrt{ \log(m)} + m^{- \Gamma/2} \sqrt{ \log(m)} \bigg) \rightarrow 0$, it further holds that:
\begin{equation}
\label{equation:gumbel}
(T_{m}^{*} - b_{m})a_{m} \overset{d}{ \rightarrow} \xi,
\end{equation}
where
\begin{equation}
a_{m} = \sqrt{2 \log(m)}, \qquad b_{m} = a_{m} - \frac{1}{2} \frac{ \log(\pi \log(m))}{a_{m}},
\end{equation}
and the CDF of $\xi$ is the Gumbel, i.e. $P( \xi \leq x) = \exp( -\exp(-x))$.
\end{theorem}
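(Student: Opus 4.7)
The plan is to reduce the maximum to that of an i.i.d.\ standard normal sample and then apply the classical Fisher--Tippett--Gnedenko theorem for the Gumbel domain of attraction. The three rate conditions in the hypothesis correspond to the three approximation errors that must be controlled at the scale $\sqrt{\log m}$ dictated by the normalizing constants $a_m$, $b_m$.

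First, I would strengthen Theorem \ref{theorem:null} to a Berry--Esseen--type coupling: on a suitable enlargement of the probability space, there exist standard normal variates $Z_i^n$ such that $T_{t_i^*}^n = Z_i^n + \varepsilon_i^n$ with a uniform bound $\max_i|\varepsilon_i^n| = O_p(1/\sqrt{nh_n})$ (up to logarithmic factors). This is plausible because $T_{t_i^*}^n$ is, after volatility normalization, a weighted sum of conditionally centered, locally Gaussian increments, so standard martingale CLT refinements apply to each $i$ and a union bound across the $m$ test points gives a $\sqrt{\log m}$ inflation. The first rate condition $\sqrt{\log m}\cdot m/\sqrt{nh_n}\to 0$ is exactly what makes $a_m\cdot\max_i|\varepsilon_i^n|=o_p(1)$. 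The final $m^{-\Gamma/2}\sqrt{\log m}$ term presumably absorbs a smoothness/bias residual coming from the kernel smoothing (constants of $\mu^*$ and $\sigma$ being frozen at $t_i^*$), which is likewise killed by a union bound.

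Second, I would establish approximate independence of the $Z_i^n$'s. The left-sided exponential kernel used in the paper makes each $T_{t_i^*}^n$ depend predominantly on returns in $[t_i^*-h_n,t_i^*]$, with a tail decay controlled by the kernel. Under the spacing $T/m$ of the test points, the overlap, or residual dependence, between statistics at consecutive $t_i^*$ is governed by the ratio $h_n/(T/m)=mh_n/T$. Blocking the time axis into disjoint intervals of length $\asymp T/m$ and coupling the within-block contributions with an independent copy yields a total-variation error of order $(mh_n)^{-B}$ between the joint law of $(Z_i^n)$ and that of $m$ independent standard normals. The second rate condition then gives $\log m\cdot(mh_n)^{-B}\to 0$, more than enough to apply an extreme-value argument after the coupling.

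Third, for i.i.d.\ $Z_1,\dots,Z_m\sim N(0,1)$ the tail $P(|Z|>x)\sim\sqrt{2/\pi}\,x^{-1}e^{-x^2/2}$ places $|Z|$ in the Gumbel maximum-domain of attraction. Solving $mP(|Z|>b_m+x/a_m)\to e^{-x}$ with $a_m=\sqrt{2\log m}$ gives, after expanding the log, $b_m=a_m-\log(\pi\log m)/(2a_m)$, exactly the constants in the theorem. Combining with the coupling and independence steps via Slutsky's theorem delivers $(T_m^*-b_m)a_m\overset{d}{\to}\xi$ with $P(\xi\le x)=\exp(-\exp(-x))$.

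The main obstacle is Step~2: establishing asymptotic independence at a rate fast enough to survive the $\sqrt{\log m}$ inflation from the EVT normalization. A pairwise covariance bound and a Bonferroni-type argument are likely too coarse; instead one needs a blocking scheme that exploits the exponential decay of the kernel (through the constant $B$) together with the local (in time) structure of the semimartingale, so that the joint law of the $T_{t_i^*}^n$'s differs from a product measure only by an error $(mh_n)^{-B}$ with a power $B$ large enough to absorb the $\log m$ factor. Subsidiary technical points include handling the jump compensation in $\mu^*$ without contaminating the coupling and verifying that the volatility estimator $\hat\sigma^n_t$, used in the denominator of $T_t^n$, can be replaced by $\sigma_{t-}$ uniformly over the $m$ test points at the same rate.
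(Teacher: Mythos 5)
Your architecture matches the paper's: isolate a Gaussian leading term, make the statistics at distinct $t_j^*$ independent by localizing each one to its own disjoint window, control the discarded pieces uniformly over the $m$ points at the scale $1/a_m$, and finish with the classical Gumbel limit for the maximum of $m$ independent standard normals (the paper invokes Lemma 1 of \citet{lee-mykland:08a} for this last step, with exactly your constants $a_m,b_m$). You also correctly map the three rate conditions to the three error sources: the union-bounded drift/remainder terms, the kernel-tail leakage across windows (the $(mh_n)^{-B}$ term via condition (K3)), and the volatility-freezing residual (the $m^{-\Gamma/2}$ term via the Lipschitz condition in Assumption \ref{assumption:model2}).

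The genuine weak point is your Step 1. You posit a Berry--Esseen--type coupling $T_{t_i^*}^n = Z_i^n + \varepsilon_i^n$ with exact normals $Z_i^n$ and a uniform error $O_p(1/\sqrt{nh_n})$, but you give no mechanism for it, and a generic normal-approximation bound is not obviously enough here: the Gumbel limit is driven by the extreme tail of the marginals, where a Kolmogorov-distance (absolute) error does not translate into a controlled relative tail error, so one cannot simply ``couple and Slutsky'' without more care. The paper avoids this issue entirely by observing that, after truncating the kernel sum to the window $(t_{j-1}^*,t_j^*]$ and freezing the volatility at $\sigma_{t_j^*-}$ (then dividing by it), the leading term $G_{t_j^*,1}^n$ is a deterministically weighted sum of Brownian increments over disjoint intervals --- hence \emph{exactly} Gaussian and \emph{exactly} independent across $j$. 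All remaining pieces are then handled as additive perturbations of the maximum: the volatility-freezing correction and the pre-window kernel tail are martingales whose predictable quadratic variations are bounded by $Cm^{-\Gamma}$ and $C(mh_n)^{-2B}$ respectively, and the exponential inequality of \citet{dzhaparidze-zanten:01a} plus a union bound gives $\max_j$ of each at the rates appearing in the hypothesis. This also answers your worry about Step 2: no total-variation blocking/coupling argument is needed, and the ``Bonferroni-type'' bound you dismiss as too coarse is in fact sufficient once paired with sub-Gaussian tail control of the remainders. Your sketch would become a proof if you replaced the asserted coupling with this exact-Gaussianity observation and supplied the exponential-inequality bounds for the two remainder martingales.
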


\begin{proof}
See Appendix \ref{appendix:proof}. \qed
\end{proof}

The alternative to Theorem \ref{theorem:gumbel-distribution} is that there is at least one drift burst (as formalized in Theorem \ref{theorem:alternative}) in the time interval $[0,T]$, whence the maximum statistic $T_{m}^{*}$ diverges rendering the test consistent.

In practice, the Gumbel distribution is conservative, as the convergence in \eqref{equation:gumbel} is slow and because of residual dependence in the $t$-statistics due to small sample effects, microstructure noise and pre-averaging (introduced in Section \ref{section:noise}). In Appendix \ref{appendix:critical-value}, we propose a simulation-based procedure to determine data-driven critical values.

\subsection{Robustness to a pre-announced jump} \label{section:fixed-jump}

We here study an extended model that -- on top of the drift, volatility and jump component in Eq. \eqref{equation:main-model} -- has a ``pre-announced'' jump \citep*[see e.g.][]{jacod-li-zheng:17a,dubinsky-johannes-kaeck-seeger:18a}, where the jump time is fixed across sample paths. We show these types of jumps do not compromise drift burst detection.
\begin{theorem}
\label{theorem:fixed-jump} Assume that $\doublewidetilde{X}$ is of the form:
\begin{equation} \label{equation:fixed-jump}
\text{\upshape{d}} \doublewidetilde{X}_{t} =  \text{\upshape{d}}X_{t} + \text{\upshape{d}}J_{t}',
\end{equation}
where $\text{\upshape{d}}X_{t}$ is the model in Eq. \eqref{equation:main-model} such that the conditions of Theorem \ref{theorem:null} hold, while $J_{t}' =  J \cdot I_{\{0< \tau_{J} \leq t\}}$, $\tau_{J}$ is a stopping time, and $J$ is $\mathcal{F}_{ \tau_{J}}$-measurable. Then, as $n \rightarrow \infty$, it holds that:
\begin{equation}
\displaystyle T_{ \tau_{J}}^{n} \overset{p}{ \rightarrow} \sqrt{ \frac{K(0)}{K_{2}}} \cdot \sign(J).
\end{equation}
\end{theorem}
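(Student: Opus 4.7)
The strategy exploits the fact that at $t = \tau_J$ the jump contributes only to the single increment $\Delta^n_{i^*}\doublewidetilde{X}$, where $i^*$ is the unique index with $\tau_J \in (t_{i^*-1}, t_{i^*}]$. Because this increment receives a kernel weight of order $1/h_n$ (respectively $1/h_n'$) in the drift (respectively volatility) estimator, while the diffusion/drift contributions from $X$ remain of strictly smaller order, the jump will dominate both the numerator and denominator of $T^n_{\tau_J}$.

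The first step is to split $\hat{\mu}^n_{\tau_J}(\doublewidetilde{X}) = \hat{\mu}^n_{\tau_J}(X) + \hat{\mu}^n_{\tau_J}(J')$ by linearity. Theorem \ref{theorem:null} guarantees that the first piece is $O_p(h_n^{-1/2})$, and the second reduces to the single summand
\begin{equation*}
\hat{\mu}^n_{\tau_J}(J') \;=\; \frac{J}{h_n}\, K\!\left(\frac{t_{i^*-1} - \tau_J}{h_n}\right).
\end{equation*}
Since Assumption \ref{assumption:times} yields $|t_{i^*-1} - \tau_J| \leq \Delta_{i^*,n} = o_p(h_n)$, continuity of $K$ at $0$ gives $h_n \hat{\mu}^n_{\tau_J}(\doublewidetilde{X}) \overset{p}{\to} K(0) J$, the null piece being only $O_p(h_n^{1/2})$ after multiplication by $h_n$. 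An analogous decomposition $(\Delta^n_i \doublewidetilde{X})^2 = (\Delta^n_i X)^2 + 2(\Delta^n_i X)(\Delta^n_i J') + (\Delta^n_i J')^2$ inside $(\hat{\sigma}^n_{\tau_J})^2$ produces three pieces: the first converges to $\sigma_{\tau_J-}^2 = O_p(1)$ by Lemma \ref{lemma:spot-volatility}, the cross term is $O_p(\sqrt{\Delta_{i^*,n}}/h_n')$, and the squared-jump term contributes $K(0) J^2 / h_n'$. Therefore $h_n' (\hat{\sigma}^n_{\tau_J})^2 \overset{p}{\to} K(0) J^2$.

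Combining the two orders of magnitude, the test statistic satisfies
\begin{equation*}
T^n_{\tau_J} \;=\; \sqrt{\tfrac{h_n}{K_2}}\, \frac{\hat{\mu}^n_{\tau_J}}{\hat{\sigma}^n_{\tau_J}} \;\overset{p}{\to}\; \sqrt{\tfrac{K(0)}{K_2}}\, \sign(J),
\end{equation*}
where I use that $h_n \asymp h_n'$ with $h_n'/h_n \to 1$ as imposed in Assumption \ref{assumption:kernel} (without this matching, an extra factor $\sqrt{h_n'/h_n}$ would appear). The main delicate point is the uniformity behind the claim that the kernel weight at $i^*$ concentrates at $K(0)$: one needs $\Delta_{i^*,n} = o_p(h_n)$ so that, even for one-sided kernels such as $K(x) = e^{-|x|} I_{\{x \leq 0\}}$, the argument $(t_{i^*-1}-\tau_J)/h_n$ stays in the support of $K$ (here negative because $t_{i^*-1}<\tau_J$) and tends to $0$, rendering the limit insensitive to the location of $\tau_J$ within the sampling cell. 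A subsidiary concern is controlling the cross term in $(\hat\sigma^n_{\tau_J})^2$ uniformly in $n$, which follows from standard maximal inequalities for semimartingale increments over the shrinking window.
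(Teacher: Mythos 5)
Your proposal is correct and follows essentially the same route as the paper: isolate the single increment containing the jump, note that its kernel weight tends to $K(0)$, and let the terms $K(0)J/h_n$ in the numerator and $K(0)J^2/h_n'$ in the denominator dominate the $O_p(h_n^{-1/2})$ diffusion part and the $O_p(1)$ spot variance, respectively. The only cosmetic differences are that the paper sets $\tau_J = T$ and normalizes $h_n' = h_n$ as a convention stated before the proofs (rather than via Assumption \ref{assumption:kernel}, which does not itself impose $h_n'/h_n \to 1$), a point you correctly flag as necessary for the stated constant.
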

\begin{proof}
See Appendix \ref{appendix:proof}. \qed
\end{proof}
We can readily select a kernel that can tell apart the occurrence of a fixed jump from a drift explosion. In particular, the left-sided exponential kernel advocated above has $\displaystyle \sqrt{ K(0)/K_{2}} = \sqrt{2}$, so that $|T_{ \tau_{J}}^{n}| \overset{p}{ \rightarrow} \sqrt{2}$. Thus, our proposed $t$-statistic is -- asymptotically -- small under the null (standard normal distributed) and pre-announced jump alternative ($\sqrt{2}$ in absolute value), while it is large (diverging) under the drift burst alternative.

\subsection{Robustness to microstructure noise} \label{section:noise}

In practice, we do not measure the true, efficient log-price $X_{t_{i}}$, because transaction and quotation data are disrupted by multiple layers of ``noise'' or ``friction'' \citep*[e.g.][]{black:86a,stoll:00a}. In this section, we show how to modify our test for drift bursts, so it is resistant to such features of the market microstructure at the tick level.
To incorporate noise, we suppose that:
\begin{equation}
Y_{t_{i}} = X_{t_{i}} + \epsilon_{t_{i}}, \qquad \text{for } i = 0, 1, \ldots, n,
\end{equation}
where $\epsilon_{t_{i}}$ is an additive error term.
\begin{assumption} \label{assumption:noise}
$( \epsilon_{t_{i}})_{i=0}^{n}$ is adapted and independent of $X$. Moreover, $E[ \epsilon_{t_{i}}] = 0$, $E \big[ \epsilon_{t_{i}}^{4} \big] < \infty$, and denoting the autocovariance function by $\gamma_{k} = E[\epsilon_{t_{i}}\epsilon_{t_{i+k}}]$ for any integer $k\geq 0$, we further assume $\gamma_{k}$ is finite, independent of $i$ and $n$, and such that $\gamma_{k} = 0$ for $k > Q$, where $Q \geq 0$ is an integer (i.e., $Q$-dependent noise).
\end{assumption}

This is a standard noise model in financial econometrics. It allows for autocorrelation in the noise process, but it rules out dependence between the noise and fundamental price, as in \citet*{li-linton:20a}. The difficulty brought by noise is then that in order to do inference about drift bursts in $X$, we are forced to work with the contaminated high-frequency record of $Y$.

A direct application of the $t$-statistic in Eq. \eqref{equation:t-statistic} to the noise-contaminated returns $\Delta_{i}^{n} Y$ is powerless, because the noise asymptotically dominates the other shocks and overwhelms the signal of an exploding drift. A solution to this problem is to slightly slow down the accumulation of noise by pre-averaging $Y_{t_{i}}$, as in \citet*{jacod-li-mykland-podolskij-vetter:09a}. We define a pre-averaged increment for any stochastic process $V$:
\begin{equation} \label{equation:preavg-Y}
\Delta_{i}^{n} \overbar{V} = \sum_{j=1}^{k_{n}-1}g_{j}^{n}  \Delta_{i+j}^{n} V = - \sum_{j=0}^{k_{n}-1}H_{j}^{n}V_{t_{i+j}},
\end{equation}
where $k_{n}$ is the pre-averaging window, $g_{j}^{n} = g(j/k_{n})$ and $H_{j}^{n} = g_{j+1}^{n}-g_{j}^{n}$ with $g: [0,1] \mapsto \mathbb{R}$ continuous and piecewise continuously differentiable with a piecewise Lipschitz derivative $g^{\prime}$, such that $g(0)=g(1)=0$ and $\int_{0}^{1}g^{2}(s) \text{d}s< \infty$.
Absent a drift burst, it follows that:
\begin{equation} \label{equation:order}
\Delta_{i}^{n} \overbar{X} = O_{p} \left( \sqrt{ \frac{k_{n}}{n}} \right) \qquad \text{and} \qquad \Delta_{i}^{n} \overbar{ \epsilon} = O_{p} \left( \frac{1}{\sqrt{k_{n}}} \right),
\end{equation}
As Eq. \eqref{equation:order} shows, the noise is reduced by a factor $\sqrt{k_{n}}$. The drift and volatility of $X$ are enhanced by $\sqrt{k_{n}}$, while leaving their relative order unchanged. Intuitively, it therefore suffices with minimal pre-averaging to bring down the noise enough and make a fast drift burst with $\alpha$ close to one dominate the divergence of the asymptotic variance in the drift estimator.

The drift burst $t$-statistic in Eq. \eqref{equation:t-statistic} is then refined as a noise-robust version, where the drift estimator and its long-run variance are computed from the pre-averaged return series:
\begin{equation} \label{equation:t-statistic-robust}
\overbar{T}_{t}^{n} = \sqrt{h_n} \frac{\hat{ \overbar{ \mu}}_{t}^{n}}{
\sqrt{\hat{ \overbar{ \sigma}}^{n}_{t}}},
\end{equation}
with
\begin{equation} \label{equation:mu-noise}
\hat{ \overbar{ \mu}}_{t}^{n} = \frac{1}{h_{n}} \sum_{i=1}^{n-k_{n}+2} K \left( \frac{t_{i-1}-t}{h_{n}} \right) \Delta_{i-1}^{n} \overbar{Y},
\end{equation}
and
\begin{equation} \label{equation:SigmatHAC}
\hat{ \overbar{ \sigma}}_{t}^{n} = \frac{1}{h_{n}'} \left[ \sum_{i=1}^{n-k_{n}+2} \left( K \left( \frac{t_{i-1}-t}{h_{n}'} \right) \Delta_{i-1}^{n} \overbar{Y} \right)^{2} + 2 \sum_{L=1}^{L_{n}}w \left( \frac{L}{L_{n}} \right) \sum_{i=1}^{n-k_{n}-L+2}K \left( \frac{t_{i-1}-t}{h_{n}'} \right)K \left( \frac{t_{i+L-1}-t}{h_{n}'} \right) \Delta_{i-1}^{n} \overbar{Y} \Delta_{i-1+L}^{n} \overbar{Y} \right],
\end{equation}
where $w: \mathbb{R}_{+} \rightarrow \mathbb{R}$ is a kernel with $w(0)=1$ and $w(x) \rightarrow 0$ as $x \rightarrow \infty$, and $L_{n}$ is the lag length that determines the number of autocovariances in \eqref{equation:SigmatHAC}.
$\hat{ \overbar{ \sigma}}_{t}^{n}$ is a heteroscedasticity and autocorrelation consistent (HAC)-type statistic \citep*[e.g.][]{newey-west:87a, andrews:91a}. The extra complexity is required to account for any noise dependence and the serial correlation induced by pre-averaging to consistently estimate the asymptotic variance of $\hat{ \overbar{ \mu}}_{t}^{n}$.

\begin{theorem} \label{theorem:null-noise}
Set $Y_{t_{i}} = X_{t_{i}} + \epsilon_{t_{i}}$, where $X$ is defined by Eq. \eqref{equation:main-model} and $\epsilon$ is defined by Assumption \ref{assumption:noise}. Suppose that Assumption \ref{assumption:model} -- \ref{assumption:kernel} are fulfilled. For every fixed $t \in (0,T]$, as $n \rightarrow \infty$, $k_{n} \rightarrow \infty$, $L_{n} \rightarrow \infty$ such that $k_{n}h_{n} \rightarrow 0$, $k_{n} h_{n}' \rightarrow 0$, $\frac{k_{n}}{nh_{n}} \rightarrow 0$, $\frac{k_{n}}{nh_{n}'} \rightarrow 0$, and $\frac{L_{n}}{nh_{n}'} \rightarrow 0$, it holds that:
\begin{equation*}
\overbar{T}_{t}^{n} \overset{d}{ \rightarrow} N(0,1).
\end{equation*}
\end{theorem}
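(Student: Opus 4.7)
My plan is to mirror the proof of Theorem \ref{theorem:null}, but with the pre-averaged returns $\Delta_i^n \overbar{Y}$ in place of $\Delta_i^n X$. The first step is to decompose $\Delta_{i-1}^n \overbar{Y} = \Delta_{i-1}^n \overbar{X} + \Delta_{i-1}^n \overbar{\epsilon}$, so that $\hat{\overbar{\mu}}_t^n = \hat{\overbar{\mu}}_t^{n,X} + \hat{\overbar{\mu}}_t^{n,\epsilon}$, and to establish a joint CLT for the two pieces. For the signal piece $\hat{\overbar{\mu}}_t^{n,X}$, I would expand $\Delta_{i-1}^n\overbar{X}$ via its It\^{o} representation and isolate the dominant continuous martingale contribution. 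Because $k_n h_n \to 0$ and $k_n/(nh_n) \to 0$, pre-averaged increments remain localized within the kernel window, and a martingale CLT for triangular arrays (as in Kristensen, 2010, which underlies Theorem \ref{theorem:null}) yields stable convergence in law to a centered Gaussian with conditional variance involving $\sigma_{t-}^2$, $\int g^2(s)\,ds$, and kernel moments. The drift and jump parts of $X$ contribute only through the compensator $\mu_{t-}^{*}$, exactly as in the noiseless case, since the jump compensation argument does not interact with pre-averaging.

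For the noise piece $\hat{\overbar{\mu}}_t^{n,\epsilon}$, I would substitute $\Delta_{i-1}^n\overbar{\epsilon} = -\sum_{j=0}^{k_n-1} H_j^n \epsilon_{t_{i-1+j}}$ and exchange summation to express the object as a weighted linear form in the $Q$-dependent noise. Under Assumption \ref{assumption:noise} this is independent of $X$; a Lindeberg CLT for $m$-dependent triangular arrays produces asymptotic normality, with limit variance built from $\gamma_0,\ldots,\gamma_Q$, the increments $H_j^n$, and squared kernel integrals. Summing the two independent Gaussian limits gives the total asymptotic variance $V_t$ of $\sqrt{h_n}\,\hat{\overbar{\mu}}_t^n$. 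The next step is to prove $\hat{\overbar{\sigma}}_t^n \overset{p}{\to} V_t$, which is an HAC-consistency argument in the spirit of Newey-West (1987) and Andrews (1991), localized by the kernel at $t$: expanding $(\Delta_{i-1}^n\overbar{Y})(\Delta_{i-1+L}^n\overbar{Y})$ into signal-signal, cross, and noise-noise terms, the $L=0$ piece captures the marginal variances while the lag-$L$ corrections recover the autocovariances produced both by the pre-averaging overlap (of length $k_n$) and by the $Q$-dependent noise. The conditions $L_n\to\infty$ with $L_n/(nh_n')\to 0$, together with $k_n h_n'\to 0$ and $k_n/(nh_n')\to 0$, ensure that the truncation bias vanishes and that the variance of $\hat{\overbar{\sigma}}_t^n$ stays bounded.

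Finally, combining stable convergence of the numerator with consistency of the denominator yields $\overbar{T}_t^n \overset{d}{\to} N(0,1)$ via the Slutsky-for-stable-convergence argument already used in Theorem \ref{theorem:null} to accommodate the random limit $\sigma_{t-}$. The hard part, in my view, is the HAC step: one must carefully control the interaction between the pre-averaging-induced serial correlation, which makes $\Delta_i^n\overbar{Y}$ and $\Delta_{i+L}^n\overbar{Y}$ nontrivially correlated for all $L<k_n$, and the $Q$-dependent noise structure, while simultaneously balancing the four rates $k_n, L_n, h_n, h_n'$ so that bias and variance terms are both asymptotically negligible. The cross-product between the signal and noise components at lag $L$ is the most delicate piece, because it mixes martingale and linear-process dependencies on different time scales; by contrast, the individual CLTs for the numerator and the final combination step are mostly routine extensions of arguments already deployed in the noiseless proof.
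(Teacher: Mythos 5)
Your skeleton --- split $\hat{ \overbar{ \mu}}_{t}^{n}$ into a signal part and a noise part, prove a CLT for the numerator, prove HAC consistency for the denominator, and combine --- matches the paper's, but the quantitative content of the plan is wrong in two places that are precisely where the theorem lives. First, under the stated rate conditions the pre-averaged noise \emph{dominates} the pre-averaged signal: $\Delta_{i}^{n}\overbar{X}=O_{p}(\sqrt{k_{n}/n})$ while $\Delta_{i}^{n}\overbar{\epsilon}=O_{p}(k_{n}^{-1/2})$, and the paper shows the signal contribution to the drift estimator is $O_{p}(\sqrt{\Delta_{n}k_{n}}/h_{n})$ against $O_{p}(\sqrt{k_{n}}/h_{n})$ for the noise contribution, a ratio of order $n^{-1/2}$. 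Consequently $\sqrt{h_{n}}\,\hat{ \overbar{ \mu}}_{t}^{n}$ has no finite asymptotic variance $V_{t}$ --- it diverges at rate $\sqrt{k_{n}/h_{n}}$ --- and $\hat{ \overbar{ \sigma}}_{t}^{n}$ does not converge in probability to a constant but diverges at the matching rate $k_{n}/h_{n}$. The signal piece, and with it $\sigma_{t-}^{2}$ and $\int g^{2}$, contributes nothing to the limit; your ``sum of two independent Gaussian limits'' and the Slutsky step with $\hat{ \overbar{ \sigma}}_{t}^{n}\overset{p}{\rightarrow}V_{t}$ cannot be carried out as written. The argument must instead be organized as a ratio of two quantities diverging at the same rate, with the noise alone setting that rate.

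Second, and more subtly, the leading-order long-run variance of the pre-averaged noise --- the object you propose to compute from $\gamma_{0},\ldots,\gamma_{Q}$, the weights $H_{j}^{n}$ and squared kernel integrals --- is exactly zero: because $\sum_{j}H_{j}^{n}=g(1)-g(0)=0$, the sequence $\Delta_{i}^{n}\overbar{\epsilon}$ is over-differenced and its autocovariances sum to zero (the paper's $V_{1,n}=0$, with the matching cancellation $\sigma_{1,n}=0$ on the denominator side). The true asymptotic variance is a second-order term, $V_{2,n}$, extracted by a mean-value expansion of the kernel and involving $K'$ and the spacings $(t_{\ell'-1}-t_{\ell-1})/h_{n}$; the theorem holds because the HAC statistic reproduces exactly $h_{n}V_{2,n}$. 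Executed literally, your computation would return a degenerate zero-variance limit for the noise CLT. A smaller but real issue: after pre-averaging the dependence length of the noise increments is $k_{n}-1+Q\rightarrow\infty$, so a fixed-order $m$-dependent Lindeberg CLT does not apply directly; the paper handles this with the big-block/small-block device of Jacod, Li, Mykland, Podolskij and Vetter (2009) combined with Theorem 2.2.14 of Jacod and Protter (2012).
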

\begin{proof}
See Appendix \ref{appendix:proof}. \qed
\end{proof}

The conditions $k_{n}h_{n} \rightarrow 0$ and $\frac{k_{n}}{nh_{n}} \rightarrow 0$ (and the associated ones with $h_{n}'$) call for moderate pre-averaging, so that the number of pre-averaged terms is not too large. The condition $\frac{L_{n}}{nh_{n}'} \rightarrow 0$ means the lag length also cannot grow too fast when estimating the long-run variance of the drift estimator.

\begin{theorem} \label{theorem:alternative-noise}
Set $Y_{t_{i}} = \widetilde{X}_{t_{i}} + \epsilon_{t_{i}}$, where $\widetilde{X}$ is defined as in Theorem \ref{theorem:alternative} and everything else is maintained as in Theorem \ref{theorem:null-noise}. As $n \rightarrow \infty$, $k_{n} \rightarrow \infty$, $L_n\rightarrow\infty$ such that $k_{n} h_{n} \rightarrow 0$, $k_{n} h_{n}' \rightarrow 0$, $\frac{k_{n}}{nh_{n}} \rightarrow 0$, $\frac{k_{n}}{nh_{n}'} \rightarrow 0$, and $\frac{L_{n}}{nh_{n}'} \rightarrow 0$, and $k_{n}h_{n}^{2(1- \alpha)} \rightarrow \infty$, it holds that $| \overbar{T}_{ \tau_{ \text{\upshape{db}}}}^{n}| \overset{p}{ \rightarrow} \infty$ for $\alpha- \beta > 1/2$.
\end{theorem}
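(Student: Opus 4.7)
The plan is to adapt the scaling argument in the proof of Theorem~\ref{theorem:alternative} to the pre-averaged noise-robust statistic of Section~\ref{section:noise}, isolating the deterministic contribution of the drift burst in the numerator and showing it dominates all stochastic terms, while the denominator inherits the $h_n^{-2\beta}$ scaling from the volatility burst. First, decompose
\begin{equation*}
\Delta_{i}^{n}\overbar{Y} = \Delta_{i}^{n}\overbar{X} + \Delta_{i}^{n}\overbar{X}^{\text{db}} + \Delta_{i}^{n}\overbar{X}^{\text{vb}} + \Delta_{i}^{n}\overbar{\epsilon},
\end{equation*}
into contributions from the locally bounded baseline process, the exploding drift $c_{1,t}(\tau_{\text{db}}-t)^{-\alpha}$, the exploding diffusion $c_{2,t}(\tau_{\text{db}}-t)^{-\beta}$, and the pre-averaged noise. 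Plug into Eq.~\eqref{equation:mu-noise} and extract the deterministic drift burst piece via a Riemann-sum approximation and the change of variable $u=(s-\tau_{\text{db}})/h_n$, as in the proof of Theorem~\ref{theorem:alternative}, to obtain
\begin{equation*}
\hat{\overbar{\mu}}^{n,\text{db}}_{\tau_{\text{db}}} = c_{1,\tau_{\text{db}}}\,k_n h_n^{-\alpha}\,\left(\int_{0}^{1}g(u)\,du\right)\left(\int_{-\infty}^{0} K(u)\lvert u\rvert^{-\alpha}\,du\right) + o_p(k_n h_n^{-\alpha}),
\end{equation*}
the factor $k_n$ being inherited from pre-averaging.

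The other three numerator pieces are of smaller order. Arguments mirroring the proof of Theorem~\ref{theorem:null-noise} yield $\hat{\overbar{\mu}}^{n,X}_{\tau_{\text{db}}} = O_p(k_n h_n^{-1/2})$; transporting the volatility-burst bookkeeping of Theorem~\ref{theorem:alternative} to the pre-averaged setting gives $\hat{\overbar{\mu}}^{n,\text{vb}}_{\tau_{\text{db}}} = O_p(k_n h_n^{-1/2-\beta})$; and the telescoping cancellation $\sum_{j=0}^{k_n-1}H_j^n=0$ implied by $g(0)=g(1)=0$, combined with the noise calculation from Theorem~\ref{theorem:null-noise}, controls $\hat{\overbar{\mu}}^{n,\epsilon}_{\tau_{\text{db}}}$. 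The assumption $\alpha-\beta>1/2$ ensures the drift burst signal dominates $\hat{\overbar{\mu}}^{n,X}$ and $\hat{\overbar{\mu}}^{n,\text{vb}}$, while $k_n h_n^{2(1-\alpha)}\to\infty$ is precisely the rate required for $k_n h_n^{-\alpha}$ to outgrow the pre-averaged noise term. For the denominator, I would show that $\hat{\overbar{\sigma}}_{\tau_{\text{db}}}^{n}$ remains a consistent proxy for $h_n\,\mathrm{Var}(\hat{\overbar{\mu}}_{\tau_{\text{db}}}^{n})$ thanks to the lag-truncation condition $L_n/(nh_n')\to 0$, which captures the pre-averaging-induced autocorrelation in full; under the volatility burst the dominant piece is of order $k_n^2 h_n^{-2\beta}$. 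Combining,
\begin{equation*}
\lvert\overbar{T}_{\tau_{\text{db}}}^{n}\rvert = \sqrt{h_n}\,\frac{\lvert\hat{\overbar{\mu}}_{\tau_{\text{db}}}^{n}\rvert}{\sqrt{\hat{\overbar{\sigma}}_{\tau_{\text{db}}}^{n}}} \asymp \sqrt{h_n}\,\frac{k_n h_n^{-\alpha}}{k_n h_n^{-\beta}} = h_n^{1/2-\alpha+\beta} \to \infty,
\end{equation*}
since $\alpha-\beta>1/2$.

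The main obstacle will be quantifying the pre-averaged noise term $\hat{\overbar{\mu}}^{n,\epsilon}_{\tau_{\text{db}}}$: because $\sum_j H_j^n=0$ kills the leading-order variance, one has to extract the next-order bound, and it is precisely this bound that produces the additional rate condition $k_n h_n^{2(1-\alpha)}\to\infty$ not already present in Theorem~\ref{theorem:null-noise}. A secondary delicate point is verifying that the HAC-type denominator inherits the $h_n^{-2\beta}$ scaling from the volatility burst locally around $\tau_{\text{db}}$, without being contaminated either by the drift burst singularity itself or by boundary effects from the kernel $K$ interacting with the singularity at $\tau_{\text{db}}$.
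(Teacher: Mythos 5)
Your proposal follows essentially the same route as the paper's proof: decompose the pre-averaged numerator into baseline, drift-burst, volatility-burst and noise pieces; establish the rates $k_n h_n^{-\alpha}$ for the drift signal (with the $\int_0^1 g$ factor from pre-averaging), $O_p(k_n h_n^{-\beta-1/2})$ for the volatility burst, and $O_p(\sqrt{k_n}/h_n)$ for the baseline-plus-noise, so that $\alpha-\beta>1/2$ and $k_n h_n^{2(1-\alpha)}\to\infty$ deliver dominance of the drift term; and show the HAC denominator scales as $k_n^2 h_n^{-2\beta}$, yielding $\lvert\overbar{T}^n_{\tau_{\text{db}}}\rvert \asymp h_n^{1/2-\alpha+\beta}\to\infty$. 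All of these orders agree with the paper. The one loose end you flag but do not close — contamination of the denominator by the drift-burst singularity — is handled in the paper by noting that the drift contributes $O_p\bigl(k_n^2 \Delta_n^{2(1-\alpha)}/h_n\bigr)$ to $\hat{\overbar{\sigma}}^n_{\tau_{\text{db}}}$, and that if this term rather than the $k_n^2 h_n^{-2\beta}$ term dominates, the $t$-statistic still diverges, now at rate $(h_n/\Delta_n)^{1-\alpha}\to\infty$; completing that two-case comparison would make your argument fully match Theorem \ref{theorem:alternative-noise}.
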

\begin{proof}
See Appendix \ref{appendix:proof}. \qed
\end{proof}

This confirms that in presence of noise the pre-averaged test statistic converges in law to a standard normal under the null of no drift burst, while it again diverges at a drift burst time, if $\alpha - \beta > 1/2$. Under the alternative, pre-averaging cannot be too moderate ($k_{n}h_{n}^{2(1- \alpha)} \rightarrow \infty$), otherwise the signal of an exploding drift is not enhanced enough relative to the noise. The condition also shows that with $\alpha$ closer to $1$, we need less pre-averaging.

\section{Simulation study} \label{section:simulation}

In this section, we adopt a Monte Carlo approach to further explore the $t$-statistic proposed in Eq. \eqref{equation:t-statistic-robust} as a tool to uncover drift bursts in $X$. The overall goal is to investigate the size and power properties of our test and figure out how ``small'' drift bursts we are able to detect under the alternative, amid also an exploding volatility, the presence of infinity-activity price jump processes, and microstructure noise.

We simulate a driftless \citet{heston:93a}-type stochastic volatility (SV) model:
\begin{align} \label{equation:heston-model}
\begin{split}
\text{d}X_{t} &= \sigma_{t} \text{d}W_{t}, \\[0.10cm]
\text{d} \sigma_{t}^{2} &= \kappa \left( \theta - \sigma_{t}^{2} \right) \text{d}t + \xi \sigma_{t} \text{d}B_{t}, \quad t \in [0,1],
\end{split}
\end{align}
where $W$ and $B$ are standard Brownian motions with $E( \text{d}W_{t} \text{d}B_{t}) = \rho \text{d}t$. Thus, the drift-to-volatility ratio of the efficient log-price is $\mu_{t}/\sigma_{t} = 0$.

We configure the variance process to match key features of real financial high-frequency data. As consistent with prior work \citep*[e.g.][]{ait-sahalia-kimmel:07a}, we assume the annualized parameters of the model are $( \kappa, \theta, \xi, \rho) = (5, 0.0225, 0.4, -\sqrt{0.5})$. We note $\theta$ implies an unconditional standard deviation of log-returns of roughly 15\% p.a., which aligns with what we observe across assets in our empirical study. A total of $\text{1,000}$ repetitions is generated via an Euler discretization. In each simulation, $\sigma_{t}^{2}$ is initiated at random from its stationary law $\sigma_{t}^{2} \sim \text{Gamma}(2 \kappa \theta \xi^{-2}, 2 \kappa \xi^{-2})$. The sample size is $n = \text{23,400}$, which is representative of the liquidity in the futures contracts analyzed in Section \ref{section:empirical} (see Table \ref{Table:summaryStatsCME}). It corresponds to second-by-second sampling in a 6.5 hours trading session.

We create drift and volatility bursts with the parametric model:
\begin{equation} \label{equation:parametric-model}
\mu_{t}^{\text{db}} = a \frac{ \sign(t- \tau_{\text{db}})}{| \tau_{\text{db}}-t|^\alpha}, \quad \quad \sigma_{t}^{\text{vb}} = b \frac{ \sqrt{\theta}}{| \tau_{\text{db}}-t|^\beta}, \quad \text{for } t \in [0.475,0.525],
\end{equation}
with $\tau_{\text{db}} = 0.5$ fixed. Here, the price experiences a short-lived flash crash at $\tau_{\text{db}}$, as consistent with our empirical finding that most of the identified drift bursts are followed by partial or full recovery.\footnote{To ensure $X$ reverts during a pure volatility burst, we recenter the log-return series associated with $\sigma_{t}^{\text{vb}}$, so that $\int_{0}^{T} \sigma_{t}^{\text{vb}} \text{d}W_{t} = 0$. This has almost no impact on the outcome of the $t$-statistic, but it makes the price processes comparable across settings.} The window $[0.475, 0.525]$ can be interpreted as making the duration of the bursts last about 20 minutes. We set $\alpha = (0.55, 0.65, 0.75)$ and $\beta = (0.1, 0.2, 0.3, 0.4)$ to gauge their impact on our $t$-statistic.\footnote{Note that as $\alpha - \beta > 1/2$ for some of these combinations, the model is not always devoid of arbitrage.} In particular, fixing $a = 3$ we induce a cumulative return $\int_{0}^{ \tau_{\text{db}}} \mu_{t}^{ \text{db}} \text{d}t$ of about $-0.5\%$ (with opposite sign after the crash) for $\alpha = 0.55$ to slightly less than $-1.5\%$ for $\alpha = 0.75$, as comparable to what we observe in the real data. Also, with $b = 0.15$ our choices of $\beta$ produce a 25\% ($\beta = 0.1$) to more than 100\% ($\beta = 0.4$) increase in the standard deviation of log-returns in the drift burst window relative to its unconditional level across simulations. A drift burst is therefore accompanied by highly elevated volatility, making it challenging to detect the signal.

To explicate the robustness of our $t$-statistic, in each simulation we superposition on top of the continuous sample path model in Eq. \eqref{equation:heston-model} a L\'{e}vy process with jump density given by:
\begin{equation}
\nu( \text{d}x) = \psi \frac{e^{- \lambda x}}{x^{1 + \upsilon}} \text{d}x,
\end{equation}
where $\upsilon > 0$ and $\psi > 0$. This defines a pure-jump tempered stable process with activity index $\upsilon$. We assume $\upsilon = 0.5$, such that an infinite-activity finite-variation process with both small and large jumps appearing randomly in $X$ is produced. We further set $\lambda = 3$ and calibrate $\psi$ so that on average 20\% of the quadratic variation is induced by the jump component. This is broadly in line with previous studies, e.g. \citet*{ait-sahalia-jacod-li:12a, ait-sahalia-xiu:16a}. The process is simulated as the difference between two positive tempered stable processes, as outlined in \citet*{todorov-tauchen-grynkiv:14a}. These are generated with the acceptance-rejection algorithm of \citet*{baeumer-meerschaert:10a}. Note that the discretization is exact for the selected value of $\upsilon$. Also, in agreement with our theoretical exposition the jump process is active both under the null and alternative hypothesis.

The noisy log-price is:
\begin{equation}
Y_{i/n} = X_{i/n} + \epsilon_{i/n}, \quad i = 0, 1, \ldots, n,
\end{equation}
where $\epsilon_{i/n} \sim N \left(0, \omega_{i/n}^{2} \right)$ and $\displaystyle \omega_{i/n} = \gamma \frac{ \sigma_{i/n}}{ \sqrt{n}}$, so the noise is both conditionally heteroscedastic, serially dependent (via $\sigma$), and positively related to the riskiness of the efficient log-price \citep*[e.g.][]{bandi-russell:06a, oomen:06a, kalnina-linton:08a}. We should note that this is a more general microstructure noise than allowed by Assumption \ref{assumption:noise}. $\gamma$ is the noise-to-volatility ratio. We set $\gamma = 0.5$, which amounts to a medium contamination level \citep*[e.g.][]{christensen-oomen-podolskij:14a}. To reduce the noise, we pre-average $Y_{i/n}$ locally within a block of length $k_{n} = 3$ and based on the weight function $g(x)=\min(x,1-x)$.\footnote{In the Online Appendix, we present a comprehensive analysis with $\gamma = 0.5, 2$ and $5$ and pre-averaging horizon $k_{n} = 1, \ldots, 10$. The results do not differ materially from those reported here. A modest loss of power is noted, however, if $h_{n}$ is small and $k_{n}$ is large.}\textsuperscript{,}\footnote{With equidistant data, it follows that if $k_{n}$ is even and $g(x) = \min(x,1-x)$, the pre-averaged return in Eq. \eqref{equation:preavg-Y} can be rewritten as $\Delta_{i}^{n} \bar{Y} = \frac{1}{k_{n}} \sum_{j=1}^{k_{n}/2} Y_{ \frac{i+k_{n}/2+j}{n}} - \frac{1}{k_{n}} \sum_{j=1}^{k_{n}/2} Y_{ \frac{i+j}{n}}$. Thus, the sequence $(2 \Delta_{i}^{n} \bar{Y})_{i=1}^{n-k_{n}+2}$ can be interpreted as constituting a new set of increments from a price process that is constructed by averaging of the rescaled noisy log-price series, $(Y_{i/n})_{i=0}^{n}$, in a neighbourhood of $i/n$, thus making the use of the term pre-averaging and the associated notation transparent.} $\hat{ \overbar{ \mu}}_{t}^{n}$ and $\hat{ \overbar{ \sigma}}_{t}^{n}$ are constructed from $( \Delta_{i}^{n} \overbar{Y})_{i=0}^{n-2k_{n}+1}$ based on Eq. \eqref{equation:mu-noise} and \eqref{equation:SigmatHAC} with a left-sided exponential kernel $K(x) = \exp(-|x|)$, for $x \leq 0$.

A Parzen kernel is selected for $w$:
\begin{equation} \label{equation:parzen}
w(x) =
\begin{cases}
1 - 6x^{2} + 6|x|^{3}, & \text{for } 0 \leq |x| \leq 1/2, \\[-0.25cm]
2(1 - |x|)^{3}, & \text{for } 1/2 < |x| \leq 1, \\[-0.25cm]
0, & \text{otherwise}.
\end{cases}
\end{equation}
This choice has some profound advantages in our framework. First, the Parzen kernel ensures that $\hat{ \overbar{ \sigma}}_{t}^{n}$ is positive, so we can always compute the $t$-statistic, which is not true for a general weight function. Secondly, the efficiency of the Parzen kernel is near-optimal, e.g. \citet*{andrews:91a, barndorff-nielsen-hansen-lunde-shephard:09a}. The slight loss of efficiency brings the distinct merit that $\hat{ \overbar{ \sigma}}_{t}^{n}$ can be computed on the back of the first $L_{n}$ lags of the autocovariance function, whereas more efficient weight functions typically require all $n$ lags. In the high-frequency framework $n$ is large, so the latter can be prohibitively slow to compute. In contrast, $L_{n}$ is typically small compared to $n$ in practice, rendering our choice of kernel much less time-consuming.

We set $L_{n} = Q^{*} + 2(k_{n}-1)$ to estimate $\hat{ \overbar{ \sigma}}_{t}^{n}$. Here, $2(k_{n}-1)$ is due to pre-averaging and we compute $Q^{*}$ from $( \Delta_{i}^{n} Y)_{i=1}^{n}$ as a data-driven measure of noise dependence based on automatic lag selection \citep*[see, e.g.][]{newey-west:94a, barndorff-nielsen-hansen-lunde-shephard:09a}.\footnote{In our simulations, the average value of $Q^{*}$ is 11.9, while its interquartile range is 8 -- 17.} The bandwidth for $\hat{ \overbar{ \mu}}_{t}^{n}$ is varied in $h_{n} = (120,300,600)$ seconds. We use a larger bandwidth of $h_{n}' = 5h_{n}$ for $\hat{ \overbar{ \sigma}}_{t}^{n}$ to better capture persistence in volatility and estimate the microstructure-induced return variation.

A new value of $\overbar{T}_{t}^{n}$ is recorded at every 60th transaction update.\footnote{We set a burn-in period of a full volatility bandwidth before testing to allow for a sufficient number of observations to construct $\overbar{T}_{t}^{n}$.} We extract $T_{m}^{*} = \max_{i = 1, \ldots, m} | \overbar{T}_{t_{i}^*}^{n}|$ based on the resulting $m = 341$ tests in each sample. The simulation-based approach explained in Appendix \ref{appendix:critical-value} is adopted to find a critical value of $T_{m}^{*}$.

\begin{figure}
\caption{Q--Q plot of $\overbar{T}_{t}^{n}$ without drift burst.\label{figure:t-statistic}}
\begin{center}
\begin{tabular}{cc}
\small{Panel A: No burst.} & \small{Panel B: Volatility burst ($\beta = 0.4$).} \\
\includegraphics[height=0.4\textwidth,width=0.475\textwidth]{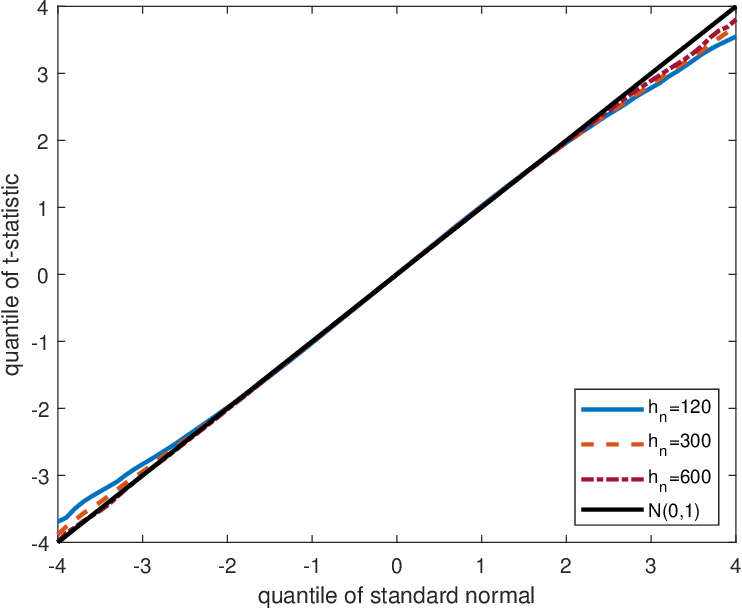} &
\includegraphics[height=0.4\textwidth,width=0.475\textwidth]{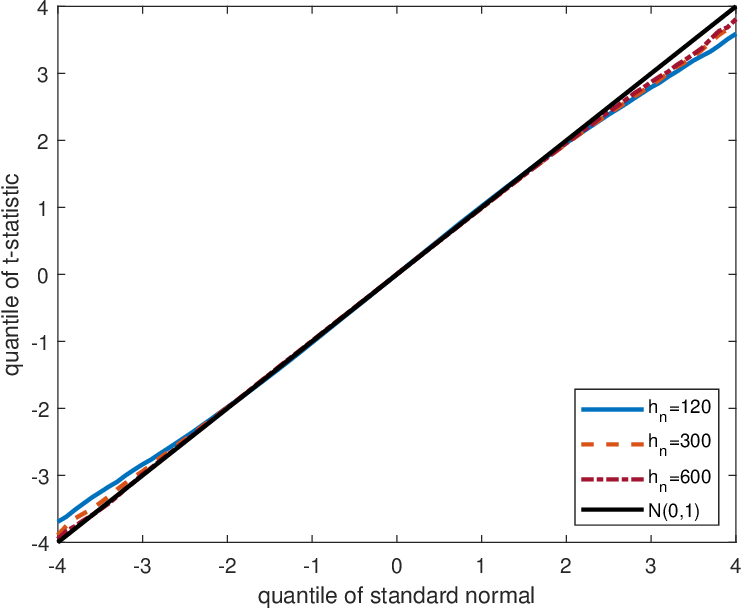} \\
\end{tabular}
\begin{scriptsize}
\parbox{\textwidth}{\emph{Note.} We present a Q-Q plot of $\overbar{T}_{t}^{n}$ under the null hypothesis of no drift burst. Panel A is from the \citet*{heston:93a}-type SV+jump model with no burst in neither drift nor volatility, while Panel B adds a volatility burst using the parametric model in Eq. \eqref{equation:parametric-model} with $b = 0.15$ and $\beta = 0.4$.}
\end{scriptsize}
\end{center}
\end{figure}

Figure \ref{figure:t-statistic} reports Q-Q plots of the distribution of $\overbar{T}_{t}^{n}$ in absence of a drift burst. Panel A is based on the \citet*{heston:93a}-type SV+jump model. As readily seen, the Gaussian curve is an accurate description of the sampling variation of $\overbar{T}_{t}^{n}$, although the $t$-statistic is slightly thin-tailed with $h_{n} = 120$. This is caused by a modest correlation between $\hat{ \overbar{ \mu}}_{t}^{n}$ and $\hat{ \overbar{ \sigma}}_{t}^{n}$, which are computed from partly overlapping data; an effect that is more pronounced for small bandwidths. In Panel B, the process featuring a large volatility burst ($\beta = 0.4$) is plotted.\footnote{The results for other values of $\beta$ fall in between those of Panel A and B and are therefore not reported.} While the volatility burst puts more mass into the tails of the distribution of $\overbar{T}_{t}^{n}$, the normal continues to be a good approximation.

\begin{table}[ht!]
\setlength{ \tabcolsep}{0.15cm}
\begin{center}
\caption{Size and power of drift burst $t$-statistic $T_{m}^{*}$ with $\gamma = 0.5$ and $k_{n} = 3$.
\label{table:sim-drift-burst-gamma=0.5-kn=3}}
\begin{small}
\begin{tabular}{llcrrrrrrrrrrrrrrrrrrr}
\hline \hline
& & \multicolumn{6}{c}{Pr$(T_{m}^{*} > q_{0.950})$} & & \multicolumn{6}{c}{Pr$(T_{m}^{*} > q_{0.990})$} & & \multicolumn{6}{c}{Pr$(T_{m}^{*} > q_{0.995})$} \\
\cline{3-8} \cline{10-15} \cline{17-22}
& & \multicolumn{1}{c}{vb (size)} & & \multicolumn{4}{c}{db (power)} & & \multicolumn{1}{c}{vb (size)} & & \multicolumn{4}{c}{db (power)} & & \multicolumn{1}{c}{vb (size)} & & \multicolumn{4}{c}{db (power)}\\
\cline{3-3} \cline{5-8} \cline{10-10} \cline{12-15} \cline{17-17} \cline{19-22}
& & \multicolumn{1}{c}{$\mu_{t}^{ \text{db}} \equiv 0$} & & $\alpha = $ & 0.55 & 0.65 & 0.75  & & \multicolumn{1}{c}{$\mu_{t}^{ \text{db}} \equiv 0$} & & $\alpha = $ & 0.55 & 0.65 & 0.75  & & \multicolumn{1}{c}{$\mu_{t}^{ \text{db}} \equiv 0$} & & $\alpha = $ & 0.55 & 0.65 & 0.75 \\
\hline
\multicolumn{3}{l}{Panel A: $h_{n} = 120$}\\
$\beta = $& 0.0$^{*}$ & 0.8 &&& 44.9 & 83.0 & 98.0 && 0.1 &&& 29.0 & 66.6 & 90.7 && 0.0 &&& 23.9 & 58.1 & 85.8\\
& 0.1 & 0.9 &&& 27.9 & 75.7 & 96.8 && 0.1 &&& 13.8 & 53.4 & 87.6 && 0.0 &&& 9.9 & 44.4 & 80.6\\
& 0.2 & 1.0 &&& 18.2 & 66.4 & 94.9 && 0.3 &&& 6.4 & 44.1 & 84.4 && 0.0 &&& 4.2 & 35.2 & 75.9\\
& 0.3 & 0.9 &&& 9.5 & 48.9 & 92.1 && 0.3 &&& 2.1 & 26.9 & 76.7 && 0.1 &&& 1.2 & 19.1 & 64.0\\
& 0.4 & 0.9 &&& 5.4 & 25.3 & 80.5 && 0.2 &&& 0.8 & 8.2 & 51.2 && 0.0 &&& 0.4 & 4.6 & 38.6\\
\\
\multicolumn{3}{l}{Panel B: $h_{n} = 300$}\\
$\beta = $& 0.0$^{*}$ & 2.1 &&& 51.6 & 86.7 & 98.9 && 0.6 &&& 39.7 & 78.8 & 96.9 && 0.3 &&& 33.6 & 74.5 & 95.7\\
& 0.1 & 2.0 &&& 41.8 & 82.4 & 98.5 && 0.6 &&& 29.2 & 71.8 & 95.8 && 0.3 &&& 24.8 & 67.1 & 93.7\\
& 0.2 & 2.1 &&& 36.0 & 78.3 & 98.1 && 0.5 &&& 21.9 & 66.3 & 94.1 && 0.3 &&& 18.5 & 60.1 & 92.5\\
& 0.3 & 2.6 &&& 26.1 & 70.3 & 96.7 && 0.7 &&& 13.2 & 54.4 & 92.4 && 0.3 &&& 10.5 & 48.9 & 90.2\\
& 0.4 & 4.1 &&& 16.7 & 52.5 & 93.4 && 0.9 &&& 7.9 & 34.6 & 84.4 && 0.5 &&& 5.5 & 27.9 & 79.2\\
\\
\multicolumn{3}{l}{Panel C: $h_{n} = 600$}\\
$\beta = $& 0.0$^{*}$ & 4.6 &&& 46.4 & 82.6 & 98.0 && 1.0 &&& 33.8 & 73.3 & 95.6 && 0.4 &&& 30.0 & 68.8 & 94.5\\
& 0.1 & 4.2 &&& 40.1 & 79.4 & 97.6 && 1.0 &&& 28.6 & 68.6 & 94.6 && 0.3 &&& 25.2 & 63.7 & 93.1\\
& 0.2 & 4.4 &&& 36.4 & 77.1 & 97.2 && 1.0 &&& 24.6 & 64.7 & 93.8 && 0.2 &&& 21.5 & 60.2 & 92.6\\
& 0.3 & 4.7 &&& 30.3 & 71.1 & 96.0 && 1.1 &&& 19.0 & 57.4 & 92.2 && 0.3 &&& 14.0 & 52.2 & 90.1\\
& 0.4 & 5.8 &&& 23.2 & 57.9 & 93.9 && 1.8 &&& 11.9 & 42.6 & 87.6 && 0.9 &&& 9.2 & 37.3 & 83.6\\
\hline \hline
\end{tabular}
\end{small}\smallskip
\begin{scriptsize}
\parbox{\textwidth}{\emph{Note.} Pr$(T_{m}^{*} > q_{1 - c})$ is the rejection rate (in percent, across Monte Carlo replications) of the drift burst $t$-statistic $T_{m}^{*}$ defined in Eq. \eqref{equation:maxabsTn}, where $q_{1 - c}$ is a simulated $(1 - c)$-level quantile from the finite sample extreme value distribution of $T_{m}^{*}$ under the null of no drift burst, as explained in Appendix \ref{appendix:critical-value}. $\alpha$ is the explosion rate of the drift burst (db), while $\beta$ is the explosion rate of the volatility burst (vb). *$\beta = 0.0$ represents the \cite{heston:93a}-type SV+jump model with no volatility burst. $h_{n}$ is the bandwidth of $\hat{ \mu}_{t}^{n}$ (measured as effective sample size), while the bandwidth of $\hat{ \Sigma}_{t}^{n}$ is 5$h_{n}$. $\gamma$ is the level of noise-to-volatility (per increment) and $k_{n}$ is the pre-averaging horizon.}
\end{scriptsize}
\end{center}
\end{table}

This is corroborated by Table \ref{table:sim-drift-burst-gamma=0.5-kn=3}, where we compute how often $T_{m}^{*}$ leads to rejection of the null hypothesis of no drift burst for three significance levels $c = 5\%, 1\%, 0.5\%$. There are several interesting findings. Look at the columns with $\mu_{t}^{ \text{db}} \equiv 0$, which report the results in absence of a drift burst (i.e., size). Without a volatility burst ($\beta = 0.0$), the test is conservative compared to the nominal level if $h_{n}$ is small, as also reflected in Figure \ref{figure:t-statistic}. As $\beta$ increases, $T_{m}^{*}$ is mildly inflated yielding a tiny size distortion, but the effect is only present for large $\beta$ and $h_{n}$. Otherwise, the test is roughly unbiased. This suggests our $t$-statistic is adaptive and highly robust to even substantial shifts in spot variance, so that we do not falsely pick up an explosion in volatility as a significant drift burst.

Turn next to the alternative with a drift burst (i.e., power). As expected, the power is increasing in $\alpha$, holding $\beta$ fixed, while it is decreasing in $\beta$, holding $\alpha$ fixed. In general, the test has decent power and is capable of identifying a true explosion in the drift coefficient, except those causing a minuscule cumulative log-return and that are coupled with a large volatility burst. While the test has excellent ability to discover the largest drift bursts, which from a practical point of view are arguably also the most important, it is intriguing that we can uncover many of the smaller ones as well. At last, higher values of $h_{n}$ improve the rejection rate under the alternative, but the marginal gain of going from $h_{n} = 300$ to $h_{n} = 600$ is negligible. This suggests -- on the one hand -- that $h_{n}$ should not be too narrow, as it erodes the power, while -- on the other -- it should neither be too wide, as this creates a small size distortion.

\section{Drift bursts in financial markets} \label{section:empirical}

\subsection{Data}

\begin{table}[t!]
\setlength{\tabcolsep}{0.20cm}
\begin{center}
\caption{CME futures data summary statistics. \label{Table:summaryStatsCME}}
\smallskip
\begin{tabular}{llccccccc}
\hline
					& & 		&\multicolumn{2}{c}{volume} & \# quote & inside &\multicolumn{2}{c}{sub-sample retained} \\
\cline{4-5}\cline{8-9}
code & name		 	& \# days & \# contracts & notional (\$bn) & updates & spread (bps) & by volume & by quotes\\
\hline
6E & Euro FX        & 1,536 &   209,281 &  31.9 & 62,357 & 0.69 & 92.87\% & 87.25\%\\
CL & Crude oil      & 1,545 &   299,821 &  18.8 & 69,580 & 1.70 & 95.58\% & 92.28\%\\
ES & E-mini S\&P500 & 2,053 & 1,763,100 & 145.5 & 27,513 & 1.54 & 97.33\% & 91.94\%\\
GC & Gold           & 1,544 &   162,056 &  22.0 & 56,562 & 0.97 & 88.14\% & 84.59\%\\
ZC & Corn           & 1,528 &   112,612 &   2.6 &  4,870 & 5.89 & 87.23\% & 71.18\%\\
ZN & 10-Year T-Note & 1,542 & 1,121,122 & 112.1 &  6,372 & 1.22 & 94.33\% & 89.10\%\\
\hline
\end{tabular}
\smallskip
\begin{scriptsize}
\parbox{\textwidth}{\emph{Note}. This table reports for each futures contract, the number of days in the sample, the average daily volume by number of contracts and notional traded, the average daily number of top-of-book quote updates, and the average daily median spread in basis points calculated from 09:00 -- 10:00 Chicago time. The sample period is January 2012 -- December 2017 for all contracts, except ES where we start in January 2010. In the empirical analysis, we restrict attention to the most active trading hours from 01:00 -- 15:15 Chicago time for all contracts, except for ZC where the interval is restricted to 08:30 -- 13:20 Chicago time. The fraction of volume and quote updates retained after removing the most illiquid parts of the day is reported in the last two columns.}
\end{scriptsize}
\end{center}
\end{table}

We apply the drift burst test statistic developed above to a comprehensive set of intraday tick data, covering a broad range of financial assets. We use trades and quotes with milli-second precision timestamps for futures contracts traded on the Chicago Mercantile Exchange (CME). We select the most actively traded futures contract for each of the main assets classes, namely the Euro FX for currencies (6E), Crude oil for energy (CL), the E-mini S\&P500 for equities (ES), Gold for precious metals (GC), Corn for agricultural commodities (ZC), and the 10-Year Treasury Note for rates (ZN). These futures contracts are amongst the most liquid financial instruments in the world. To illustrate, the average daily notional volume traded in just a single ES contract on the CME is comparable to the trading volume of the entire U.S. cash equity market covering over 5,000 stocks traded across more than ten different exchanges.\footnote{See \url{https://batstrading.com/market_summary/} for daily U.S. equity market volume statistics.} The sample period is January 2012 -- December 2017,  except we backdate ES to January 2010 in order to capture the May 2010 flash crash. While the CME is open nearly all day, we restrict attention to the more liquid European and U.S. trading sessions: from 01:00 -- 15:15 Chicago time or 07:00 -- 21:15 London time. The only exception is Corn, where we use data from 08:30 -- 13:20 Chicago time. Outside of these hours, trading is minimal in this contract. Table \ref{Table:summaryStatsCME} provides informative summary statistics of the data.

\subsection{Implementation of test}

We construct for each series a mid-quote as the average of the best bid and offer available at any point in time. A quote update is retained if the mid-quote changes. The remaining data are pre-averaged with $k_{n} = 3$.\footnote{The Online Appendix contains the empirical analysis with $k_{n} = 1$ (i.e., no pre-averaging), 5 and 10. The results are broadly speaking in line with those reported here.} We then calculate the drift burst test statistic on a regular five-second grid and only include values that are preceded by a mid-quote revision. As our primary interest is to identify short-lived drift bursts, we continue with a five-minute bandwidth for the drift. We base the spot volatility on a 25-minute bandwidth with the Parzen kernel and $L_{n} = 2(k_{n}-1) + 10$ lags for the HAC-correction. As in the simulations, a left-sided exponential kernel is adopted: $K(x) = \exp(-|x|)$, for $x \leq 0$.\footnote{In practice, a backward-looking kernel is preferred to allow for real-time updating. Moreover, it enchances the power for testing the drift burst hypothesis. The intuition is that after a drift burst, the pre-dominant tendency of the price to reverse combined with a high and persistent level of volatility, delivers a lower value of the test statistic if a two-sided kernel is employed.}

\begin{table}[t!]
\setlength{\tabcolsep}{0.375cm}
\begin{center}
\caption{Drift burst test summary statistics. \label{Table:dbTest}}
\smallskip
\begin{tabular}{llcccrrrrrr}
\hline
	&& \multicolumn{3}{c}{empirical distribution} & & \multicolumn{5}{c}{\# of identified drift bursts} \\
\cline{3-5}\cline{7-11}
code & name & $\sigma$  & $\sigma_{q}$ & kurtosis & & $|T|>4.0$ & >4.5 & >5.0 & >5.5 & >6.0\\
\hline
6E & Euro FX         &   1.04 &   1.05 &    3.6 && 1329 & 615 & 292 & 145 &  64\\
CL & Crude Oil       &   1.04 &   1.07 &    3.8 && 1606 & 667 & 300 & 127 &  51\\
ES & E-mini S\&P500  &   1.07 &   1.07 &    3.1 &&  549 & 202 &  78 &  29 &  11\\
GC & Gold            &   1.02 &   1.04 &    3.7 && 1324 & 542 & 227 &  83 &  35\\
ZC & Corn            &   1.15 &   1.13 &    2.9 &&   97 &  29 &  15 &   5 &   2\\
ZN & 10-Year T-Note  &   1.22 &   1.10 &    2.4 &&  105 &  46 &  21 &  10 &   4\\
\hline
\end{tabular}
\smallskip
\begin{scriptsize}
\parbox{\textwidth}{\emph{Note}. This table reports for each futures contract, the standard deviation and kurtosis of the empirical drift burst $t$-statistic. We calculate the test every five seconds across the sample if there was a mid-quote update over that interval. The standard deviation is also calculated by rescaling the 5/95-percentile of the empirical distribution by that of a standard normal (``$\sigma_q$''). The number of drift bursts identified for critical values ranging between 4 and 6 is reported. The number of false positives we expect, which can be computed using the technique described in Appendix \ref{appendix:critical-value}, is virtually zero.}
\end{scriptsize}
\end{center}
\end{table}

Table \ref{Table:dbTest} reports selected descriptive measures of the calculated drift burst $t$-statistics over the full sample. Judging by the standard deviation and kurtosis of the test statistic, the distribution is close to standard normal, as consistent with the asymptotic theory under the null of no drift burst. This is remarkable, because the test is applied to relatively short intraday intervals across a wide range of asset classes and is therefore exposed to substantial changes in liquidity conditions, diurnal effects, or to futures contracts where the minimum price increment -- and hence the microstructure noise -- is relatively large (e.g. ZN). Concentrating on the tails, we identify a large number of drift bursts.\footnote{To account for the rolling calculation of the test statistic and avoid double counting of events, we allow for at most one drift burst to be established over any five-minute window at which the test statistic attains a local extremum and exceeds a set critical value.}  At a critical value of 4.5, for instance, there are 202 drift bursts in the E-mini S\&P500 futures, or about one every two weeks. They are more prevalent in Euro FX, Gold, and Oil contracts but less frequent in the Treasury and Corn futures. The number of expected false positives (computed as described in Appendix \ref{appendix:critical-value}) is practically zero.

\begin{figure}[t!]
\setlength{\tabcolsep}{0.1cm}
\caption{Time series of drift bursts.\label{figure:timeseries}}
\begin{center}
\vskip-0.4cm
\begin{tabular}{cc}
Panel A: Distribution over time. & Panel B: Monthly count. \\
\includegraphics[height=0.4\textwidth,width=0.475\textwidth]{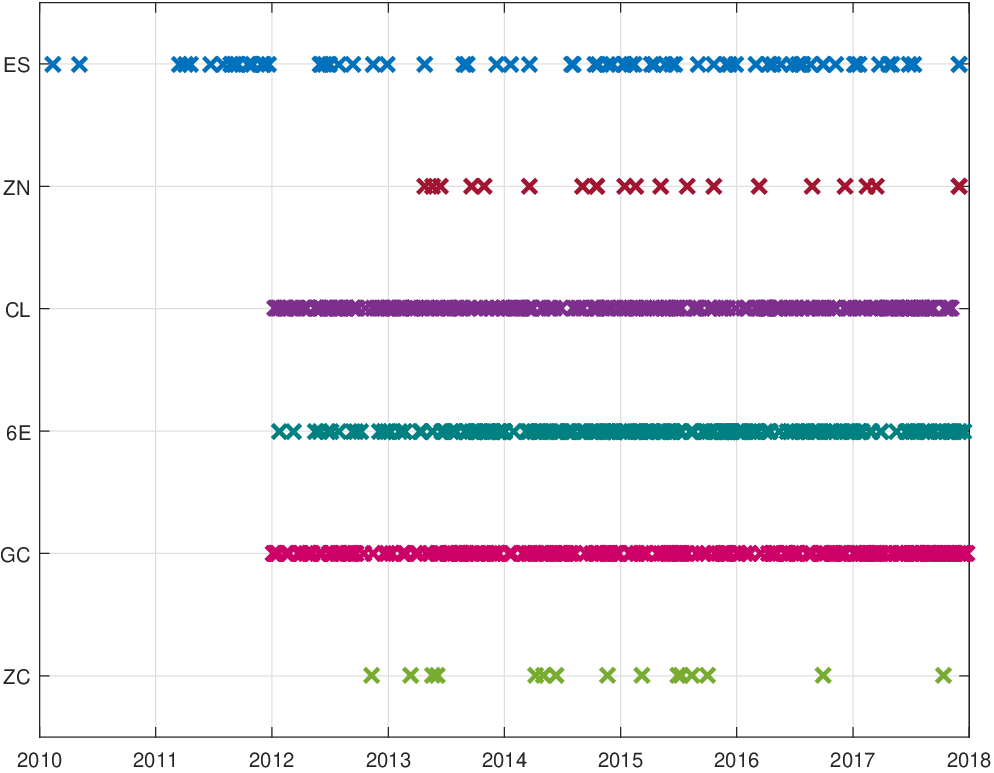} &
\includegraphics[height=0.4\textwidth,width=0.475\textwidth]{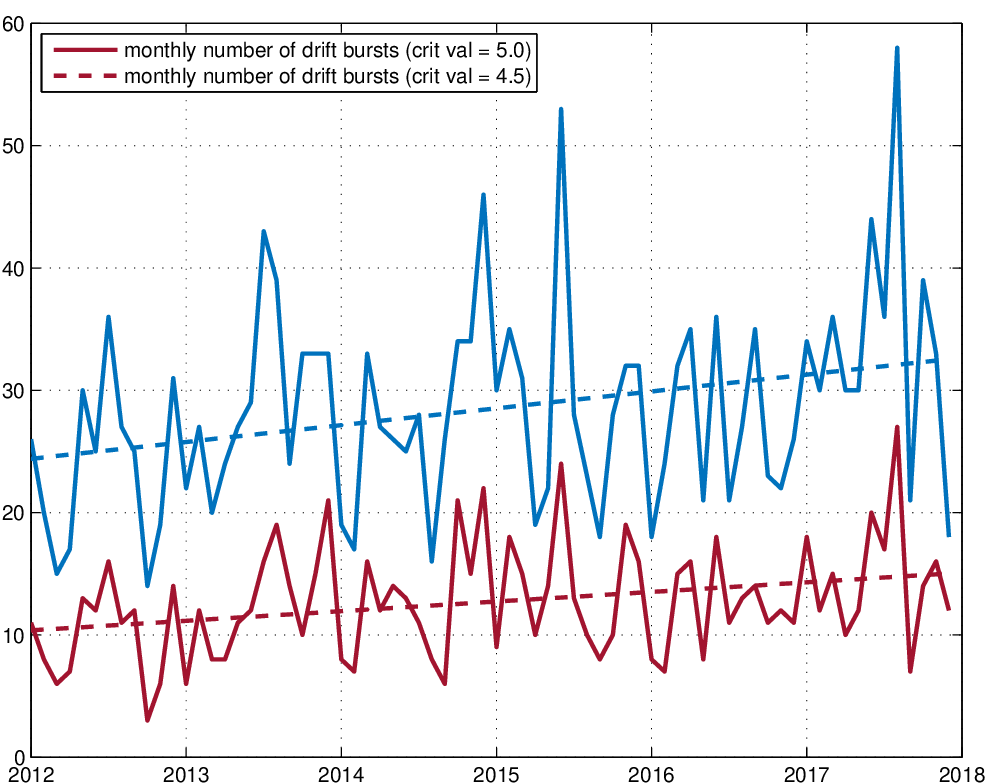} \\
\end{tabular}
\begin{scriptsize}
\parbox{\textwidth}{\emph{Note.} In Panel A, we plot the cross-sectional distribution of drifts bursts across asset classes and over time (each cross represents a day with $|T_{m}^{*}| > 5.0$), while Panel B shows the associated number of significant daily events aggregated to a monthly level.}
\end{scriptsize}
\end{center}
\end{figure}

Panel A of Figure \ref{figure:timeseries} indicates the location of drift bursts for the various securities, while Panel B reports the pooled monthly counts. The results lend support to the perception that flash crashes are increasing over time. A time trend is positive for each asset but lacks statistical significance for some due to the limited number of observations. The pooled results, however, indicate a statistically significant time trend of 5--10\% per annum in the number of identified drift bursts with a regression $t$-statistic of 2.5 for both scenarios drawn in Panel B.

\begin{figure}[t!]
\setlength{\tabcolsep}{0.1cm}
\caption{Drift burst examples. \label{figure:example}}
\begin{center}
\vskip-0.15cm
\begin{tabular}{cc}
Panel A: Euro FX (May 2, 2016). & Panel B: Gold (Mar 26, 2015). \\
\includegraphics[height=0.4\textwidth,width=0.475\textwidth]{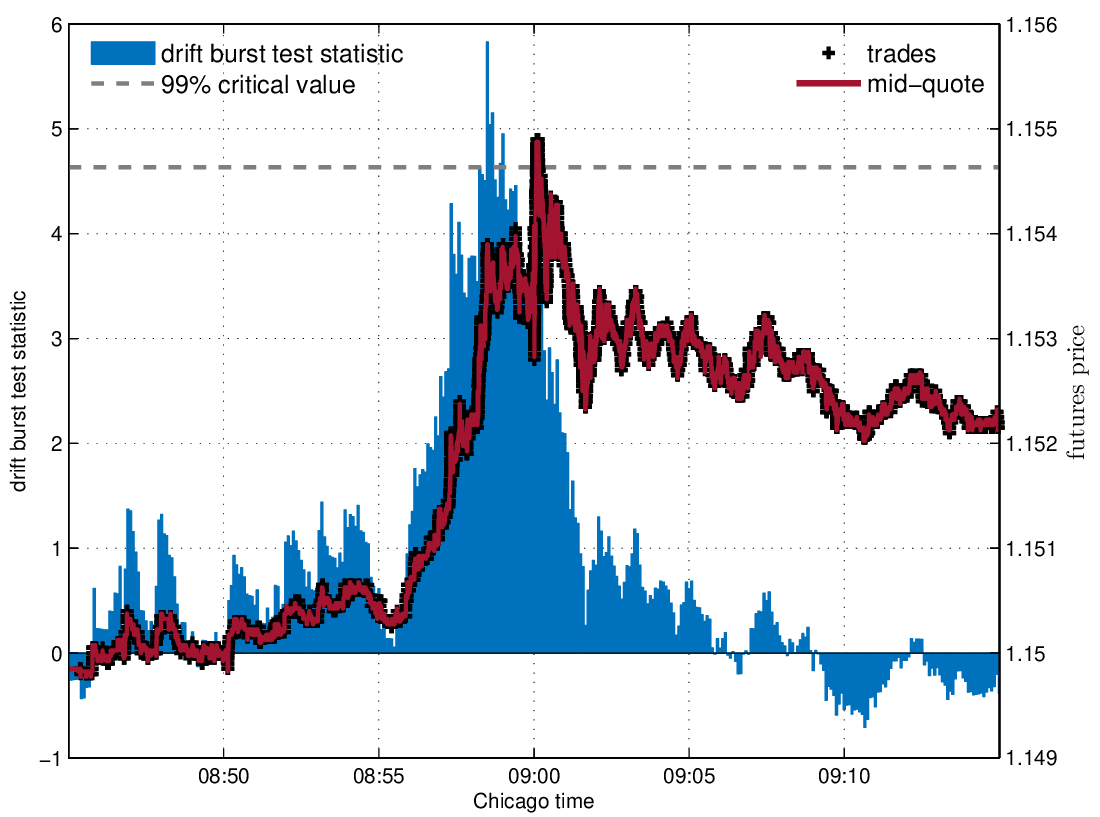} &
\includegraphics[height=0.4\textwidth,width=0.475\textwidth]{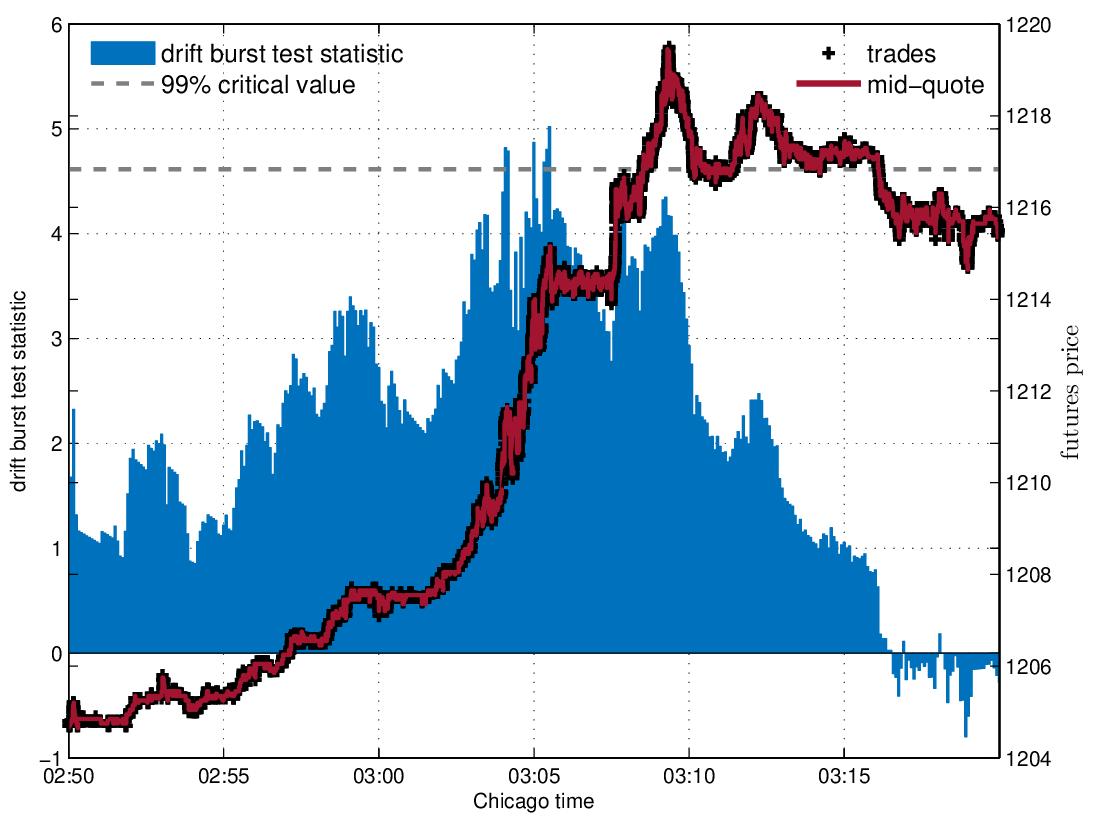}\\
Panel C: Crude oil (Sep 17, 2012). & Panel D: Corn (Mar 9, 2012). \\
\includegraphics[height=0.4\textwidth,width=0.475\textwidth]{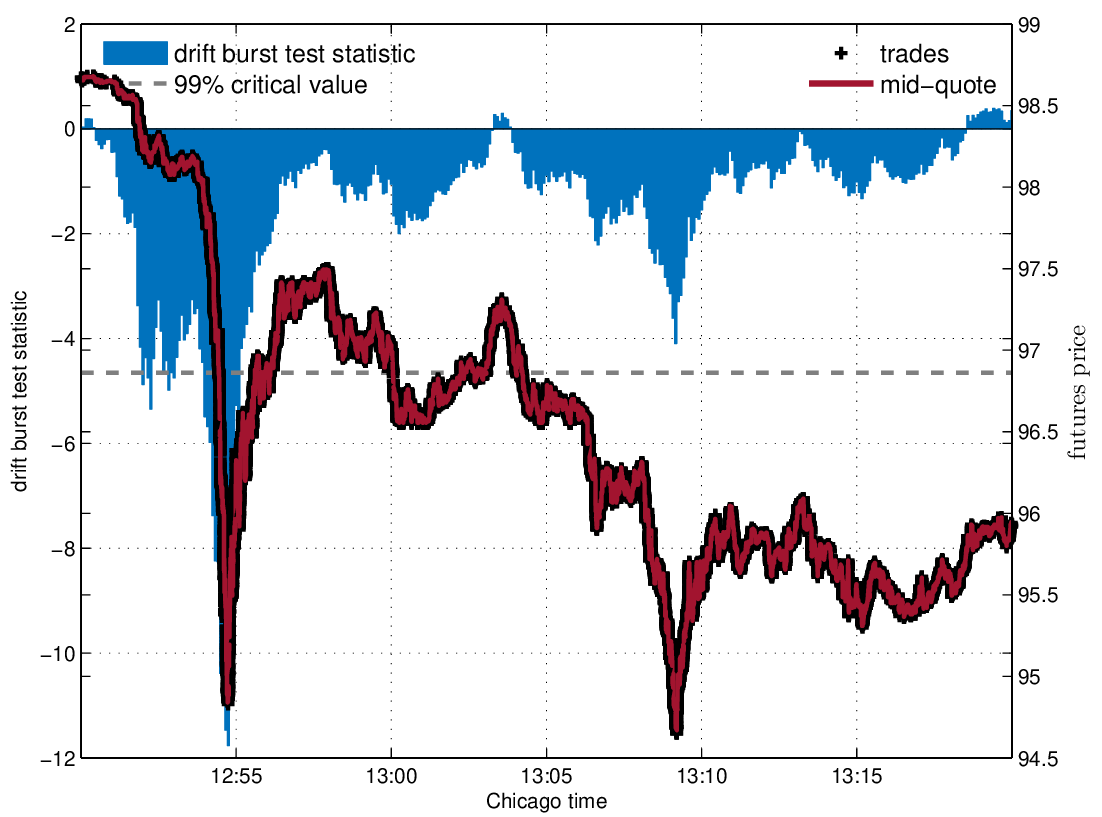} &
\includegraphics[height=0.4\textwidth,width=0.475\textwidth]{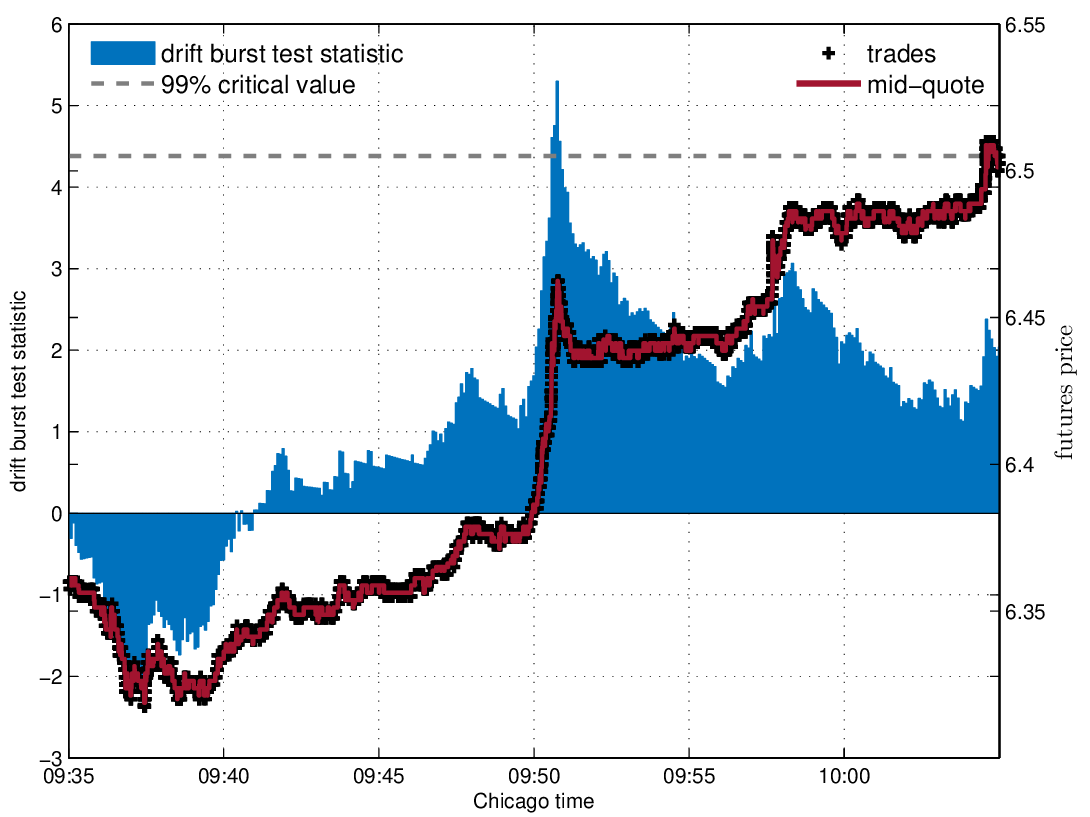}\\
\end{tabular}
\begin{scriptsize}
\parbox{0.95\textwidth}{\emph{Note}. This figure draws, for some identified drift bursts, the sample path of the mid-quote and traded price (right axis) together with the $t$-statistic (left axis) over a 30-minute window that includes the peak of the drift burst.}
\end{scriptsize}
\end{center}
\end{figure}

Figure \ref{figure:example} shows some examples of single-asset drift bursts identified by the test. A multi-asset drift burst is presented in Figure \ref{figure:hoax}, which plots the evolution of the six asset prices during the Twitter hoax flash crash of April 23, 2013. The figures show that drift bursts are a stylized feature of the price process and, in some instances, systemic to the market. It is evident that neither jumps nor volatility are driving the price dynamics observed in these examples. The drift burst hypothesis is a plausible alternative to model the data.

\begin{figure}[t!]
\setlength{\tabcolsep}{0.1cm}
\caption{The Twitter hoax of April 23, 2013: A synchronous flash event. \label{figure:hoax}}
\begin{center}
\begin{tabular}{cc}
Panel A: E-mini S\&P500. & Panel B: 10-Year T-Note. \\
\includegraphics[height=0.29\textwidth,width=0.475\textwidth]{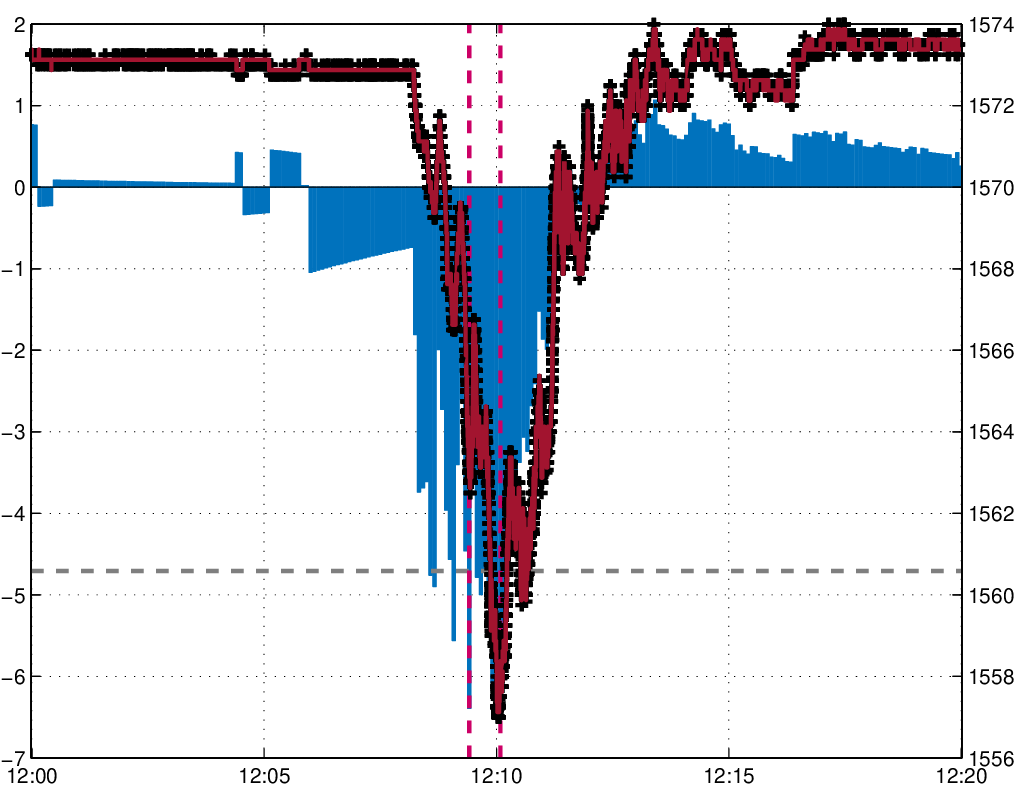} &
\includegraphics[height=0.29\textwidth,width=0.475\textwidth]{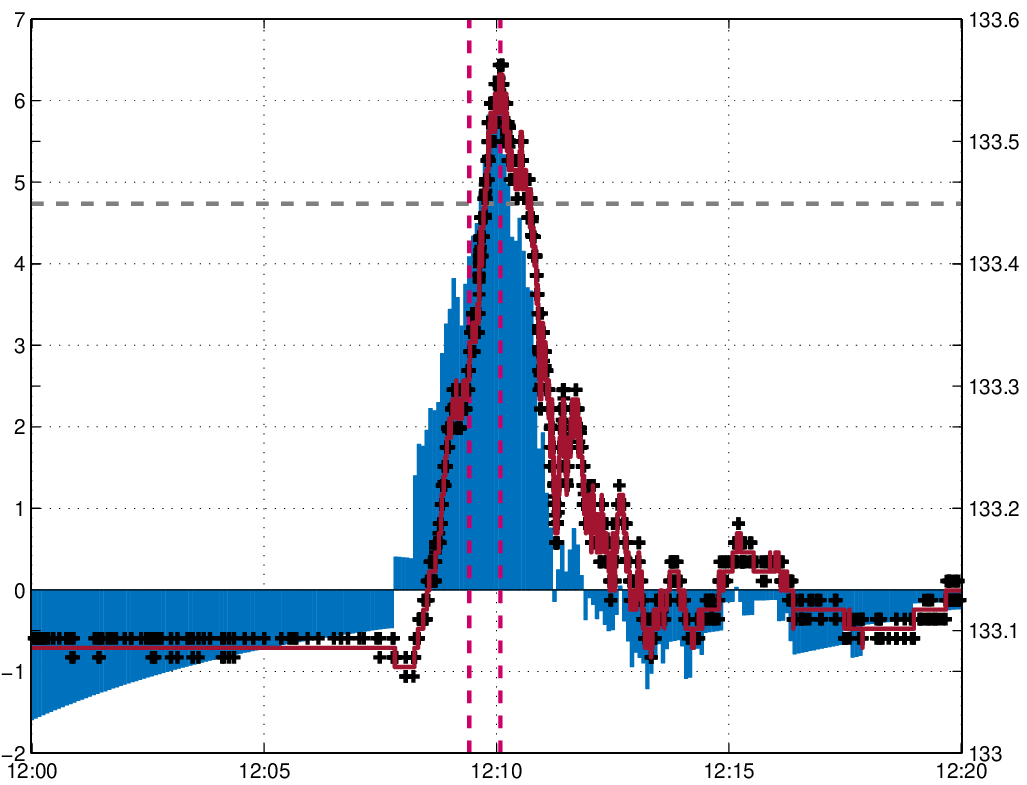} \\
 Panel C: Euro FX. & Panel D: Gold. \\
\includegraphics[height=0.29\textwidth,width=0.475\textwidth]{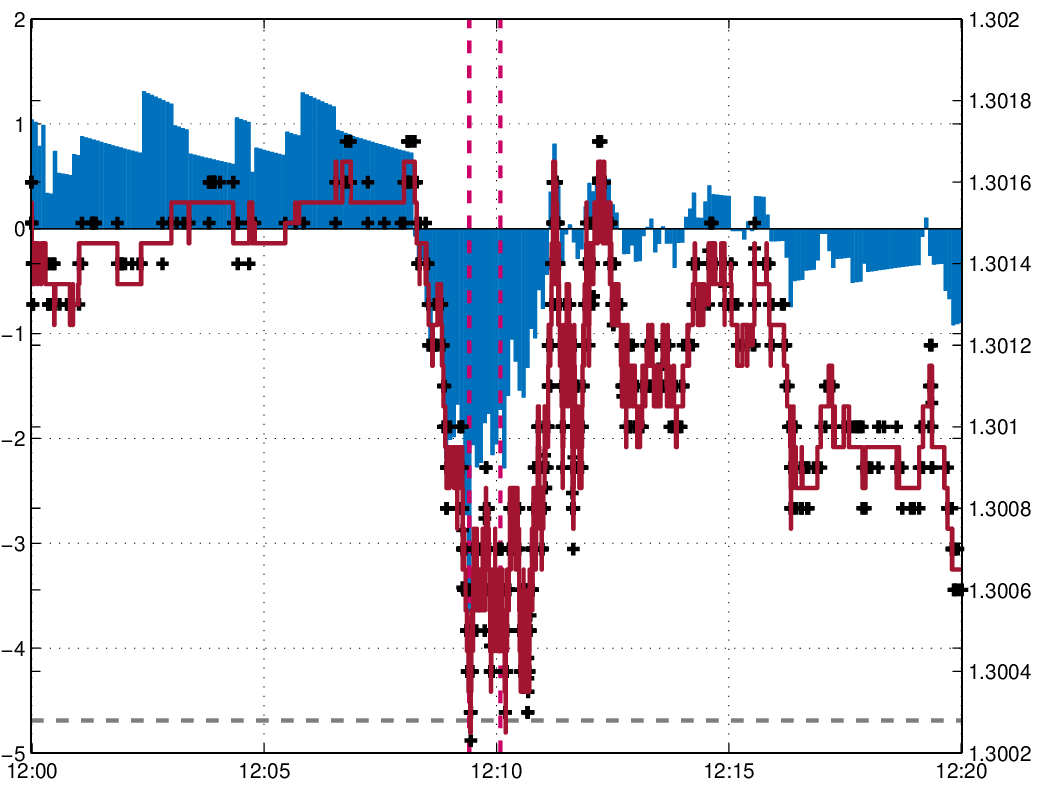} &
\includegraphics[height=0.29\textwidth,width=0.475\textwidth]{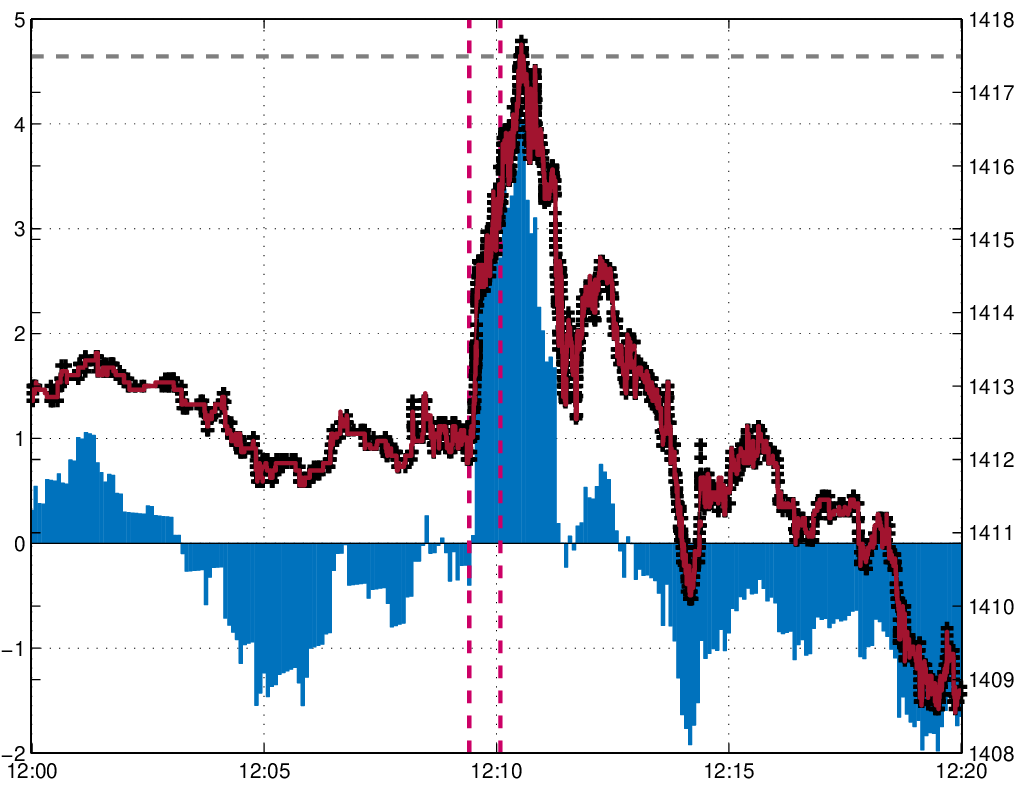} \\
 Panel E: Crude oil. & Panel F: Corn. \\
\includegraphics[height=0.29\textwidth,width=0.475\textwidth]{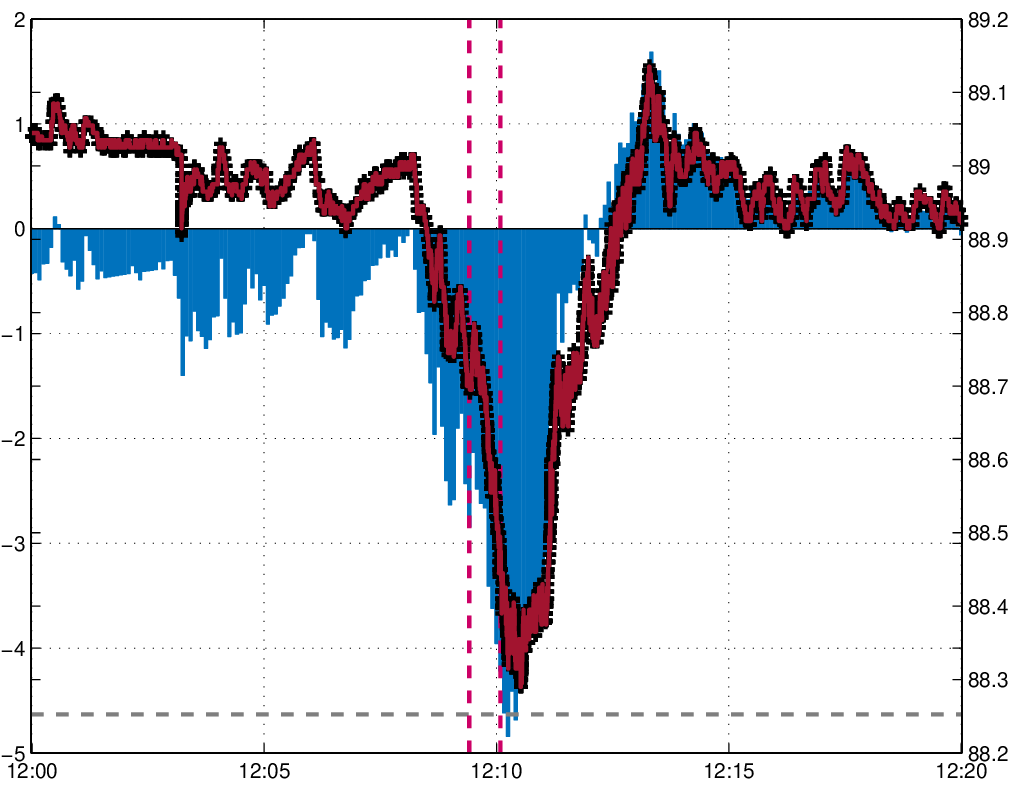} &
\includegraphics[height=0.29\textwidth,width=0.475\textwidth]{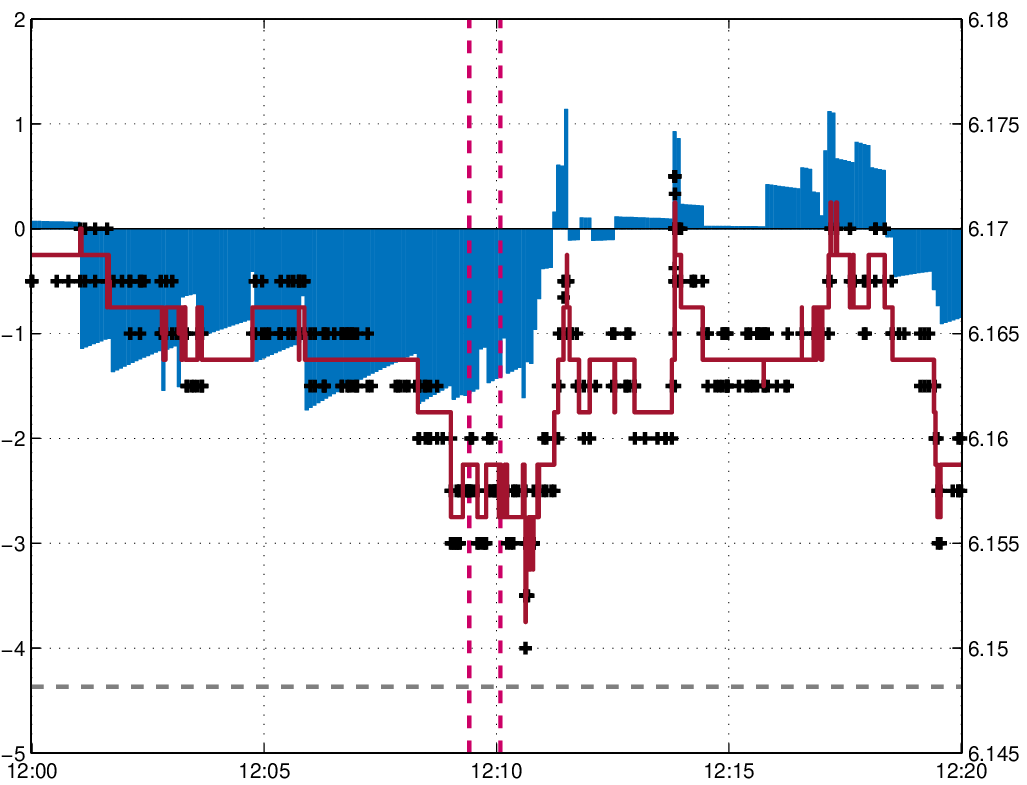} \\
\end{tabular}
\begin{scriptsize}
\parbox{0.95\textwidth}{\emph{Note}. This figure draws the mid-quote and traded price (right axis) together with the $t$-statistic (left axis) for our six asset classes over a 20-minute window around the Twitter flash crash of April 23, 2013. At 12:07 p.m. Chicago time, Associated Press' Twitter account was hacked and a fake news was released, suggesting there had been explosions in the White House and that President Obama was injured. This led to perilous but short-lived sell-offs in the S\&P500, Euro FX and Crude oil futures contracts, while the 10-Year T-Note and Gold contracts rallied. Only Corn was largely unaffected.}
\end{scriptsize}
\end{center}
\end{figure}

\subsection{Reversion after the drift burst} \label{section:reversion}

We now investigate in more detail whether the mean reversion experienced during the flash crashes in Figure \ref{figure:flash_crash} and Figure \ref{figure:hoax} is more broadly associated with the price dynamics of a drift burst. Let $\{ t_{j} \}$ denote the set of time points, where drift bursts are identified. The start of a drift burst is set to five minutes before $t_{j}$.\footnote{The 5-minute frequency is arbitrary, but often used in practice and consistent with our bandwidth. As a robustness check, we also applied an endogenous event window, where the duration of the drift burst is defined relative to the latest point in time prior to $t_{j}$, where the absolute value of the $t$-statistic is below one. The results are in line with those we report here and are available at request.} We sample the mid-quote process at this frequency pre- and post-drift burst and calculate:
\begin{equation} \label{equation:dr}
R_{t_{j}}^{-} = X_{t_{j}} - X_{t_{j} - 5\text{m}} \qquad \text{and} \qquad R_{t_{j}}^{+} = X_{t_{j} + 5\text{m}} - X_{t_{j}},
\end{equation}
where $R_{t_{j}}^{-}$ is the five-minute log-return during the $j$th drift burst, while $R_{t_{j}}^{+}$ is the corresponding post-drift burst log-return. In Figure \ref{figure:reversion}, we plot $R_{t_{j}}^{+}$ against $R_{t_{j}}^{-}$ pooled across the various asset markets. We observe that drift bursts can be associated with both positive and negative returns, but most of them are reversals and the percentage of ``gradual jumps'' is roughly one third.

To gauge the magnitude of the reversion and evaluate whether drift bursts are subject to short-term return predictability, we run the following regression:
\begin{equation}\label{eqn:revreg}
R_{t_{j}}^{+} = a + b R_{t_{j}}^{-} + \epsilon_{t_{j}}.
\end{equation}
A value of $b$ different from zero indicates predictability conditional on a drift burst, and $b<0$ means a drift burst tends to be followed by a retracement of the price.

\begin{figure}[t!]
\centering
\caption{Reversion in the drift burst. \label{figure:reversion}}
\includegraphics[width=0.475\textwidth]{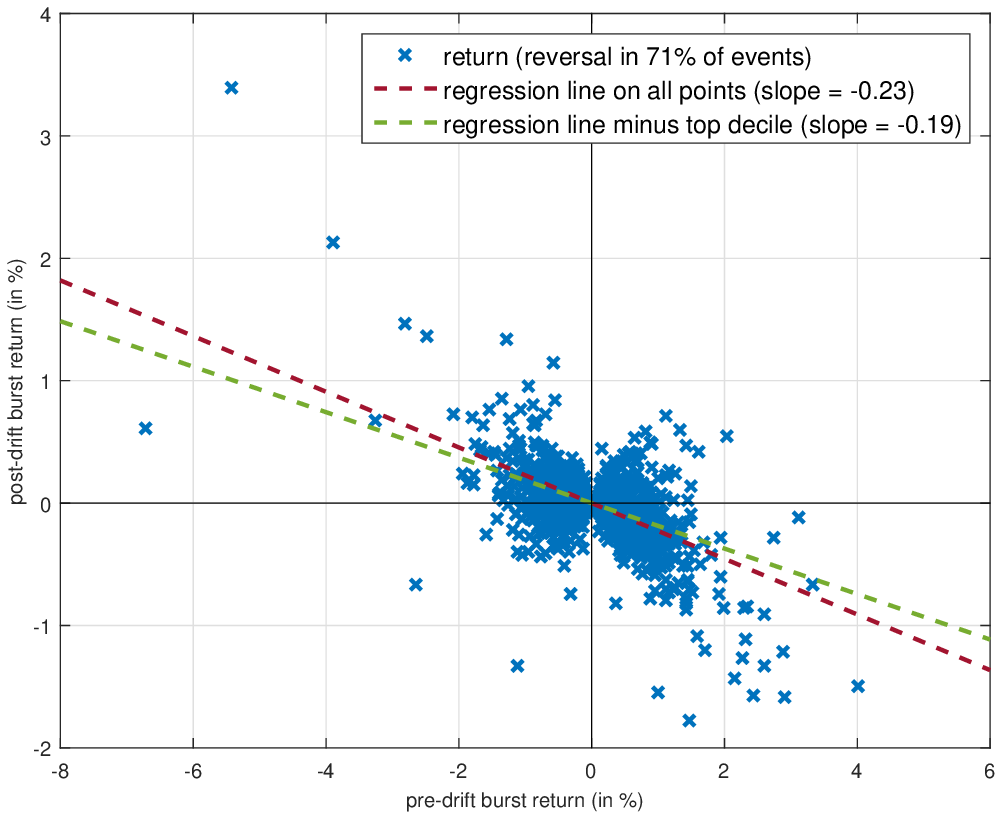}
\begin{scriptsize}
\parbox{0.75\textwidth}{\emph{Note}. This figure draws on the horizontal axis the percentage price move from the start to the peak of a drift burst against the subsequent price move (as defined by $R_{t_{j}}^{+}$ and $R_{t_{j}}^{-}$ in Eq. \eqref{equation:dr}) on the vertical axis. The horizon is five minutes. The number of drift bursts in each quadrant are Q1: 324, Q2: 670, Q3: 279 and Q4: 828.}
\end{scriptsize}
\end{figure}

In Table \ref{Table:reversion}, we show the regression results for each futures contract separately and with all assets pooled together. We find a strong mean reversion with estimates of $b$ that are negative and highly significant (the intercept estimate is insignificant and omitted). This is consistent across asset classes, also for those with a relatively small number of observations. The regression $R^{2}$ tends to lie in the 25\% - 40\% range, with a few exceptions, and is about 30\% on average. This suggests a substantial predictive power of the pre-drift burst return to forecast the post-drift burst return. The fraction of reversals, as measured by counting the relative occurrence where $R_{t_{j}}^{+}$ and $R_{t_{j}}^{-}$ are of opposite sign, rarely drops below 65\%. As a robustness check, we remove the top decile of the most significant drift bursts and rerun the regression. The outcome is in the right-hand side of the table. As expected, the regression $R^{2}$ drops but the finding remains overly evident in the data with frequent reversals and significant mean reversion. The critical value is set to 4.0 and 4.5 here, but the results do not change much for larger and more conservative values.

\begin{table}[t!]
\setlength{\tabcolsep}{0.45cm}
\begin{center}
\caption{Reversion in the drift burst. \label{Table:reversion}}
\smallskip
\begin{tabular}{lcrccclrccc}
\hline
&& \multicolumn{4}{c}{all} && \multicolumn{4}{c}{without top decile} \\
\cline{3-6} \cline{8-11}
code && \# & $b$ & $R^{2}$ & $\%R$ && \# & $b$ & $R^{2}$ & $\%R$ \\
\hline
\multicolumn{11}{l}{\emph{Panel A: Critical value = 4.0}}\\
ES  &&  549 & $\underset{(-12.05)}{ -0.26}$& $ 21.0\%$& $ 67.2\%$ &&  494 & $\underset{( -5.59)}{ -0.14}$& $  6.0\%$& $ 66.8\%$\\
ZN  &&  105 & $\underset{( -5.30)}{ -0.25}$& $ 21.3\%$& $ 65.7\%$ &&   95 & $\underset{( -3.09)}{ -0.16}$& $  9.2\%$& $ 64.2\%$\\
6E  && 1329 & $\underset{(-21.05)}{ -0.22}$& $ 25.0\%$& $ 68.2\%$ && 1196 & $\underset{(-19.64)}{ -0.21}$& $ 24.4\%$& $ 68.1\%$\\
GC  && 1324 & $\underset{(-20.65)}{ -0.19}$& $ 24.4\%$& $ 78.1\%$ && 1192 & $\underset{(-18.36)}{ -0.18}$& $ 22.1\%$& $ 78.2\%$\\
CL  && 1606 & $\underset{(-29.33)}{ -0.21}$& $ 34.9\%$& $ 71.3\%$ && 1445 & $\underset{(-24.30)}{ -0.18}$& $ 29.0\%$& $ 70.6\%$\\
ZC  &&   97 & $\underset{( -8.25)}{ -0.20}$& $ 41.5\%$& $ 76.3\%$ &&   87 & $\underset{( -8.05)}{ -0.21}$& $ 43.0\%$& $ 77.0\%$\\
ALL && 5010 & $\underset{(-46.62)}{ -0.21}$& $ 30.3\%$& $ 71.8\%$ && 4509 & $\underset{(-38.40)}{ -0.18}$& $ 24.6\%$& $ 71.6\%$\\
\\
\multicolumn{11}{l}{\emph{Panel B: Critical value = 4.5}}\\
ES  &&  202 & $\underset{(-14.45)}{ -0.36}$& $ 51.0\%$& $ 70.8\%$ &&  182 & $\underset{( -8.12)}{ -0.20}$& $ 26.7\%$& $ 69.8\%$\\
ZN  &&   46 & $\underset{( -3.89)}{ -0.29}$& $ 25.1\%$& $ 71.7\%$ &&   41 & $\underset{( -1.76)}{ -0.13}$& $  7.2\%$& $ 70.7\%$\\
6E  &&  615 & $\underset{(-15.65)}{ -0.23}$& $ 28.5\%$& $ 66.3\%$ &&  554 & $\underset{(-15.31)}{ -0.23}$& $ 29.8\%$& $ 65.0\%$\\
GC  &&  542 & $\underset{(-11.86)}{ -0.17}$& $ 20.6\%$& $ 77.9\%$ &&  488 & $\underset{( -9.57)}{ -0.15}$& $ 15.8\%$& $ 77.3\%$\\
CL  &&  667 & $\underset{(-20.51)}{ -0.23}$& $ 38.7\%$& $ 70.0\%$ &&  600 & $\underset{(-16.02)}{ -0.20}$& $ 30.0\%$& $ 68.7\%$\\
ZC  &&   29 & $\underset{( -4.61)}{ -0.16}$& $ 43.1\%$& $ 86.2\%$ &&   26 & $\underset{( -3.98)}{ -0.16}$& $ 38.8\%$& $ 84.6\%$\\
ALL && 2101 & $\underset{(-34.40)}{ -0.23}$& $ 36.0\%$& $ 71.3\%$ && 1891 & $\underset{(-26.91)}{ -0.19}$& $ 27.7\%$& $ 70.2\%$\\
\hline
\end{tabular}
\smallskip
\begin{scriptsize}
\parbox{0.975\textwidth}{\emph{Note}. This table reports for each security, the number of identified drift bursts $(\#)$ using a critical value of 4.0 and 4.5, the estimated slope coefficient $b$ of Eq. \eqref{eqn:revreg} with its associated $t$-statistic in parenthesis, the regression $R^{2}$, and the probability of reversion ($\% R$) calculated as the fraction of drift bursts, where the price direction after the $t$-statistic peaks is opposite in sign compared to that in the run-up. To confirm robustness of the results, the right-hand part of the table removes the decile of the strongest drift bursts as measured by the absolute value of the $t$-statistic.}
\end{scriptsize}
\end{center}
\end{table}

\subsection{Asymmetric reversal, trading volume and liquidity}

\citet*{grossman-miller:88a} suggest a large drift-to-volatility can be caused by exogenous demand for immediacy. In this framework, the return reversal represents a premium paid to market makers supplying liquidity against one-sided order flow \citep*[e.g.][also shows that returns on short-term reversal strategies can be thought of as a liquidity signature]{nagel:12a}. The empirical results in Section \ref{section:reversion} align with this interpretation. Their model, however, is symmetric in the order imbalance, as expressed by Eq. \eqref{eqn:GM88}.

\citet{huang-wang:09a} show how trading imbalances arise endogenously due to costly market presence. Such a mechanism always leads to selling pressure, which tends to attenuate rallies and exacerbate sell-offs, resulting in market crashes absent news on fundamentals (Result 1 -- 3 in their paper). As in \citet{grossman-miller:88a}, the model induces negative serial correlation in observed returns (Result 4a), but it is asymmetric. Negative returns exhibit stronger correlation than positive returns (Result 4b); volatility is higher during a negative return and high volume event (Result 5); abnormal volume implies larger future returns and stronger reversion (Result 6); and negative returns accompanied by high volume exhibit stronger reversals (Result 7). We here conduct an empirical assessment to check whether our sample of pre- and post-drift burst returns are consistent with these testable implications. We restrict attention to the E-mini S\&P500 futures contract ES \citep*[][also base their analysis on a stock paying a risky dividend]{huang-wang:09a}. The results for other assets are available in the Online Appendix.

We construct a $5$-minute grid as above. We then compute for each interval $j$:
\begin{itemize}
\item $R_{t_{j}}^{-}$ and $V_{t_{j}}^{-}$: the 5-minute log-return and traded volume,\footnote{We construct a normalized gross trading volume measure, which is defined by taking the gross trading volume in notional value and normalizing it by the average trading rate at that time of the day. This allows to depurate our volume series for the pronounced intraday swings in trading intensity, which makes the analysis more robust to time-of-the-day effects. Still, the results based on raw gross trading volume are broadly consistent with what we report here.}
\item $R_{t_{j}}^{+}$: the subsequent 5-minute log-return,
\item $\rho$: correlation between $R_{t_{j}}^{-}$ and $R_{t_{j}}^{+}$,
\item $\sigma_{p} = \sqrt{(R_{t_{j}}^{-})^2 + (R_{t_{j}}^{+})^2}$: a volatility measure.
\end{itemize}

\begin{table}[t!]
\setlength{\tabcolsep}{0.6cm}
\begin{center}
\caption{ES return and volume dynamics. \label{table:huang-wang-ES-1}}
\smallskip
\begin{tabular}{lcccccc}
\hline
Conditional information & & $E[V_{t_{j}}^{-}]$ & $E[R_{t_{j}}^{-}]$ & $E[R_{t_{j}}^{+}]$ & $\rho$ &  $\sigma_{p}$ \\
\cline{1-1} \cline{3-7}
\multicolumn{3}{l}{\emph{Panel A: Unconditional}}\\
all                                & &   1.00 &   0.03 &   0.02 &  -1.81\% &   9.41\\
\\[-0.25cm]
$R_{t_{j}}^{-} > E[R_{t_{j}}^{-}]$ & &   1.06 &   4.93 &  -0.08 &   0.05\% &   9.99\\
$R_{t_{j}}^{-} < E[R_{t_{j}}^{-}]$ & &   0.96 &  -3.51 &   0.10 &  -2.27\% &   8.97\\
\\[-0.25cm]
$V_{t_{j}}^{-} > E[V_{t_{j}}^{-}]$ & &   1.54 &  -0.01 &   0.01 &  -1.93\% &  12.15\\
$V_{t_{j}}^{-} < E[V_{t_{j}}^{-}]$ & &   0.46 &   0.06 &   0.04 &  -1.34\% &   5.42\\
\\[-0.25cm]
\multicolumn{3}{l}{\emph{Panel B: Negative drift burst}} \\
all                                & &   3.53 & -37.39 &   8.01 & -79.58\% &  40.15\\
\\[-0.25cm]
$R_{t_{j}}^{-} > E[R_{t_{j}}^{-}]$ & &   2.70 & -23.23 &   4.17 & -15.90\% &  24.78\\
$R_{t_{j}}^{-} < E[R_{t_{j}}^{-}]$ & &   5.46 & -70.34 &  16.94 & -85.53\% &  75.95\\
\\[-0.25cm]
$V_{t_{j}}^{-} > E[V_{t_{j}}^{-}]$ & &   6.22 & -55.25 &  13.29 & -84.26\% &  59.57\\
$V_{t_{j}}^{-} < E[V_{t_{j}}^{-}]$ & &   1.94 & -26.83 &   4.89 & -53.52\% &  28.67\\
\\[-0.25cm]
\multicolumn{3}{l}{\emph{Panel C: Positive drift burst}} \\
all                                & &   4.22 &  36.41 &  -5.44 & -25.73\% &  38.29\\
\\[-0.25cm]
$R_{t_{j}}^{-} > E[R_{t_{j}}^{-}]$ & &   6.07 &  56.42 &  -9.02 & -15.03\% &  59.58\\
$R_{t_{j}}^{-} < E[R_{t_{j}}^{-}]$ & &   3.09 &  24.15 &  -3.25 & -17.96\% &  25.25\\
\\[-0.25cm]
$V_{t_{j}}^{-} > E[V_{t_{j}}^{-}]$ & &   8.17 &  47.32 &  -8.52 & -18.94\% &  50.26\\
$V_{t_{j}}^{-} < E[V_{t_{j}}^{-}]$ & &   2.34 &  31.22 &  -3.98 & -24.87\% &  32.60\\
\hline
\end{tabular}
\smallskip
\begin{scriptsize}
\parbox{\textwidth}{\emph{Note.} This table reports descriptive measures of the pre- and post-drift burst return, $R_{t_{j}}^{-}$ and $R_{t_{j}}^{+}$, and a normalized volume measure $V_{t_{j}}^{-}$. $E[\cdot]$ refers to the average taken across the entire sample (unconditional in Panel A) or over the number of identified positive and negative drift bursts (Panel B and C). $\rho$ is the correlation between $R_{t_{j}}^{-}$ and $R_{t_{j}}^{+}$, while $\sigma_{p} = \sqrt{(R_{t_{j}}^{-})^{2}+(R_{t_{j}}^{+})^{2}}$ is the volatility in the drift burst window. Returns are expressed in basis points (bps).}
\end{scriptsize}
\end{center}
\end{table}

\citet{huang-wang:09a} report $R_{t_{j}}^{-}$, $R_{t_{j}}^{+}$, $V_{t_{j}}^{-}$, $\rho$ and $\sigma_p$ based on \textit{simulated} data.\footnote{They employ the notation $R_{1/2}$ for $R_{t_{j}}^{-}$, $R_{1}$ for $R_{t_{j}}^{+}$, and $V_{1/2}$ for $V_{t_{j}}^{-}$.} In contrast, Table \ref{table:huang-wang-ES-1} -- shaped as Table 1 in that paper for comparison -- is computed from \textit{actual} market returns. The full sample results in Panel A are calculated from all available 5-minute intervals. We observe a tiny negative return autocorrelation, which is slightly stronger if the preceding return is less than average or volume is above average. Volatility increases marginally with volume. The predicted effects are thus present, but extremely weak.

We now condition on $t_{j}$ being a drift burst time (with critical value set to 4.0), allowing at most for one significant event per day. Grouping by the sign of $R_{t_{j}}^{-}$, we are left with 253 negative and 208 positive drift bursts. In Panel B -- C of Table \ref{table:huang-wang-ES-1}, we tabulate the corresponding averages. The results are striking. A drift burst of either sign is associated with high volatility, large volume, and substantial negative serial correlation. The typical negative drift burst entails a price move of $-37.39$bps compared to $+36.41$bps for a positive one.\footnote{These price changes are far out in the tails of the return distribution. A $-37.39$bps drop in five minutes translates to a loss of $-31.41\%$ in seven hours. In our sample it represents a tenfold increase in magnitude compared to a typical 5-minute negative stock market return and corresponds to a four standard deviations draw, or about 1 in 25,000.} The subsequent $5$-minute log-return is $+8.01$bps and $-5.44$bps, on average. The serial correlation is $-0.7958$ ($-0.2573$) for negative (positive) drift bursts. Looking further at price drops, we split the sample into negative drift bursts with a smaller and larger return or volume than that of an average event. The negative serial correlation is most pronounced in drops that are below average (more negative) or accompanied by high trading volume. These effects are not present in positive drift bursts. At last, volatility is markedly higher for tail drift bursts (in either direction) and those with large volume.

\begin{table}[t!]
\setlength{\tabcolsep}{0.35cm}
\begin{center}
\caption{ES post-drift burst return conditional on pre-drift burst return and volume. \label{table:huang-wang-ES-2}}
\smallskip
\begin{tabular}{lccccccccc}
\hline
 & \multicolumn{4}{c}{negative drift burst} & & \multicolumn{4}{c}{positive drift burst} \\ \cline{2-5} \cline{7-10} \\[-0.25cm]
sorting variable & \multicolumn{9}{c}{$V_{t_{j}}^{-}$} \\[0.25cm]
$R_{t_{j}}^{-}$ & low & medium & high & high-low & & low & medium & high & high-low \\[0.25cm]
low & 14.3 &  8.5 & 42.8 & 28.4 & & -0.9 & -3.0 & -2.7 & -1.8 \\
medium &  5.1 &  4.9 &  6.8 &  1.7 & & -3.1 & -5.2 & -4.5 & -1.4 \\
high &  1.8 &  4.0 &  0.5 & -1.3 & & -8.1 & -12.0 & -9.6 & -1.5 \\
\hline
\end{tabular}
\smallskip
\begin{scriptsize}
\parbox{0.975\textwidth}{\emph{Note}. We report the average post-drift burst return $R_{t_{j}}^{+}$ after double-sorting first by pre-drift burst return $R_{t_{j}}^{-}$ and second by normalized gross trading volume $V_{t_{j}}^{-}$. As in \citet*{huang-wang:09a}, the sample is divided in three subsets: ``low'' is defined as the 1st quartile, ``medium'' is the inter-quartile range, and ``high'' the 4th quartile. The column ``high-low'' reports the difference between the high and low subgroup. Returns are expressed in basis points (bps).}
\end{scriptsize}
\end{center}
\end{table}

Table \ref{table:huang-wang-ES-2} reports conditional averages of the post-drift burst return $R_{t_{j}}^{+}$, which is double-sorted by pre-drift burst return $R_{t_{j}}^{-}$ and volume $V_{t_{j}}^{-}$. As before, the table is shaped as Table 2 in \citet{huang-wang:09a} to facilitate comparison. It reinforces that the reversion is stronger if the price change or volume is large. The results are again more pronounced for negative drift bursts.

As an alternative way to describe the double-conditioning of Table \ref{table:huang-wang-ES-2}, we estimate the forecasting equation of \citet{campbell-grossman-wang:93a}:
\begin{equation} \label{equation:cgw}
R_{t_{j}}^{+} = a + b R_{t_{j}}^{-} + c R_{t_{j}}^{-}V_{t_{j}}^{-} + \epsilon_{t_{j}},
\end{equation}
where an interaction term $R_{t_{j}}^{-}V_{t_{j}}^{-}$ is added to Eq. \eqref{eqn:revreg}. This approach is suggested by \citet*{huang-wang:09a} in order to assess if the return predictability is driven by trading volume. The parameter estimates of Eq. \eqref{equation:cgw} are displayed in Table \ref{table:campbell-grossmann-wang-ES} along with $t$-statistics (in parenthesis) based on \citet{newey-west:87a}-robust standard errors. Once we control for volume, the serial correlation observed during a negative drift burst is largely subsumed by the interaction term and the coefficient on $R_{t_{j}}^{-}$ is heavily reduced, albeit it remains significant for drops below average. This does not occur for positive drift bursts. We again split the sample in half according to the size of the pre-drift burst return and -- as predicted by \citet*{huang-wang:09a} -- we note that the volume-return variable is more important for large negative drops. On the other hand, for positive drift bursts it is largely irrelevant in explaining the post-drift burst return.

To summarize, Table \ref{table:huang-wang-ES-1} -- \ref{table:campbell-grossmann-wang-ES} confirm the theoretical predictions made by \citet*{grossman-miller:88a} and \citet*{huang-wang:09a}. The reversion in drift bursts is consistent with market makers absorbing large orders and charging a fee for this service. The asymmetry between negative and positive drift bursts is a symptom of endogenous demand for liquidity due to costly market participation. In the latter case, we confirm several testable implications on trading volume (regarded as a proxy of order imbalance), in that compensation increases with volume, which is larger during negative drift bursts. Overall, our findings suggest the regular occurrence of ``flash crashes'' documented here is consistent with existing theories of liquidity provision.

\begin{table}[t!]
\setlength{\tabcolsep}{0.60cm}
\begin{center}
\caption{ES post-drift burst return forecasting equation. \label{table:campbell-grossmann-wang-ES}}
\smallskip
\begin{tabular}{lcccccccc}
\hline
 & & \multicolumn{3}{c}{negative drift burst} & & \multicolumn{3}{c}{positive drift burst}\\
\cline{3-5} \cline{7-9}
                        & & $a$ & $b$ & $c$                           & & $a$ & $b$ & $c$\\
all           & & -11.06 & -0.51 & --               & &  0.56 & -0.16 & -- \\[-0.15cm]
              & & {\footnotesize (-3.50)} & {\footnotesize(-5.34)} &               & & {\footnotesize ( 0.20)} & {\footnotesize(-1.78)} \\[0.25cm]
     & & -2.69 & -0.17 & -0.02      & &  0.45 & -0.16 & -0.00 \\[-0.15cm]
     & & {\footnotesize (-1.41)} & {\footnotesize(-2.17)} & {\footnotesize(-6.06)}     & & {\footnotesize ( 0.16)} & {\footnotesize(-1.66)} & {\footnotesize(-0.46)} \\[0.25cm]
$R_{t_{j}}^{-} > E[R_{t_{j}}^{-}]$ & &  0.54 & -0.13 & -0.01      & & -1.34 & -0.13 & -0.00 \\[-0.15cm]
     & & {\footnotesize ( 0.28)} & {\footnotesize(-1.10)} & {\footnotesize(-0.78)}     & & {\footnotesize (-0.15)} & {\footnotesize(-0.72)} & {\footnotesize(-0.51)} \\[0.25cm]
$R_{t_{j}}^{-} < E[R_{t_{j}}^{-}]$ & & -14.58 & -0.34 & -0.01      & &  0.75 & -0.15 & -0.00 \\[-0.15cm]
     & & {\footnotesize (-3.52)} & {\footnotesize(-3.96)} & {\footnotesize(-3.86)}     & & {\footnotesize ( 0.36)} & {\footnotesize(-1.36)} & {\footnotesize(-0.37)} \\
\hline
\end{tabular}
\smallskip
\begin{scriptsize}
\parbox{0.975\textwidth}{\emph{Note}. This table shows the estimated OLS coefficients from the \citet{campbell-grossman-wang:93a} forecasting equation $R_{t_{j}}^{+} = a + b R_{t_{j}}^{-} + c R_{t_{j}}^{-}V_{t_{j}}^{-} + \epsilon_{t_{j}}$, where $R^+_{t_{j}}$ is the post-drift burst return, $R_{t_{j}}^{-}$ the pre-drift burst return, and $V_{t_{j}}^{-}$ is a pre-drift burst normalized gross trading volume. $t$-statistics of the parameter estimates -- based \citet*{newey-west:87a}-corrected standard errors -- are reported in parenthesis. The regression is fitted separately for negative (left panel) and positive (right panel) detected drift burst with $c=0$ (first row); with $c\neq 0$ (second row); after conditioning on $R_{t_{j}}^{-}$ being above (third row) and below (fourth row) its average value $E[R_{t_{j}}^{-}]$ in each of the two subsamples (negative and positive drift bursts).}
\end{scriptsize}
\end{center}
\end{table}

\section{Conclusion} \label{section:conclusion}

The drift burst hypothesis is proposed as a theoretical framework for the modeling of distinct and sustained trends in the price paths of financial assets. We show how drift bursts -- defined as a short-lived locally explosive drift coefficient -- can be embedded into standard continuous-time models and demonstrate that the arbitrage-free property is preserved if the volatility co-explodes during a drift burst, something we provide strong empirical support for. Applying a novel methodology for drift burst identification to a comprehensive set of tick data covering six major asset classes, we deliver unprecedented insights into potentially disruptive but poorly understood events, such as flash crashes. In contrast to the existing literature, which has mostly regarded these events as market glitches, we show that they are instead a regular, stylized fact in the markets, whose dynamic features match theoretical predictions from the market microstructure literature of liquidity provision.

The paper contributes towards a better understanding of the microstructure dynamics of financial markets and can help to inform the regulatory policy agenda going forward by shedding light on a number of important questions: Who triggers a flash crash? Who supplies liquidity during the event? What is the role played by high-frequency traders? And what, if anything, can be done to prevent flash crashes in the future?

\pagebreak

\appendix

\section{Mathematical Appendix} \label{appendix:proof}

In this section, $C$ is a generic positive constant, whose value may change from line to line.

\begin{assumption} \label{assumption:model2} With the notation of Assumption \ref{assumption:model}, we further impose that

\noindent i) The jump process $J_t$ is of the form:
\begin{equation}
J_{t} = \int_{0}^{t} \int_{ \mathbb{R}} \delta(s,x) I_{ \{| \delta(s,x)| \leq 1 \}} \left( \nu( \mathrm{d}s, \mathrm{d}x) - \tilde{ \nu}( \mathrm{d}s, \mathrm{d}x) \right) + \int_{0}^{t} \int_{ \mathbb{R}} \delta(s,x) I_{ \{| \delta(s,x)|> 1 \}} \nu( \mathrm{d}s, \mathrm{d}x),
\end{equation}
where $\nu$ is a Poisson random measure on $\mathbb{R}_{+} \times \mathbb{R}$, $\tilde{ \nu}( \mathrm{d}s, \mathrm{d}x) = \lambda( \mathrm{d}x) \mathrm{d}s$ a compensator, and $\lambda$ is a $\sigma$-finite measure on $\mathbb{R}$. Moreover, $\delta: \mathbb{R}_{+} \times \mathbb{R} \rightarrow \mathbb{R}$ is a predictable function such that there exists a sequence $(\tau_{n})_{n \geq 1}$ of $\mathcal{F}_{t}$-stopping times with $\tau_{n} \rightarrow \infty$ and, for each $n$, a deterministic and non-negative function $\Gamma_{n}$ with $\text{min}(| \delta(t,x)|,1) \leq \Gamma_{n}(x)$ and $\int_{ \mathbb{R}} \Gamma_{n}(x)^{2} \lambda( \mathrm{d}x) < \infty$ and, for all $\kappa\in (0,1)$, $\int_{\left\{x: \Gamma_{n}(x) \leq \kappa \right\} } \Gamma_{n}(x) \lambda( \mathrm{d}x) < \infty$, for all $(t,x)$ and $n \geq 1$.

\bigskip

\noindent ii) Fix $t \in (0,T]$. We assume there exists a $\Gamma > 0$ and a sequence $(\tau_{m})_{m \geq 1}$ of $\mathcal{F}_{t}$-stopping times with $\tau_{m} \rightarrow \infty$ and constants $C^{(m)}_{t}$ such that for all $m$,
\begin{equation} \label{equation:lipschitz}
E_{s} \left[| \mu_{u}- \mu_{s}|^{2}+| \sigma_{u}- \sigma_s|^{2} \right] \leq C^{(m)}_{t} |u - s|^{ \Gamma},
\end{equation}
for all $0 \leq s \leq u \leq T \wedge \tau_{m}$, where $E_{t} [\cdot] = E[ \cdot | \mathcal{F}_{t}]$.
\end{assumption}

\begin{remark}
With Assumption \ref{assumption:model} and \ref{assumption:model2} in hand, the localization procedure in \citet*[][Section 4.4.1]{jacod-protter:12a} implies that we can assume $\mu_{t}$, $\sigma_{t}$, and $\delta(t,x)$ are bounded, as $(\omega, t,x)$ vary within $\Omega \times [0,T] \times \mathbb{R}$, and that $| \delta(t,x)| \leq \overline{ \Gamma}(x)$, where $\overline{ \Gamma}(x)$ is bounded and such that $\int_{ \mathbb{R}} \overline{ \Gamma}(x)^{2} \lambda( \mathrm{d}x) < \infty$ and, for all $\kappa\in (0,1)$, $\int_{ \left\{x: \overline{ \Gamma}(x) \leq \kappa \right\}} \overline{ \Gamma}(x) \lambda( \mathrm{d}x) < \infty$.
\end{remark}

\begin{assumption} \label{assumption:times} $(t_{i})_{i=0}^{n}$ is a deterministic sequence. We denote by $\Delta_{i,n} = t_{i} - t_{i-1}$, $\Delta_{n}^{-} = \min_{i=1, \ldots, n} \{ \Delta_{i,n} \}$, $\Delta_{n}^{+} = \max_{i=1, \ldots, n} \{ \Delta_{i,n} \}$ and assume that, for a sufficiently large $n$, and suitable constants $c$ and $C$ (that do not depend on $n$),
\begin{equation*}
c \Delta_{n} \leq \Delta_{n}^{-} \leq \Delta_{n}^{+} \leq C \Delta_{n},
\end{equation*}
where $\Delta_{n} = T/n$. Moreover, denoting the ``quadratic variation of time up to $t$'' as $H(t) = \lim_{n \rightarrow \infty} H_{n}(t)$, where $H_{n}(t) = \frac{1}{ \Delta_{n}} \sum_{t_{i} \le t} \left( \Delta_{i,n} \right)^{2}$, we assume $H(t)$ exists and is Lebesgue-almost surely differentiable in $(0,T)$ with derivative $H'$ such that $\left|H'(t_{i})- \Delta_{i,n} / \Delta_{n} \right| \leq C \Delta_{i,n}$, for any $t_{i}$ in which $H$ is differentiable, where $C$ does not depend on $i$ and $n$.
\end{assumption}

\begin{assumption} \label{assumption:kernel} The bandwidths $h_{n}$, $h_{n}'$ are sequences of positive real numbers, such that, as $n \rightarrow \infty$, $h_{n} \rightarrow 0$, $h_{n}' \rightarrow 0$,  $nh_{n} \rightarrow \infty$, and $nh_{n}' \rightarrow \infty$. The kernel $K: \mathbb{R} \rightarrow \mathbb{R}_{+}$ is any function with the properties:
\begin{enumerate}
\item[(K0)] $K(x) = 0$ for $x>0$;
\item[(K1)] $K$ is bounded and differentiable with bounded first derivative. Further, $xK(x) \rightarrow 0$ and $xK'(x) \rightarrow 0$ as $x \rightarrow -\infty$;
\item[(K2)] $\int_{- \infty}^{0} K(x) \text{\upshape{d}}x = 1$ and $K_{2} = \int_{-\infty}^{0} K^{2}(x) \text{\upshape{d}}x < \infty$;
\item[(K3)] It holds that for every positive sequence $G_{n,t} \rightarrow \infty$, $\int_{-\infty}^{-G_{n,t}} K(x)\text{\upshape{d}}x \leq C G_{n,t}^{-B}$ for some $B> 0$ and $C>0$ (i.e., $K$ has a fast vanishing tail);
\item[(K4)] $m_{K}(\alpha) = \int_{-\infty}^0 K(x)|x|^{\alpha}\text{\upshape{d}}x<\infty$, for all $\alpha > -1$; $m'_{K}(\alpha) = \int_{-\infty}^0 K^2(x)|x|^{\alpha}\text{\upshape{d}}x<\infty$, for all $\alpha > -1$.
\end{enumerate}
\end{assumption}

\begin{remark} Condition (K0) can be extended to allow for two-sided kernels without changing any of the theoretical results and only minor modifications in the proofs. We impose it to make the mathematical exposition less cumbersome and aligned with the implementation in the empirical application.
\end{remark}
Without loss of generality, in the proofs we set $h_{n}' = h_{n}$.

\begin{lemma} \label{lemma:discretization}
Assume that the conditions of Assumptions \ref{assumption:model} and \ref{assumption:times} -- \ref{assumption:kernel} hold. Then, for every fixed $t \in (0,T]$ as $n \rightarrow \infty$ and $h_{n} \rightarrow$ 0, it holds that:
\begin{align*}
A_{n} &= \frac{1}{h_{n}} \sum_{i=1}^{n} K \left( \frac{t_{i-1}-t}{h_{n}} \right) \int_{t_{i-1}}^{t_{i}} \mu_{s} \text{\upshape{d}}s - \int_{0}^{T} \frac{1}{h_{n}} K \left( \frac{s-t}{h_{n}} \right) \mu_{s} \text{\upshape{d}}s = O_{p} \left( \frac{1}{nh_{n}} \right), \\[0.25cm]
B_{n} &= \frac{1}{ \Delta_{n}h_{n}} \sum_{i=1}^{n} K \left( \frac{t_{i-1}-t}{h_{n}} \right) \left( \int_{t_{i-1}}^{t_{i}} \mu_{s} \text{\upshape{d}}s \right)^{2} - \int_{0}^{T} \frac{1}{h_{n}} K \left( \frac{s-t}{h_{n}} \right) \mu_{s}^{2} \text{\upshape{d}}s = O_{p} \left( \frac{1}{nh_{n}} \right).
\end{align*}
This also applies if $\mu_{t}$ is replaced by $\sigma_{t}$.
\end{lemma}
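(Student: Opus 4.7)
Both statements are discretization errors and will be controlled via three ingredients: the smoothness of $K$ (Assumption \ref{assumption:kernel}, in particular bounded $K'$ with integrable tails), the near-equidistance of the grid $\Delta_n^+ \leq C\Delta_n$ (Assumption \ref{assumption:times}), and the $L^2$-Lipschitz regularity of $\mu$ from Assumption \ref{assumption:model2}. I would first localize $\mu$ using the remark after Assumption \ref{assumption:model2}, so that $\mu$ may be treated as uniformly bounded and the $O_p$ claims reduce to deterministic or $L^1$ bounds.

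For $A_n$, the plan is to rewrite
\begin{equation*}
A_n = \sum_{i=1}^n \int_{t_{i-1}}^{t_i} \frac{1}{h_n}\left[ K\left(\frac{t_{i-1}-t}{h_n}\right) - K\left(\frac{s-t}{h_n}\right)\right]\mu_s\,\text{d}s
\end{equation*}
and bound the bracketed kernel difference by the mean value theorem as $|K'(\xi_{i,s})|\cdot|s-t_{i-1}|/h_n$. Combining with $|s-t_{i-1}|\leq \Delta_{i,n}\leq C\Delta_n$ and $|\mu_s|\leq C$ yields $|A_n|\leq (C\Delta_n/h_n^2)\sum_i\int_{t_{i-1}}^{t_i}|K'(\xi_{i,s})|\,\text{d}s$. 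The remaining sum is itself a Riemann approximation of $\int_{-\infty}^{0}|K'((s-t)/h_n)|\,\text{d}s = h_n\int_{-\infty}^{0}|K'(u)|\,\text{d}u$, which is finite thanks to (K1)--(K3). This delivers $|A_n|=O(\Delta_n/h_n)=O(1/(nh_n))$.

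For $B_n$, I would decompose $\int_{t_{i-1}}^{t_i}\mu_s\,\text{d}s = \mu_{t_{i-1}}\Delta_{i,n} + r_{i,n}$. Assumption \ref{assumption:model2} together with Jensen's inequality gives $E|r_{i,n}| = O(\Delta_{i,n}^{1+\Gamma/2})$ and $E r_{i,n}^2 = O(\Delta_{i,n}^{2+\Gamma})$. Upon substitution, the cross and squared remainder terms contribute $O_p(\Delta_n^{\Gamma/2})$ (using $\sum_i K_{i,n}\Delta_{i,n}\leq Ch_n$), which is absorbed into $O_p(1/(nh_n))$ under Assumption \ref{assumption:kernel}. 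The leading sum reduces to $\frac{1}{h_n}\sum_i K_{i,n}\,\mu_{t_{i-1}}^2\,\Delta_{i,n}\cdot(\Delta_{i,n}/\Delta_n)$; Assumption \ref{assumption:times} replaces $\Delta_{i,n}/\Delta_n$ by $H'(t_i)$ up to a $O(\Delta_{i,n})$ correction, and a mean value argument on $K$ identical to the one used for $A_n$ then passes from the weighted Riemann sum to its integral limit at rate $O(1/(nh_n))$.

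\textbf{Main obstacle.} The tightest step is the $B_n$ analysis: neither the local-freezing approximation $(\int\mu)^2\approx\mu_{t_{i-1}}^2\Delta_{i,n}^2$ nor the Riemann-to-integral passage is allowed to lose more than $O(1/(nh_n))$. The former is governed by the Hölder exponent $\Gamma$ in Assumption \ref{assumption:model2}, and the latter by the smoothness and integrable-tail structure of $K'$. The two pieces will combine cleanly because the error in each case is an elementary $\Delta_n/h_n$ factor, generated once by the first-order Taylor remainder in $K$ and once by the near-equidistance of the grid.
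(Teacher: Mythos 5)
Your treatment of $A_{n}$ is essentially the paper's own argument: the same rewriting of $A_{n}$ as a single sum of integrals of the kernel difference, the mean value theorem applied to $K$, boundedness of $\mu$ and $K'$, and the grid condition $\Delta_{i,n}\leq C\Delta_{n}$. You are in fact more careful than the paper on one point: the chain rule puts a factor $1/h_{n}$ into the derivative of $s\mapsto K((s-t)/h_{n})$, and you recover that lost power of $h_{n}$ by reading $\sum_{i}\int_{t_{i-1}}^{t_{i}}|K'(\xi_{i,s})|\,\text{d}s$ as a Riemann sum of order $h_{n}\int_{-\infty}^{0}|K'(u)|\,\text{d}u$. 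Be aware, though, that (K1)--(K3) only give you a \emph{bounded} $K'$ with $xK'(x)\rightarrow 0$, which does not by itself imply $\int_{-\infty}^{0}|K'(u)|\,\text{d}u<\infty$; you should state this integrability as an additional (mild) requirement or verify it for the kernel at hand.

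The genuine gap is in $B_{n}$, which the paper dispatches with ``nearly identical'' but for which you attempt a fuller argument. Your local-freezing step $\int_{t_{i-1}}^{t_{i}}\mu_{s}\,\text{d}s=\mu_{t_{i-1}}\Delta_{i,n}+r_{i,n}$ with $E|r_{i,n}|=O(\Delta_{i,n}^{1+\Gamma/2})$ produces, exactly as you compute, a contribution of order $O_{p}(\Delta_{n}^{\Gamma/2})$, and you then assert that this is ``absorbed into $O_{p}(1/(nh_{n}))$.'' It is not: the ratio of the two rates is $\Delta_{n}^{\Gamma/2}\cdot nh_{n}\asymp n^{1-\Gamma/2}h_{n}$, which is unbounded under Assumption \ref{assumption:kernel} alone (take $\Gamma=1$ and $h_{n}=n^{-1/4}$, for which it grows like $n^{1/4}$). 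Your decomposition therefore delivers $B_{n}=O_{p}\big(\Delta_{n}^{\Gamma/2}+1/(nh_{n})\big)$, not the rate claimed in the lemma, and the step as written is false. A secondary problem with the same part of your plan: after replacing $\Delta_{i,n}/\Delta_{n}$ by $H'(t_{i})$ via Assumption \ref{assumption:times}, your leading Riemann sum converges to $\frac{1}{h_{n}}\int_{0}^{T}K\big(\frac{s-t}{h_{n}}\big)\mu_{s}^{2}H'(s)\,\text{d}s$, which differs from the target integral $\frac{1}{h_{n}}\int_{0}^{T}K\big(\frac{s-t}{h_{n}}\big)\mu_{s}^{2}\,\text{d}s$ by an $O_{p}(1)$ quantity whenever $H'\not\equiv 1$ near $t$; so your route cannot land on the stated limit without either assuming asymptotically equidistant sampling or restating the limit with the $H'$ weight.
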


\begin{proof}
Write:
\begin{equation*}
A_{n} = \frac{1}{h_{n}} \sum_{i=1}^{n} \int_{t_{i-1}}^{t_i} \left( K \left(\frac{t_{i-1}-t}{h_{n}}\right)-K \left(\frac{s-t}{h_{n}}\right)\right) \mu_s \text{d}s.
\end{equation*}
The mean value theorem -- together with the boundedness of $K'$ and $\mu_{t}$ -- implies that, for each interval $[t_{i-1}, t_{i}]$, there exists a $\xi_{t_{i-1},t_{i}}$ such that:
\begin{align*}
|A_{n}| &\leq \frac{1}{h_{n}} \sum_{i=1}^{n} \int_{t_{i-1}}^{t_{i}} \Big| K' \left( \frac{\xi_{t_{i-1},t_{i}}-t}{h_{n}} \right)(s-t_{i-1}) \Big| | \mu_{s}| \text{d}s \leq C \frac{T}{n} \frac{1}{h_{n}} \int_{0}^{T} | \mu_{s}| \text{d}s \leq C \frac{1}{nh_{n}},
\end{align*}
where the second inequality follows from $(K1)$. The proof for $B_{n}$ is nearly identical. \qed
\end{proof}

\begin{lemma} \label{lemma:localization}
Assume that the conditions of Assumption \ref{assumption:model} and \ref{assumption:times} -- \ref{assumption:kernel} hold. Then, for every fixed $t \in (0,T]$ as $n \rightarrow \infty$ and $h_{n} \rightarrow 0$, it holds that:
\begin{equation*}
A'_{n} = \int_{0}^{T} \frac{1}{h_{n}} K \left( \frac{s-t}{h_{n}} \right) \mu_{s} \text{\upshape{d}}s - \mu_{t-} = O_{p} \left(h_{n}^{ \Gamma/2} + h_{n}^{B} \right).
\end{equation*}
This also applies if $\mu_{t}$ is replaced by $\sigma_{t}$.
\end{lemma}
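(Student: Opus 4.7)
The plan is to change variables, split the integration region into a bulk and a tail, and estimate each piece separately. I would set $u = (s-t)/h_{n}$; since $K$ vanishes on $(0,\infty)$ by (K0) and $\int_{-\infty}^{0} K(u)\,\mathrm{d}u = 1$ by (K2), the integral reduces to $\int_{-t/h_{n}}^{0} K(u)\mu_{t+h_{n}u}\,\mathrm{d}u$, so I can rewrite
\begin{equation*}
A'_{n} \;=\; \int_{-t/h_{n}}^{0} K(u)\bigl(\mu_{t+h_{n}u} - \mu_{t-}\bigr)\,\mathrm{d}u \;-\; \mu_{t-}\int_{-\infty}^{-t/h_{n}} K(u)\,\mathrm{d}u.
\end{equation*}
Using (K3) with $G_{n,t}=t/h_{n}\to\infty$ together with the post-localization boundedness of $\mu$, the second (pure-tail) piece is bounded in absolute value by $C\,|\mu|_{\infty}\,(t/h_{n})^{-B} = O(h_{n}^{B})$.

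For the remaining integral, I would split at a threshold $G$ satisfying $G\to\infty$ and $Gh_{n}\to 0$, for concreteness $G=h_{n}^{-1}$ (which lies in $[1,t/h_{n}]$ for $n$ large). On the ``near'' piece $[-G,0]$, the $L^{2}$-Hölder estimate in Assumption \ref{assumption:model2}(ii), applied with $s=t+h_{n}u\le t$ and the ``upper endpoint'' $t$, together with Jensen's inequality, gives $E\bigl|\mu_{t+h_{n}u}-\mu_{t-}\bigr|\le C|h_{n}u|^{\Gamma/2}$; hence by Fubini and (K4),
\begin{equation*}
E\left|\int_{-G}^{0} K(u)\bigl(\mu_{t+h_{n}u}-\mu_{t-}\bigr)\,\mathrm{d}u\right| \;\le\; Ch_{n}^{\Gamma/2}\int_{-\infty}^{0} K(u)|u|^{\Gamma/2}\,\mathrm{d}u \;=\; Ch_{n}^{\Gamma/2}\, m_{K}(\Gamma/2),
\end{equation*}
and Markov's inequality yields an $O_{p}(h_{n}^{\Gamma/2})$ bound. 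On the ``far'' piece $[-t/h_{n},-G]$, boundedness of $\mu$ and (K3) give $|\cdot|\le 2|\mu|_{\infty}\int_{-\infty}^{-G}K(u)\,\mathrm{d}u \le CG^{-B}=O(h_{n}^{B})$. Adding the three contributions proves $A'_{n}=O_{p}(h_{n}^{\Gamma/2}+h_{n}^{B})$. The proof for $\sigma$ is identical since Assumption \ref{assumption:model2}(ii) imposes the same Hölder estimate on $\sigma$.

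The only subtle point is a bookkeeping issue: the Hölder condition is phrased for the pair $(\mu_{s},\mu_{u})$ with $s\le u$, while the centering in $A'_{n}$ uses $\mu_{t-}$ rather than $\mu_{t}$. This is resolved by observing that the same Hölder bound forces $(\mu_{s})_{s<t}$ to be Cauchy in $L^{2}$ as $s\uparrow t$; its left limit is $\mu_{t-}$ (using predictability), and passage to the limit in the bound $E|\mu_{s}-\mu_{u}|^{2}\le C(u-s)^{\Gamma}$ with $u\uparrow t$ delivers $E|\mu_{s}-\mu_{t-}|^{2}\le C(t-s)^{\Gamma}$, which is what the near-piece estimate uses. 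Apart from this, the argument is entirely deterministic given the pathwise boundedness produced by localization, so no serious obstacle arises.
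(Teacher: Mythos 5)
Your argument is correct and follows essentially the same route as the paper's: isolate the kernel mass beyond $-t/h_{n}$ and bound it by (K3), then control the remaining integral via the $L^{2}$-H\"older condition, Jensen's inequality and (K4), with Markov delivering the $O_{p}$ rate. The extra split at $G=h_{n}^{-1}$ is redundant, since (K4) already controls $\int_{-\infty}^{0}K(u)|u|^{\Gamma/2}\,\mathrm{d}u$ over the whole half-line exactly as the paper does (and note your stated side condition $Gh_{n}\to 0$ is violated by that concrete choice, though it is never actually used); your explicit handling of the centring at $\mu_{t-}$ rather than $\mu_{t}$ is a detail the paper glosses over.
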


\begin{proof}
Note that, by the properties of the kernel,
\begin{equation*}
\mu_{t-} = \mu_{t-} \int_{- \infty}^{0} K(x) \text{d}x = \mu_{t-} \left( \int_{- \infty}^{-t/h_{n}} K(x) \text{d}x + \int_{-t/h_{n}}^{0} K(x) \text{d}x \right),
\end{equation*}
so we can write:
\begin{align*}
|A'_{n}| &= \Big| \int_{0}^{T} \frac{1}{h_{n}}K \left( \frac{s-t}{h_{n}} \right) (\mu_{s} - \mu_{t-}) \text{d}s - \mu_{t-} \int_{- \infty}^{-t/h_{n}}K(x) \text{d}x \Big| \leq \int_{0}^{T} \frac{1}{h_{n}} K \left( \frac{s-t}{h_{n}} \right) | \mu_{s} - \mu_{t-}| \text{d}s + C h_{n}^{B},
\end{align*}
where $B$ is the constant in (K3). Then, by Jensen's inequality and Eq. \eqref{equation:lipschitz}:
\begin{equation*}
E_{s \wedge t} \left[ |\mu_s-\mu_{t-}| \right] \leq C|s-t|^{\Gamma/2}.
\end{equation*}
Together with (K4) and a change of variable, this implies that:
\begin{align*}
E\left[ \int_{0}^{T} \frac{1}{h_{n}}K \left( \frac{s-t}{h_{n}} \right)|\mu_s-\mu_{t-} |\text{d}s \right] & \leq \int_{0}^{T} \frac{1}{h_{n}} K \left( \frac{s-t}{h_{n}} \right)|s-t|^{\Gamma/2} \text{d}s = \int_{-t/h_{n}}^{0} K(x)|x|^{\Gamma/2}h_{n}^{\Gamma/2} \text{d}x \leq C h_{n}^{\Gamma/2}.
\end{align*}
This concludes the proof. \qed
\end{proof}

\begin{lemma} \label{lemma:spot-volatility}
Assume that the conditions of Assumptions \ref{assumption:model} and \ref{assumption:model2} -- \ref{assumption:kernel} hold. Then, for every fixed $t \in(0,T]$, as $n \rightarrow \infty$ and $h_{n} \rightarrow 0$ such that $nh_{n} \rightarrow \infty$, it holds that $\hat{ \sigma}_{t}^{n} \overset{p}{ \rightarrow} \sigma_{t-}$.
\end{lemma}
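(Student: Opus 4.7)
The strategy is to show that $(\hat{\sigma}_t^n)^2$ converges in probability to $\sigma_{t-}^2$ as a kernel-smoothed approximation of $\int \sigma_s^2\,\text{d}s$ near $t$, reusing Lemmas \ref{lemma:discretization} and \ref{lemma:localization} applied to $\sigma_s^2$, and then passing to the square root by the continuous mapping theorem.

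First, I would decompose the return increment as $\Delta_i^n X = \int_{t_{i-1}}^{t_i}\mu_s\,\text{d}s + \int_{t_{i-1}}^{t_i}\sigma_s\,\text{d}W_s + \Delta_i^n J$ and expand $(\Delta_i^n X)^2$ into three squares and three cross products. Using the local boundedness of $\mu$ together with the bound $\sum_i K\!\left(\tfrac{t_{i-1}-t}{h_n}\right) = O(h_n/\Delta_n)$ implied by Assumptions \ref{assumption:times} and \ref{assumption:kernel}, the drift-squared contribution is $O_p(\Delta_n) = o_p(1)$, and the Cauchy--Schwarz inequality controls the cross products involving the drift.

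Second, for the dominant continuous-martingale term $\Delta_i^n M = \int_{t_{i-1}}^{t_i}\sigma_s\,\text{d}W_s$, I would use the It\^o isometry at the conditional level, $E[(\Delta_i^n M)^2\mid\mathcal{F}_{t_{i-1}}] = \int_{t_{i-1}}^{t_i}\sigma_s^2\,\text{d}s$, together with the $O(\Delta_n^2)$ bound on $E[(\Delta_i^n M)^4\mid\mathcal{F}_{t_{i-1}}]$, to show that
\[
\frac{1}{h_n}\sum_{i=1}^n K\!\left(\tfrac{t_{i-1}-t}{h_n}\right)\!(\Delta_i^n M)^2 = \frac{1}{h_n}\sum_{i=1}^n K\!\left(\tfrac{t_{i-1}-t}{h_n}\right)\!\int_{t_{i-1}}^{t_i}\sigma_s^2\,\text{d}s + o_p(1),
\]
by computing the variance of the resulting martingale-difference sum, which is of order $\Delta_n/h_n \to 0$. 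Applying Lemma \ref{lemma:discretization} with $\sigma_s^2$ in place of $\mu_s$ (whose smoothness is inherited from Assumption \ref{assumption:model2}(ii) via boundedness of $\sigma_s$), followed by Lemma \ref{lemma:localization} with $\sigma_s^2$, identifies the limit of the right-hand side as $\sigma_{t-}^2$.

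Third, and this is the main technical obstacle, the jump contribution must be shown asymptotically negligible. Large jumps ($|\delta(s,x)|>1$) form a finite-activity process, so for any fixed $t$ the probability that one lands in the vanishing effective support of the kernel tends to zero and their impact is $o_p(1)$. The small compensated jumps are more delicate, since their conditional second moment is of order $\Delta_n$, comparable to the diffusive contribution; controlling this term requires splitting $\delta$ at a vanishing threshold and exploiting both the square-integrability $\int \bar{\Gamma}(x)^2\,\lambda(\text{d}x)<\infty$ and the $L^1$ smallness condition on $\bar{\Gamma}$ near zero from Assumption \ref{assumption:model2}(i), together with martingale bounds on the compensated jump integral and the jump--diffusion cross term. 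Assembling these pieces yields $(\hat{\sigma}_t^n)^2 \overset{p}{\to}\sigma_{t-}^2$, and since $\sigma_{t-}>0$ almost surely under Assumption \ref{assumption:model}, the continuous mapping theorem applied to $x\mapsto\sqrt{x}$ delivers $\hat{\sigma}_t^n \overset{p}{\to}\sigma_{t-}$.
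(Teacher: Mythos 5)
Your proposal follows essentially the same route as the paper: isolate the continuous part of $(\Delta_i^n X)^2$, show its kernel-weighted sum converges in probability to $\sigma_{t-}^2$, kill the jump remainder by splitting at a vanishing threshold and exploiting the integrability conditions on $\overline{\Gamma}$, and finish with the continuous mapping theorem. The only differences are minor: the paper cites \citet{mancini-mattiussi-reno:15a} for the continuous-part convergence that you re-derive via the conditional It\^{o} isometry, absorbs the cross terms with a quadratic inequality rather than Cauchy--Schwarz, and -- one detail your sketch glosses over -- controls jumps occurring far from $t$ (which still contribute because $K$ has unbounded support) by restricting to an event with no large jumps in $(t-h_{n}^{\psi},t]$ and invoking the tail condition (K3).
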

\begin{proof} The lemma extends \citet*[][Theorem 9.3.2]{jacod-protter:12a}
to a general kernel (as defined in Assumption \ref{assumption:kernel}). We compensate the large jump term and write $X_{t}' = \int_{0}^{t} \mu_{s}^{*} \text{d}s + \int_{0}^{t} \sigma_{s} \text{d}W_{s}$, where $\mu_{t}^{*} = \mu_{t} + \int_{ \mathbb{R}} \delta(t,x) I_{ \{| \delta(t,x)| > 1 \}} \lambda( \text{d}x)$ is bounded, and $X_{t}'' = X_{t} - X_{t}' = \int_{ \mathbb{R}} \delta(s,x) \left( \nu( \text{d}s, \text{d}x) - \tilde{ \nu}( \text{d}s, \text{d}x) \right)$. Now,  \citet*{mancini-mattiussi-reno:15a} established the convergence in probability:
\begin{equation} \label{equation:mmr15}
\frac{1}{h_{n}} \sum_{i=1}^{n} K \left( \frac{t_{i-1} - t}{h_{n}} \right) \left( \Delta_{i}^{n} X' \right)^{2} \overset{p}{ \rightarrow} \sigma^2_{t-}.
\end{equation}
Thus, it is enough to show that:
\begin{equation*}
R_{n}^{ \hat{ \sigma}} = \frac{1}{h_{n}} \sum_{i = 1}^{n} K \left( \frac{t_{i-1} - t}{h_{n}} \right) \left( \left( \Delta_{i}^{n} X \right)^{2} - \left( \Delta_{i}^{n} X' \right)^{2} \right) \overset{p}{\rightarrow} 0.
\end{equation*}
Note that for $\kappa \in (0,1)$, we may write $\Delta_{i}^{n} X = \Delta_{i}^{n} X' + \Delta_{i}^{n} X_{1}''( \kappa) + \Delta_{i}^{n} X_{2}''( \kappa)$, where
\begin{equation*}
\Delta_{i}^{n} X_{1}''( \kappa)= \int_{ \mathbb{R}} \delta(s,x)I_{ \{\overline{\Gamma}(x) \leq \kappa \}} \left( \nu( \text{d}s, \text{d}x) - \tilde{ \nu}( \text{d}s, \text{d}x) \right) \quad \text{and} \quad \Delta_{i}^{n} X_{2}''( \kappa) = \int_{ \mathbb{R}} \delta(s,x)I_{ \{\overline{ \Gamma}(x) > \kappa \}} \left( \nu( \text{d}s, \text{d}x) - \tilde{ \nu}( \text{d}s, \text{d}x) \right).
\end{equation*}
Applying the decomposition in \citet*[][Equation 9.3.9]{jacod-protter:12a}:
\begin{align*}
| R_{n}^{ \hat{ \sigma}} | &\leq  \frac{1}{h_{n}} \sum_{i=1}^{n} K \left( \frac{t_{i-1} - t}{h_{n}} \right)  \big| \left( \Delta_{i}^{n} X \right)^{2} -\left( \Delta_{i}^{n} X' \right)^{2} \big| \\[0.10cm]
&\leq \frac{1}{h_{n}} \sum_{i=1}^{n} K \left( \frac{t_{i-1} - t}{h_{n}} \right) \left( \epsilon \left( \Delta_{i}^{n} X' \right)^{2} + \frac{C}{ \epsilon} \left( \left( \Delta_{i}^{n} X_{1}''( \kappa) \right)^{2} + \left( \Delta_{i}^{n} X_{2}''( \kappa) \right)^{2} \right) \right),
\end{align*}
for $0 < \epsilon \leq 1$ and $C$ independent of $\epsilon$.

Next, define $\Omega_{n}( \psi, \kappa) \subseteq \Omega$ the set such that the Poisson process $\nu([0,t] \times \{x: \overline{ \Gamma}(x) > \kappa \})$ has no jumps in the interval $(t-h_{n}^{ \psi}, t]$, for $0< \psi <1$ and $0< \kappa <1$. Note that $\Omega_{n}( \psi, \kappa) \rightarrow \Omega$, as $n \rightarrow \infty$, since $h_{n}^{ \psi} \rightarrow 0$.
Now write, for all $c>0$,  $\mathcal{P} \left(|R_{n}^{ \hat{ \sigma}} | > c \right)  = \mathcal{P} \left( |R_{n}^{ \hat{ \sigma}} | > c \mid \Omega_{n}^{ \complement}( \psi, \kappa) \right) +  \mathcal{P} \left( |R_{n}^{ \hat{ \sigma}}| > c \mid \Omega_{n}( \psi, \kappa) \right) \leq \mathcal{P} \left( \Omega_{n}^{ \complement}( \psi, \kappa) \right) + E \Big[ |R_{n}^{ \hat{ \sigma}}| \mid \Omega_{n}( \psi, \kappa) \Big] / c$ by Markov's inequality.

On $\Omega_{n}( \psi, \kappa)$, we can bound the ``large'' jumps before $t-h_{n}^{ \psi}$ as follows
\begin{align*}
E \Bigg[ \frac{1}{h_{n}} \sum_{i=1}^{n} K \left( \frac{t_{i-1} - t}{h_{n}} \right) \frac{C}{ \epsilon} \left( \Delta_{i}^{n} X_{2}''( \kappa) \right)^{2} \mid \Omega_{n}( \psi, \kappa) \Bigg] &= \frac{1}{h_{n}} \sum_{t_{i-1} \leq t-h_{n}^{ \psi}} K \left( \frac{t_{i-1} - t}{h_{n}} \right) \frac{C}{ \epsilon} E \Big[ \left( \Delta_{i}^{n} X_{2}''( \kappa) \right)^{2} \Big] \\[0.10cm]
&\leq \frac{1}{h_{n}} \sum_{t_{i-1} \leq t-h_{n}^{ \psi}} K \left( \frac{t_{i-1} - t}{h_{n}} \right) \frac{C}{ \epsilon} \Delta_{i,n} \leq \frac{C}{ \epsilon}h_{n}^{B(1 - \psi)}
\end{align*}
by a Riemann approximation and (K3). Moreover,
\begin{equation*}
E \Big[ \left( \Delta_{i}^{n} X_{1}'' ( \kappa) \right)^{2} \Big] \leq C \Delta_{i,n} \int_{ \left\{x: \overline{ \Gamma}(x) \leq \kappa \right\}} \overline{ \Gamma}(x)^2 \lambda( \text{d}x),
\end{equation*}
so that, using again the convergence in Eq. \eqref{equation:mmr15};
\begin{equation*}
E \Big[|R_{n}^{ \hat{ \sigma}} | \mid \Omega_{n}( \psi, \kappa) \Big] \leq C \epsilon + \frac{C}{ \epsilon} \left( \int_{\left\{x: \overline{ \Gamma}(x) \leq \kappa \right\}} \overline{ \Gamma}(x)^{2} \lambda( \text{d}x) + h_{n}^{B(1- \psi)}
\right).
\end{equation*}
It follow that for all $\epsilon$ and $\kappa \in (0,1)$,
\begin{equation*}
\underset{n \rightarrow \infty}{ \limsup} \ \mathcal{P} \left( |R_{n}^{ \hat{ \sigma}}| > c \right) \leq \frac{1}{c} \left(C \epsilon + \frac{C}{ \epsilon} \left( \int_{ \left\{x: \overline{ \Gamma}(x) \leq \kappa \right\}} \overline{ \Gamma}(x)^{2} \lambda( \text{d}x) + h_{n}^{B(1- \psi)} \right) \right).
\end{equation*}
Setting $\epsilon = \left( \int_{ \left\{x: \overline{ \Gamma}(x) \leq \kappa \right\}} \overline{ \Gamma}(x)^{2} \lambda( \text{d}x) + h_{n}^{B(1 - \psi)} \right)^{ \psi'}$ with $0 < \psi' < 1$ and noticing that $\left( \int_{ \left\{x: \overline{ \Gamma}(x) \leq \kappa \right\}} \overline{ \Gamma}(x)^{2} \lambda( \text{d}x) + h_{n}^{B(1-\psi)} \right) \rightarrow 0$ as $\kappa \rightarrow 0$ and $n \rightarrow \infty$, we deduce that $\mathcal{P} \left( |R_{n}^{\hat{ \sigma}}| > c \right) \rightarrow 0$. The convergence in probability $\hat{ \sigma}_{t}^{n} \overset{p}{ \rightarrow} \sigma_{t-}$ then follows from these results in combination with Slutsky's Theorem. \qed
\end{proof}

\begin{proof}[Proof of Theorem \ref{theorem:null}.] We decompose $T_{t}^{n}$ into:
\begin{align*}
T_{t}^{n} &=  \underbrace{ \sqrt{ \frac{h_{n}}{K_2}} \frac{\left( \hat{ \mu}_{t}^{n} - \mu_{t-}^{*} \right)}{ \hat{\sigma}_{t}^{n}}}_{T_{1}} + \underbrace{ \sqrt{ \frac{h_{n}}{K_{2}}} \frac{\mu_{t-}^{*}}{ \hat{ \sigma}_{t}^{n}}}_{T_{2}},
\end{align*}
where $\mu_{t}^{*}$ is the jump compensated drift as defined in the proof of Lemma \ref{lemma:spot-volatility}. This, together with the boundedness of $\mu_{t}^{*}$, $\sigma_{t}$ and $\delta(t,x)$, yields the following:
\begin{equation*}
|T_{2}| \leq C \frac{ \sqrt{h_{n}}}{ \hat{ \sigma}_{t}^{n}} = O_{p} \left( \sqrt{h_{n}} \right).
\end{equation*}
Now, from Lemma \ref{lemma:discretization} -- \ref{lemma:localization} we can write:
\begin{equation*}
\hat{ \mu}_{t}^{n} - \mu_{t-}^{*} = \frac{1}{h_{n}} \sum_{i=1}^{n} K \left( \frac{t_{i-1}-t}{h_{n}} \right) \Delta_{i}^{n} X - \frac{1}{h_{n}} \sum_{i=1}^{n} K \left( \frac{t_{i-1}-t}{h_{n}} \right) \int_{t_{i-1}}^{t_{i}} \mu_{s}^{*} \text{d}s + O_{p} \left( \frac{1}{nh_{n}} + h_{n}^{ \Gamma/2} + h_{n}^{B} \right).
\end{equation*}
Hence,
\begin{align*}
\sqrt{h_{n}} \left( \hat{ \mu}_{t}^{n} - \mu_{t-}^{*} \right) &= \underbrace{ \frac{1}{ \sqrt{h_{n}}} \sum_{i=1}^{n} K \left( \frac{t_{i-1}-t}{h_{n}} \right) \int_{t_{i-1}}^{t_{i}} \sigma_{s} \text{d}W_{s}}_{G_{n,t}} \\[0.10cm]
&+ \underbrace{ \frac{1}{ \sqrt{h_{n}}} \sum_{i=1}^{n} K \left( \frac{t_{i-1}-t}{h_{n}} \right) \int_{t_{i-1}}^{t_{i}} \int_{ \mathbb{R}} \delta(s,x) \left( \nu( \text{d}s, \text{d}x) - \tilde{ \nu}( \text{d}s, \text{d}x) \right)}_{G'_{n,t}} + O_{p} \left( \frac{ \sqrt{h_{n}}}{nh_{n}} + h_{n}^{ \Gamma/2 + 1/2} + h_{n}^{B + 1/2} \right).
\end{align*}
The $O_{p}(\cdot)$ term is asymptotically negligible, as $n \rightarrow \infty$, $h_{n} \rightarrow 0$ and $nh_{n} \rightarrow \infty$. $G'_{n,t}$ also vanishes, which we show as in the proof of Lemma \ref{lemma:spot-volatility} by writing: $\int_{t_{i-1}}^{t_{i}} \int_{ \mathbb{R}} \delta(s,x) \left( \nu( \text{d}s, \text{d}x) - \tilde{ \nu}( \text{d}s, \text{d}x) \right) = \Delta_{i}^{n} X_{1}''( \kappa) + \Delta_{i}^{n} X_{2}''( \kappa)$ with $\kappa \in (0,1)$ and writing $G'_{n,t}= G_{n,t,1}'+G_{n,t,2}'$, where $G_{n,t,j}' = \frac{1}{ \sqrt{h_{n}}} \sum_{i=1}^{n} K \left( \frac{t_{i-1}-t}{h_{n}} \right) \Delta_{i}^{n} X_{j}''( \kappa)$ for $j = 1, 2$. Now, on the set $\Omega_{n}( \psi, \kappa)$:
\begin{align*}
E\Bigg[ \frac{1}{ \sqrt{h_{n}}} \sum_{i=1}^{n} K \left( \frac{t_{i-1}-t}{h_{n}} \right) | \Delta_{i}^{n} X_{2}''( \kappa)| \mid \Omega_{n}( \psi, \kappa)\Bigg] &= \frac{1}{\sqrt{h_{n}}} \sum_{t_{i-1} \leq t-h_{n}^{ \psi}} K \left( \frac{t_{i-1} - t}{h_{n}} \right) E \Big[ | \Delta_{i}^{n} X_{2}''( \kappa)|\Big] \\[0.10cm]
&\leq \frac{1}{ \sqrt{h_{n}}} \sum_{t_{i-1} \leq t-h_{n}^{ \psi}} K \left( \frac{t_{i-1} - t}{h_{n}} \right) C \Delta_{i,n} \int_{ \left\{x: \overline{ \Gamma}(x) > \kappa \right\}} \overline{ \Gamma}(x) \lambda( \text{d}x)  \leq C \sqrt{h_{n}}h_{n}^{B(1- \psi)},
\end{align*}
where the first inequality is by virtue of \citet*[][Lemma 2.1.5]{jacod-protter:12a} applied with $p = 1$. Since $\mathcal{P} \left(|G'_{n,t,2}| > c \right) \leq \mathcal{P} \left( \Omega_{n}^{ \complement}( \kappa, \psi) \right) + E \Big[|G'_{n,t,2} | \mid \Omega_{n}( \kappa, \psi) \Big] / c$, it again follows that
\begin{equation*}
\underset{n \rightarrow \infty}{ \limsup} ~ \mathcal{P} \left(|G_{n,2}'| > c \right) \leq \frac{1}{c}C \sqrt{h_{n}}h_{n}^{B(1- \psi)},
\end{equation*}
so that we deduce $G_{n,t,2}' \overset{p}{ \rightarrow} 0$ as $n \rightarrow \infty$.

Next, from Lemma 2.1.5 in \citet*{jacod-protter:12a},
\begin{equation*}
E \Big[|G'_{n,t,1}| \Big] \leq \frac{1}{\sqrt{h_{n}}} \sum_{i=1}^{n} K\left( \frac{t_{i-1}-t}{h_{n}} \right) E \Big[ |\Delta_{i}^{n} X_{1}''( \kappa) | \Big] \leq \frac{C}{\sqrt{h_{n}}} \sum_{i=1}^{n} K\left( \frac{t_{i-1}-1}{h_{n}} \right) \Delta_{i,n} \int_{\left\{x: \overline{ \Gamma}(x) \leq \kappa \right\}} \overline{ \Gamma}(x)\lambda( \text{d}x),
\end{equation*}
and the bound has order $O_{p}( \sqrt{h_{n}})$ so that $G_{n,t,1}' \overset{p}{ \rightarrow} 0$ and, therefore, $G_{n,t}' = o_{p}(1)$.

$G_{n,t}$ is the leading term, which we write as $G_{n,t} = \sum_{i = 1}^{n} \Delta_{i}^{n} u$ with $\Delta_{i}^{n}u = \frac{1}{ \sqrt{h_{n}}} K \left( \frac{t_{i-1}-t}{h_{n}} \right) \int_{t_{i-1}}^{t_{i}} \sigma_{s} \text{d}W_{s}$. The aim is to prove that $G_{n,t}$ -- and hence $\sqrt{h_{n}} \left( \hat{ \mu}_{t}^{n} - \mu_{t-}^{*} \right)$ -- converges stably in law to $N \left(0, K_{2} \sigma_{t-}^{2} \right)$. We exploit Theorem 2.2.14 in \citet*{jacod-protter:12a}, which lists four sufficient conditions for this to hold:
\begin{align}
&\sum_{i=1}^{n} E_{t_{i-1}} \Big[ \Delta_{i}^{n}u \Big] \overset{p}{ \rightarrow} 0, \label{g1} \\[0.10cm]
&\sum_{i=1}^{n} E_{t_{i-1}} \Big[ ( \Delta_{i}^{n}u)^{2} \Big] \overset{p}{ \rightarrow} K_{2} \sigma_{t-}^{2}, \label{g2} \\[0.10cm]
&\sum_{i=1}^{n} E_{t_{i-1}} \Big[ ( \Delta_{i}^{n}u)^{4} \Big] \overset{p}{ \rightarrow} 0, \label{g3} \\[0.10cm]
&\sum_{i=1}^{n} E_{t_{i-1}} \Big[ \Delta_{i}^{n}u \Delta_{i}^{n} Z \Big] \overset{p}{ \rightarrow} 0, \label{g4}
\end{align}
where either $Z_{t} = W_{t}$ or $Z_{t} = W'_{t}$ with $W'_{t}$ being orthogonal to $W_{t}$. The condition in Eq. \eqref{g1} is immediate. Next, from It\^{o}'s Lemma, we deduce that:
\begin{equation*}
\left( \int_{t_{i-1}}^{t_{i}} \sigma_{s} \text{d}W_{s} \right)^{2} = \int_{t_{i-1}}^{t_{i}} \sigma^{2}_{s} \text{d}s + 2 \int_{t_{i-1}}^{t_{i}} \sigma_{s} \left( \int_{t_{i-1}}^{s} \sigma_{u} \text{d}W_{u} \right) \text{d}W_{s},
\end{equation*}
so that
\begin{align*}
\sum_{i=1}^{n} E_{t_{i-1}} \Big[ ( \Delta_{i}^{n} u)^{2} \Big] &= \sum_{i=1}^{n} \frac{1}{h_{n}} K^{2} \left( \frac{t_{i-1}-t}{h_{n}} \right) E_{t_{i-1}} \Bigg[ \left( \int_{t_{i-1}}^{t_{i}} \sigma_{s} \text{d}W_{s} \right)^{2} \Bigg] \\[0.25cm]
&= \sum_{i=1}^{n} \frac{1}{h_{n}} K^{2} \left( \frac{t_{i-1}-t}{h_{n}} \right) E_{t_{i-1}} \Bigg[ \int_{t_{i-1}}^{t_{i}} \sigma_{s}^{2} \text{d}s \Bigg] \\[0.25cm]
&= \sum_{i=1}^{n} \frac{1}{h_{n}} K^{2} \left( \frac{t_{i-1}-t}{h_{n}} \right) \left( \sigma^{2}_{t_{i-1}} \Delta_{i,n} + E_{t_{i-1}} \Bigg[ \int_{t_{i-1}}^{t_{i}} \left( \sigma_{s}^{2} - \sigma_{t_{i-1}}^{2} \right) \text{d}s \Bigg] \right).
\end{align*}
The first term converges to $K_{2} \sigma_{t-}^{2}$, as shown in \citet{mancini-mattiussi-reno:15a}. The second term is negligible by the Lipschitz condition in Eq. \eqref{equation:lipschitz}:
\begin{equation*}
\sum_{i=1}^{n} \frac{1}{h_{n}} K^{2} \left( \frac{t_{i-1}-t}{h_{n}} \right) E_{t_{i-1}} \Bigg[ \int_{t_{i-1}}^{t_{i}} \left( \sigma_{s}^{2} - \sigma^{2}_{t_{i-1}} \right) \text{d}s \Bigg] \leq \sum_{i=1}^{n} \frac{1}{h_{n}} K^{2} \left( \frac{t_{i-1}-t}{h_{n}} \right) \Delta_{i,n} \Delta_{i,n}^{ \Gamma} = O_{p} \left( \Delta_{n}^{ \Gamma} \right).
\end{equation*}
To deal with the third condition in Eq. \eqref{g3}, we note that by the Burkholder-Davis-Gundy inequality and from the boundedness of $\sigma_{t}$, it holds that:
\begin{equation*}
E_{t_{i-1}} \left[ \left( \int_{t_{i-1}}^{t_{i}} \sigma_{s} \text{d}W_{s} \right)^{4} \right] \leq C ( \Delta_{i,n})^{2},
\end{equation*}
which leads to
\begin{align*}
\sum_{i=1}^{n} E_{t_{i-1}} \Big[ ( \Delta_{i}^{n} u)^{4} \Big] &= \sum_{i=1}^{n} \frac{1}{h_{n}^{2}} K^{4} \left( \frac{t_{i-1}-t}{h_{n}} \right) E_{t_{i-1}} \left[ \left( \int_{t_{i-1}}^{t_{i}} \sigma_{s} \text{d}W_{s} \right)^{4} \right] \\[0.25cm]
&\leq C \sum_{i=1}^{n} \frac{1}{h_{n}^{2}} K^{4} \left( \frac{t_{i-1}-t}{h_{n}} \right) (\Delta_{i,n})^{2} = O \left( \frac{ \Delta_{n}}{h_{n}} \right).
\end{align*}
To deal with Eq. \eqref{g4}, we first set $Z_{t} = W_{t}$. Then, using the Cauchy-Schwarz inequality:
\begin{align*}
E_{t_{i-1}} \Bigg[ \Delta_{i}^{n} W \int_{t_{i-1}}^{t_{i}} \sigma_{s} \text{d}W_{s} \Bigg] &\leq \sqrt{E_{t_{i-1}} \Big[ ( \Delta_{i}^{n} W)^{2} \Big]} \sqrt{E_{t_{i-1}} \left[ \left( \int_{t_{i-1}}^{t_{i}} \sigma_{s} \text{d}W_{s} \right)^{2} \right]} \\[0.25cm]
&= \sqrt{ \Delta_{i,n}} \sqrt{E_{t_{i-1}} \Bigg[ \int_{t_{i-1}}^{t_{i}} \sigma_{s}^{2} \text{d}s \Bigg]} = O_{p} \left( \Delta_{n} \right),
\end{align*}
and therefore
\begin{equation*}
\sum_{i=1}^{n} E_{t_{i-1}} \big[ \Delta_{i}^{n}u \Delta_{i}^{n} W \big] \leq C \frac{1}{ \sqrt{h_{n}}} \sum_{i=1}^{n} K \left( \frac{t_{i-1}-t}{h_{n}} \right) \Delta_{i,n} \rightarrow 0.
\end{equation*}
If $Z_{t} = W_{t}'$, the process $W_{t}' \int_{0}^{t} \sigma_{s} \text{d}W_{s}$ is a martingale by orthogonality, so that:
\begin{equation*}
E_{t_{i-1}} \Bigg[ \Delta_{i}^{n} W' \int_{t_{i-1}}^{t_{i}} \sigma_{s} \text{d}W_{s} \Bigg] = 0.
\end{equation*}
This verifies that $\sqrt{h_{n}} \left( \hat{ \mu}_{t}^{n} - \mu_{t-}^{*} \right) \overset{d}{ \rightarrow} N \left(0, K_{2} \sigma_{t-}^{2} \right)$, where the convergence is in law stably. Combined with Lemma \ref{lemma:spot-volatility}, this yields $T_{t}^{n} \overset{d}{ \rightarrow} N(0,1)$. \qed
\end{proof}

\begin{proof}[Proof of Theorem \ref{theorem:alternative}.]

Without loss of generality, we set $\tau_{\text{\upshape{db}}}= 1$. We write $\widetilde{X}_{t} = X_{t}+D_{t}+V_{t}$, where $D_{t} = \int_{0}^{t} c_{1,s}(1-s)^{- \alpha} \text{d}s$ and $V_{t} = \int_{0}^{t}c_{2,s}(1-s)^{- \beta} \text{d}W_{s}$.

Look at the term $\hat{ \mu}_{t}^{n}$. In Theorem \ref{theorem:null}, we already showed that $\frac{1}{h_{n}} \sum_{i=1}^{n}K \left( \frac{t_{i-1}-1}{h_{n}} \right) \Delta_{i}^{n} X = O_{p} \left( \frac{1}{ \sqrt{h_{n}}} \right)$. Now,
\begin{equation*}
A_n = \frac{1}{h_{n}} \sum_{i=1}^{n}K \left( \frac{t_{i-1}-1}{h_{n}} \right) \int_{t_{i-1}}^{t_i}c_{1,s} (1-s)^{-\alpha} \text{d}s = \frac{1}{h_{n}} \sum_{i=1}^{n}K \left( \frac{t_{i-1}-1}{h_{n}} \right) \Delta_{i,n}c_{1,\xi_{t_{i-1},t_i}}(1 - \xi_{t_{i-1},t_i})^{- \alpha},
\end{equation*}
where $t_{i-1} \leq \xi_{t_{i-1}, t_{i}} \leq t_{i}$.
The last term is, following a change of variable and Riemann summation, asymptotically equivalent to $c_{1,1}h_{n}^{- \alpha}m_{K}(- \alpha)$, where $m_{K}(- \alpha)$ is the constant in (K4).

As for the second term:
\begin{equation*}
B_{n} = \frac{1}{h_{n}} \sum_{i=1}^{n}K \left( \frac{t_{i-1}-1}{h_{n}} \right) \int_{t_{i-1}}^{t_i}c_{2,s} (1-s)^{- \beta} \text{d}W_{s},
\end{equation*}
we apply again \citet*[][Theorem 2.2.14]{jacod-protter:12a}. The crucial part is the variance (we omit the other three conditions that are uncomplicated here), and for this term we find:
\begin{align*}
\frac{1}{h_{n}^{2}} \sum_{i=1}^{n} K^{2} \left( \frac{t_{i-1}-1}{h_{n}} \right) E\left[ \left( \int_{t_{i-1}}^{t_i} c_{2,s} (1-s)^{- \beta} \text{d}W_{s} \right)^{2} \right] &= \frac{1}{h_{n}^{2}} \sum_{i=1}^{n} K^{2} \left( \frac{t_{i-1}-1}{h_{n}} \right) E \left[ \int_{t_{i-1}}^{t_{i}}c_{2,s}^{2} (1-s)^{-2 \beta} \text{d}{s} \right] \overset{p}{ \sim} h_{n}^{-1-2 \beta}m'_{K}(-2 \beta)c_{2,1}^{2},
\end{align*}
which implies that
\begin{equation*}
h_{n}^{1/2 + \beta} B_{n} \overset{d}{ \rightarrow} N \big(0,m_{K}'(-2 \beta)c_{2,1}^{2} \big),
\end{equation*}
where the convergence is stable in law. Thus, $\hat{ \mu}_{t}^{n}$ is dominated by $A_{n}$ when $\alpha- \beta>1/2$, by $B_{n}$ when $\alpha-\beta < 1/2$, whereas both terms influence the limit with $\alpha - \beta = 1/2$.

We turn to the denominator and set $( \hat{ \sigma}_{t}^{n})^{2} = \frac{1}{h_{n}} \sum_{i=1}^{n}K \left( \frac{t_{i-1}-1}{h_{n}} \right)\left[( \Delta_{i}^{n} D)^{2} + ( \Delta_{i}^{n} V)^{2} \right] + R_{n}'$, where
\begin{equation*}
R_{n}' = \frac{1}{h_{n}} \sum_{i=1}^{n}K \left( \frac{t_{i-1}-1}{h_{n}} \right) \left(( \Delta_{i}^{n} X + \Delta_{i}^{n} D + \Delta_{i}^{n} V)^{2} - ( \Delta_{i}^{n} D)^{2} - ( \Delta_{i}^{n} V)^{2} \right).
\end{equation*}
Since for all $\epsilon > 0$ and $a$, $b$ and $c$ real: $(a + b + c)^{2} - a^{2} - b^{2} \leq \epsilon (a^{2} + b^{2}) + \frac{1 + \epsilon}{ \epsilon}c^{2}$, we deduce the bound:
\begin{align*}
\begin{split}
|R_{n}'| \leq \frac{1}{h_{n}} \sum_{i=1}^{n}K \left( \frac{t_{i-1}-1}{h_{n}} \right) \left( \epsilon \left(( \Delta_{i}^{n} D)^{2} + ( \Delta_{i}^{n} V)^{2} \right) + \frac{1 + \epsilon}{ \epsilon} \left( \Delta_{i}^{n} X \right)^{2} \right) \\[0.10cm]
= \epsilon  \frac{1}{h_{n}} \sum_{i=1}^{n}K \left( \frac{t_{i-1}-1}{h_{n}} \right) \left( ( \Delta_{i}^{n} D)^{2} + ( \Delta_{i}^{n} V)^{2} \right)+ \frac{1+ \epsilon}{ \epsilon} O_{p}(1).
\end{split}
\end{align*}
Setting $\epsilon = h_{n}^{c}$, for a suitable $c>0$, $R_{n}'$ is at most $O_{p} \left(h_{n}^{-c} \right)$ and therefore negligible compared to the two leading terms. Next, setting $\Theta_{n} = \frac{1}{h_{n}} \sum_{i=1}^{n}K \left( \frac{t_{i-1}-1}{h_{n}} \right)( \Delta_{i}^{n} V)^{2}$ and using It\^{o}'s Lemma:
\begin{equation*}
( \Delta_{i}^{n} V)^{2} = \left( \int_{t_{i-1}}^{t_i}c_{2,s} (1-s)^{-\beta} \text{d}W_s\right)^{2} = 2 \int_{t_{i-1}}^{t_i}\left(\int_{t_{i-1}}^sc_{2,u} (1-u)^{-\beta} \text{d} W_u\right) c_{2,s} (1-s)^{-\beta}\text{d}W_s + \int_{t_{i-1}}^{t_i}c^2_{2,s} (1-s)^{-2\beta} \text{d}s,
\end{equation*}
and we split $\Theta_n = \Theta_{1,n} + \Theta_{2,n}$ accordingly. As for the term $A_{n}$, $\Theta_{2,n}$ is asymptotically equivalent to $c_{2,1}^{2} h_{n}^{-2 \beta}m_{K}(-2 \beta)$. We write $\Theta_{1,n}=\sum_{i=1}^{n} \Delta u'_i $, where
\begin{equation*}
\Delta u'_i = \frac{1}{h_n}K \left( \frac{t_{i-1}-1}{h_{n}}\right) 2\int_{t_{i-1}}^{t_i}c_{2,s}(1-s)^{-\beta}\left(\int_{t_{i-1}}^{s}c_{2,u}(1-u)^{-\beta} \text{d}W_{u} \right) \text{d} W_{s},
\end{equation*}
so $\Theta_{1,n}$ is a sum of martingale differences. Now, using a Taylor expansion on $c_{2,s}$,
\begin{align*}
\sum_{i=1}^{n} E_{t_{i-1}} \left[ \left( \Delta u_{i}' \right)^{2} \right] &= \frac{1}{h_n^2} \sum_{i=1}^{n}K^2 \left( \frac{t_{i-1}-1}{h_{n}} \right) \big(c^4_{2,t_{i-1}} + o_p(1) \big) E_{t_{i-1}} \left[\left(2 \int_{t_{i-1}}^{t_i}(1-s)^{- \beta} \left( \int_{t_{i-1}}^{s}(1-u)^{-\beta} \text{d}W_u \right) \text{d}W_s\right)^{2}\right] \\[0.10cm]
  & = \frac{4}{h_n^2}   \sum_{i=1}^{n}K^2 \left( \frac{t_{i-1}-1}{h_{n}} \right) \big(c^4_{2,t_{i-1}} +o_p(1) \big) \int_{t_{i-1}}^{t_i}(1-s)^{-2\beta}E_{t_{i-1}}\left[\left(\int_{t_{i-1}}^{s}(1-u)^{-\beta} \text{d}W_u\right)^2\right]\text{d}s \\[0.10cm]
  & =  \frac{4}{h_n^2}    \sum_{i=1}^{n}K^2 \left( \frac{t_{i-1}-1}{h_{n}} \right) \big(c^4_{2,t_{i-1}} +o_p(1) \big) \int_{t_{i-1}}^{t_i}(1-s)^{-2\beta}\left(\int_{t_{i-1}}^{s}(1-u)^{-2\beta} \text{d}u\right)\text{d}s \\[0.10cm]
  & =  \frac{2}{h_n^2}   \sum_{i=1}^{n}K^2 \left( \frac{t_{i-1}-1}{h_{n}} \right) \big(c^4_{2,t_{i-1}} +o_p(1) \big) \left((1-t_{i-1})^{-4 \beta} \Delta_{i,n}^{2} + O( \Delta_{i,n}^{3}) \right),
 \end{align*}
where the $o_{p}(1)$ term is uniform in $i$. If $\beta <1/4$, Lemma A.1 (ii) in \citet{mancini-mattiussi-reno:15a} implies:
\begin{equation*}
\frac{h_{n}^{1+4 \beta}}{ \Delta_{n}} \sum_{i=1}^{n} E_{t_{i-1}} \left[ \left( \Delta u_{i}' \right)^{2} \right] \overset{p}{ \rightarrow}2K_{2}m_{K}'(-4 \beta)H'(1)c_{2,1}^{4},
\end{equation*}
which means that $\Theta_{1,n}$ is dominated by $\Theta_{2,n}$. On the other hand, if $\beta \geq 1/4$, we deduce from Assumption \ref{assumption:times} and the boundedness of $K$ and $c_{2,t}$ that
\begin{equation*}
\frac{2}{h_{n}^{2}} \sum_{i=1}^{n} K^{2} \left( \frac{t_{i-1}-1}{h_{n}} \right) c_{2,t_{i-1}}^{4} (1-t_{i-1})^{-4 \beta} \Delta_{i,n}^{2}  \leq C\frac{1}{h_n^2} \sum_{i=1}^{n} (\Delta^-_n (n-(i-1))^{-4\beta}(\Delta_n^+)^2\leq C \frac{\Delta_n^{2-4\beta}}{h_n^2} \sum_{j=1}^{n} j^{-4\beta} \leq C \frac{\Delta_n^{2-4\beta}}{h_n^2} \log(n),
\end{equation*}
so this term is also dominated by $\Theta_{2,n}$.

With the remaining term in $( \hat{ \sigma}_{t}^{n})^{2}$, it holds that:
\begin{align*}
A_{n}' &= \frac{1}{h_{n}} \sum_{i=1}^{n}K \left( \frac{t_{i-1}-1}{h_{n}} \right)( \Delta_{i}^{n} D)^{2} = \frac{1}{h_{n}} \sum_{i=1}^{n}K \left( \frac{t_{i-1}-1}{h_{n}} \right) \left(c_{1,t_{i-1}}^{2} + o_{p}(1) \right) \left( \int_{t_{i-1}}^{t_i}(1-s)^{- \alpha} \text{d}s \right)^{2} \\[0.10cm]
&  \overset{p}{\sim} \frac{1}{h_{n}} \sum_{i=1}^{n}K \left( \frac{t_{i-1}-1}{h_{n}} \right)c_{1,t_{i-1}}^2(1-t_{i-1})^{-2\alpha}\Delta_{i,n}^2.
\end{align*}
As for the term $A_{n}$ above, when $\alpha<1/2 : \frac{h_{n}^{2 \alpha}}{ \Delta_{n}} A_{n}' \overset{p}{ \rightarrow} c_{1,1}^{2}H'(1)m_{K}(-2 \alpha)$. When $\alpha> 1/2$, we deduce (via the logic used to handle the term $\Theta_{1,n}$ above) that $A_{n}' \leq C \frac{ \Delta_{n}^{2-2 \alpha}}{h_{n}}$, which competes with $\Theta_{2,n}$. This bound is $A_{n}' \leq C \frac{ \Delta_{n} \log(n)}{h_{n}}$ (and is then dominated by $\Theta_{2,n}$) for $\alpha=1/2$.

We combine the various settings as follows. If $\alpha- \beta>1/2$ (implying $\alpha > 1/2$), $\hat{ \mu}_{t}^{n}$ is dominated by $A_{n}$ and has order $h_{n}^{- \alpha}$, whereas $( \hat{ \sigma}_{t}^{n})^{2}$ is at most of order $\Delta_{n}^{2-2 \alpha}/h_{n}$ (if $A_{n}'$ dominates), or $h_{n}^{-2 \beta}$ (if $\Theta_{2,n}$ dominates). In the first instance, the $t$-statistic is of order $(h_{n}/ \Delta_{n})^{1- \alpha} \rightarrow \infty$, while in the second it is $h_{n}^{1/2- \alpha+ \beta} \rightarrow \infty$. In the arbitrage-free setting $\alpha- \beta<1/2$, the leading term in $\hat{ \mu}_{t}^{n}$ is $B_{n}$. Note that $( \hat{ \sigma}_{t}^{n})^{2}$ is then dominated by $\Theta_{2,n}$, since the divergence rate of $A_{n}' \rightarrow \infty$ is always slower than $h_{n}^{-2 \beta}$, so the $t$-statistic is asymptotically normally distributed as stated in the theorem. In the borderline setting $\alpha- \beta = 1/2$, $( \hat{ \sigma}_{t}^{n})^{2}$ is still dominated by $\Theta_{2,n}$, while $\hat{ \mu}_{t}^{n}$ is influenced by both $A_{n}$ and $B_{n}$. The limit in distribution is then:
\begin{equation*}
c_{K, \beta} N(0,1) +  \frac{c_{1,1}}{c_{2,1}} \frac{m_{K}(- \alpha)}{(K_2 m_{K}(-2 \beta))^{1/2}}.
\end{equation*}
This concludes the proof. \qed
\end{proof}

\begin{proof}[Proof of Theorem \ref{theorem:gumbel-distribution}.] As in the proof of Theorem \ref{theorem:null}, we write:
\begin{equation} \label{equation:decomposition}
T_{t}^{n} = \sqrt{ \frac{h_{n}}{K_{2}}} \frac{( \hat{ \mu}_{t}^{n} - \mu_{t-}^{*})}{ \hat{ \sigma}_{t}^{n}}+ \sqrt{ \frac{h_{n}}{K_{2}}} \frac{ \mu_{t-}^{*}}{ \hat{ \sigma}_{t}^{n}} = \underbrace{ \sqrt{ \frac{h_{n}}{K_{2}}} \frac{( \hat{ \mu}_{t}^{n} - \mu_{t-}^{*})}{ \sigma_{t-}}}_{T_{t,1}^{n}} + \underbrace{ \sqrt{ \frac{h_{n}}{K_{2}}} \frac{( \hat{ \mu}_{t}^{n} - \mu_{t-}^{*})}{ \sigma_{t-}} \left(\frac{ \sigma_{t-}}{ \hat{ \sigma}_{t}^{n}}-1 \right)}_{T_{t,2}^{n}} + \underbrace{ \sqrt{ \frac{h_{n}}{K_2}} \frac{\mu_{t-}^{*}}{ \hat{ \sigma}_{t}^{n}}}_{O_{p}( \sqrt{h_{n}})}.
\end{equation}
The last term is handled as in the proof of Theorem \ref{theorem:null}, which shows that $\left|T_{t,3}^{n} \right| = O_{p}( \sqrt{h_{n}})$. This order is uniform in $t$, so that $\max_{j=1, \ldots ,m} \big|T_{t_{j}^{*},3}^{n} \big| = O_{p}( m \sqrt{h_{n}})$ by Boole's inequality. Next, we note that $T_{t,2}^{n} = T_{t,1}^{n} \left( \frac{ \sigma_{t-}}{ \hat{ \sigma}_{t}^{n}} - 1 \right) = T_{t,1}^{n} o_{p}(1)$ by Lemma \ref{lemma:spot-volatility}, so this term is negligible relative to $T_{t,1}^{n}$.

Turning to $T_{t,1}^{n}$, it follows from the proof of Theorem \ref{theorem:null}:
\begin{equation*}
T_{t,1}^{n} = \sqrt{ \frac{1}{K_{2}}} \frac{ G_{n,t} + G_{n,t}' + O_{p} \left( \frac{ \sqrt{h_{n}}}{nh_{n}} + h_{n}^{ \Gamma/2 + 1/2} + h_{n}^{B + 1/2} \right)}{ \sigma_{t-}},
\end{equation*}
where the $O_{p} \left( \frac{ \sqrt{h_{n}}}{nh_{n}} + h_{n}^{ \Gamma/2 + 1/2} + h_{n}^{B + 1/2} \right)$ term is uniform in $t$. Moreover, by \citet*[][Lemma 2.1.5]{jacod-protter:12a} $G_{n,t}'  = O_{p} \left( \sqrt{h_{n}} \right)$, uniformly in $t$.

We decompose $G_{n,t_{j}^{*}}/ \sigma_{t_{j}^{*}-}$, with $j=1, \ldots, m$, as follows:
\begin{align*}
\frac{G_{n,t_{j}^{*}}}{ \sigma_{t_{j}^{*}-}} &= \frac{1}{ \sigma_{t_{j}^{*}-} \sqrt{h_{n}}} \sum_{i=1}^{n} K \left( \frac{t_{i-1}-t_{j}^{*}}{h_{n}} \right)I_{ \left \{t_{j-1}^{*} < t_{i} \leq t_{j}^{*} \right \}} \int_{t_{i-1}}^{t_{i}} \sigma_{s} \text{d}W_{s} + \frac{1}{ \sigma_{t_{j}^{*}-} \sqrt{h_{n}}} \sum_{i=1}^{n} K \left( \frac{t_{i-1}-t_{j}^{*}}{h_{n}} \right)I_{ \left \{ t_{i} \leq t_{j-1}^{*} \right\}} \int_{t_{i-1}}^{t_{i}} \sigma_{s} \text{d}W_{s} \\[0.10cm]
&= \underbrace{ \frac{1}{ \sqrt{h_{n}}} \sum_{i=1}^{n} K \left( \frac{t_{i-1}-t_{j}^{*}}{h_{n}} \right)I_{ \left \{t_{j-1}^{*} < t_{i} \leq t_{j}^{*} \right \}} \left(W_{t_{i}} - W_{t_{i-1}} \right)}_{G_{t_{j}^{*},1}^{n}} + \underbrace{ \frac{1}{ \sigma_{t_{j}^{*}-} \sqrt{h_{n}}} \sum_{i=1}^{n} K \left( \frac{t_{i-1}-t_{j}^{*}}{h_{n}} \right)I_{ \left \{t_{j-1}^{*} < t_{i} \leq t_{j}^{*} \right \}} \int_{t_{i-1}}^{t_{i}} \left( \sigma_{s} - \sigma_{t_{j}^{*}-} \right) \text{d}W_{s}}_{G_{t_{j}^{*},2}^{n}} \\[0.10cm]
&+ \underbrace{ \frac{1}{ \sigma_{t_{j}^{*}-} \sqrt{h_{n}}} \sum_{i=1}^{n} K \left( \frac{t_{i-1}-t_{j}^{*}}{h_{n}} \right)
 I_{ \left \{ t_{i} \leq t_{j-1}^{*} \right \}} \int_{t_{i-1}}^{t_{i}} \sigma_{s} \text{d}W_{s}}_{G_{t_{j}^{*},3}^{n}}.
\end{align*}
Note $G_{t_{j_{1}}^{*},1}^{n}$ and $G_{t_{j_{2}}^{*},1}^{n}$ are independent for $j_{1} \neq j_{2}$. Moreover, $\sqrt{ \frac{1}{K_{2}}}G_{n,t_{j}^{*}}$ is, for each $j$, normally distributed with mean zero and unit asymptotic variance, so as in the proof of Lemma 1 in \citet{lee-mykland:08a}:
\begin{equation*}
a_{m} \left( \max_{j=1, \ldots,m} \sqrt{ \frac{1}{K_{2}}}G_{t^*_{j},1}^{n}-b_{m} \right)  \overset{d}{ \rightarrow} \xi.
\end{equation*}
For the second term, we write:
\begin{equation*}
\mathcal{P} \left( \max_{j=1, \ldots ,m} \big|G_{t_{j}^{*},2}^{n} \big| \geq \psi_{n,m} \right) \leq \mathcal{P} \left( \max_{j=1, \ldots ,m} \big|G_{t_{j}^{*},2}^{n} \big| \geq \psi_{n,m}, \max_{j=1, \ldots ,m} \left[G_{t_{j}^{*},2}^{n} \right] \leq \beta_{n,m} \right) + \mathcal{P} \left( \max_{j=1, \ldots ,m} \left[G_{t_{j}^{*},2}^{n} \right] > \beta_{n,m} \right),
\end{equation*}
with
\begin{align*}
\left[G_{t_{j}^{*},2}^{n} \right] &= \frac{1}{ \sigma_{t_{j}^{*}-} h_{n}^{2}} \sum_{i=1}^{n} K^{2} \left( \frac{t_{i-1}-t_{j}^{*}}{h_{n}} \right)I_{ \left \{t_{j-1}^{*} < t_{i} \leq t_{j}^{*} \right \}} \int_{t_{i-1}}^{t_{i}} \left( \sigma_{s} - \sigma_{t_{j}^{*}-} \right)^{2} \text{d}s \\[0.10cm]
&\leq Cm^{- \Gamma} \frac{1}{h_{n}} \sum_{i=1}^{n} K^{2} \left( \frac{t_{i-1}-t_{j}^{*}}{h_{n}} \right) \Delta_{i,n} \leq Cm^{- \Gamma},
\end{align*}
where $C$ is uniform over $j$.

Applying the exponential inequality from \citet*{dzhaparidze-zanten:01a}, we conclude that:
\begin{equation*}
\mathcal{P} \left( \max_{j=1, \ldots ,m} \big|G_{t_{j}^{*},2}^{n} \big| \geq C_{n}, \max_{j = 1, \ldots ,m} \left[ G_{t_{j}^{*},2}^{n} \right] \leq \beta_{n} \right) \leq 2m \exp \left(- \frac{C_{n}^{2}}{2 \beta_{n}} \right),
\end{equation*}
while Markov's inequality yields
\begin{equation*}
\mathcal{P} \left( \max_{j=1, \ldots, m} \left[G_{t_{j}^{*},2}^{n} \right] > \beta_{n,m} \right) \leq \frac{E \left( \max_{j=1, \ldots,m} \left[G_{t_{j}^{*},2}^{n} \right] \right)}{ \beta_{n,m}} \leq C \frac{m^{- \Gamma}}{ \beta_{n,m}}.
\end{equation*}
Setting $\beta_{n,m} = cm^{- \Gamma}$ and $\psi_{n,m} = cm^{- \Gamma/2} \sqrt{ \log m}$, this reduces to:
\begin{equation*}
\mathcal{P} \left( \frac{1}{ \psi_{n,m}} \max_{j=1, \ldots ,m} \big|G_{t_{j}^{*},2}^{n} \big| \geq c \right) \leq 2m^{1-c/2} + \frac{C}{c},
\end{equation*}
which shows that
\begin{equation*}
\max_{j=1, \ldots ,m} \big|G_{t_{j}^{*},2}^{n} \big| = O_{p} \left(m^{- \Gamma/2} \sqrt{ \log(m)} \right).
\end{equation*}
The fast vanishing tails of the kernel means that for $G_{t_{j}^{*},3}^{n}$:
\begin{align*}
\left[G_{t_{j}^{*},3}^{n} \right] &= \frac{1}{ \sigma_{t_{j}^{*}-}^{2} h_{n}} \sum_{i=1}^{n} K^{2} \left( \frac{t_{i-1}-t_{j}^{*}}{h_{n}} \right)
I_{ \left \{ t_{i} \leq t_{j-1}^{*} \right\}} \int_{t_{i-1}}^{t_{i}} \sigma^2_{s} \text{d}s \\[0.10cm]
\leq & C(mh_n)^{-2B},
\end{align*}
which implies, arguing as before,
\begin{equation*}
\max_{j=1, \ldots ,m} \big|G_{t_{j}^{*},3}^{n} \big| = O_{p} \left((mh_n)^{-B} \sqrt{ \log(m)} \right).
\end{equation*}
Thus, if $a_{m}\left(\frac{m}{ \sqrt{nh_{n}}}+(mh_n)^{-B} \sqrt{ \log(m)} + m^{-\Gamma/2} \sqrt{ \log(m)} \right) \rightarrow 0$,
$a_{m} \left( T_{m}^{*} - b_{m} \right) \overset{d}{ \rightarrow} \xi$. \qed
\end{proof}

\begin{proof}[Proof of Theorem \ref{theorem:fixed-jump}.] We again set $\tau_{ \text{db}} = \tau_{J} = T = 1$. Then,
\begin{align*}
T_{ \tau_{J}}^{n} = \sqrt{ \frac{h_{n}}{K_{2}}} \frac{ \hat{ \mu}_{t}^{n}}{ \hat{ \sigma}_{t}^{n}} &= \sqrt{ \frac{h_{n}}{K_{2}}} \frac{ \displaystyle \frac{1}{h_{n}} \sum_{i=1}^{n} K \left( \frac{t_{i-1}-1}{h_{n}} \right) \Delta_{i}^{n} X + \frac{1}{h_{n}} K \left( \frac{ \Delta_{n,n}}{h_{n}} \right)J}{ \displaystyle \left( \frac{1}{h_{n}} \sum_{i=1}^{n} K \left( \frac{t_{i-1}-1}{h_{n}} \right) ( \Delta_{i}^{n} X)^{2} + \frac{2}{h_{n}} K \left( \frac{ \Delta_{n,n}}{h_{n}} \right)J \Delta_{n}^{n} X + \frac{1}{h_{n}}K \left( \frac{ \Delta_{n,n}}{h_{n}} \right) J^{2} \right)^{1/2}} \\[0.25cm]
&= \frac{ \displaystyle N  \left(0, \sigma^{2}_{t-} \right) + \sqrt{ \frac{1}{K_{2}h_{n}}} K \left( \frac{ \Delta_{n,n}}{h_{n}} \right)J + o_{p}(1)}{ \displaystyle \left( \sigma_{t-}^{2} + \frac{2}{h_{n}}K \left( \frac{ \Delta_{n,n}}{h_{n}} \right) J O_{p} \left( \sqrt{ \Delta_{n}} \right) + \frac{1}{h_{n}} K \left( \frac{ \Delta_{n,n}}{h_{n}} \right) J^{2} + o_{p}(1) \right)^{1/2}} \overset{p}{ \rightarrow} \sqrt{ \frac{K(0)}{K_{2}}} \cdot \sign(J),
\end{align*}
as $n \rightarrow \infty$, $h_{n} \rightarrow 0$ and $nh_{n} \rightarrow \infty$. \qed
\end{proof}

\begin{proof}[Proof of Theorem \ref{theorem:null-noise}.]
As in the proof of Theorem \ref{theorem:null}, we write:
\begin{equation*}
\overbar{T}_{t}^{n} = \underbrace{ \sqrt{ \frac{h_{n}}{K_{2}}} \frac{( \hat{ \overbar{ \mu}}_{t}^{n} - \mu_{t-}^{*})}{ \sqrt{ \hat{ \overbar{ \sigma}}_{t}^{n}}}}_{ \overbar{T}_{1}} + \underbrace{ \sqrt{ \frac{h_{n}}{K_{2}}} \frac{\mu_{t-}^{*}}{ \sqrt{ \hat{ \overbar{ \sigma}}_{t}^{n}}}}_{ \overbar{T}_{2}},
\end{equation*}
where $\overbar{T}_{2} = O_{p}( \sqrt{h_{n}})$ is negligible. We dissect $\hat{ \overbar{ \mu}}_{t}^{n}$ into an efficient log-price and noise component:
\begin{equation*}
\hat{ \overbar{ \mu}}_{t}^{n} = \underbrace{ \frac{1}{h_{n}} \sum_{i=1}^{n-k_{n}+2} K \left( \frac{t_{i-1}-t}{h_{n}} \right) \Delta_{i-1}^{n} \overbar{X}}_{M_{X,n}} + \underbrace{ \frac{1}{h_{n}} \sum_{i=1}^{n-k_{n}+2} K \left( \frac{t_{i-1}-t}{h_{n}} \right) \Delta_{i-1}^{n} \overbar{ \epsilon}}_{M_{ \epsilon,n}}.
\end{equation*}
The strategy is again to verify Theorem 2.2.14 in \citet*{jacod-protter:12a}. This is more involved now because the summands in the drift estimator are $k_{n}$-dependent with $k_{n} \rightarrow \infty$ due to the pre-averaging. We therefore apply a block-splitting technique as in \citet*{jacod-li-mykland-podolskij-vetter:09a}.

We start with the noise term and write, for an integer $p \geq 2$,
\begin{equation*}
M_{ \epsilon,n} = M(p)_{t}^{n} + M'(p)_{t}^{n} + \hat{C}(p)_{t}^{n},
\end{equation*}
where
\begin{align*}
M(p)_{t}^{n} &= \frac{1}{h_{n}} \sum_{j=0}^{j_{n}(p)} \sum_{ \ell = \overbar{ \ell}_{j}^{n}(p)}^{ \overbar{ \ell}_{j}^{n}(p)+pk_{n}-1}K \left( \frac{t_{ \ell-1}-t}{h_{n}} \right) \Delta_{ \ell-1}^{n} \overbar{ \epsilon}, \\[0.10cm]
M'(p)_{t}^{n} &= \frac{1}{h_{n}} \sum_{j=0}^{j_{n}(p)} \sum_{ \ell = \overbar{ \ell}_{j}^{n}(p)+pk_{n}}^{ \overbar{ \ell}_{j}^{n}(p)+pk_{n}+k_{n}-1}K \left( \frac{t_{ \ell-1}-t}{h_{n}} \right) \Delta_{ \ell-1}^{n} \overbar{ \epsilon}, \\[0.10cm]
\hat{C}(p)_{t}^{n} &= \frac{1}{h_{n}} \sum_{ \ell = \overbar{ \ell}_{j_{n}(p)+1}^{n}(p)}^{n-k_{n}+2}K \left( \frac{t_{ \ell-1}-t}{h_{n}} \right) \Delta_{ \ell-1}^{n} \overbar{ \epsilon},
\end{align*}
with $j_{n}(p)= \left[ \frac{(n+1)}{(p+1)k_{n}} \right]-1$ and $\overbar{ \ell}_{j}^{n}(p) = j(pk_{n}-1)+jk_{n}+1$. The term $M(p)_{t}^{n}$ is a sum of ``big'' blocks of dimension $pk_{n}$, while the term $M'(p)_{t}^{n}$ is a sum of ``small'' blocks of dimension $k_{n}$, which separate the big blocks. $\hat{C}(p)_{t}^{n}$ is an end effect. $M(p)_{t}^{n}$ is shown to be the dominating term. It can be written as:
\begin{equation*}
M(p)_{t}^{n} \equiv \sum_{j=0}^{j_{n}(p)} u_{j}^{n},
\end{equation*}
where $u_{j}^{n} = \frac{1}{h_{n}} \sum_{ \ell = \overbar{ \ell}_{j}^{n}(p)}^{ \overbar{ \ell}_{j}^{n}(p)+pk_{n}-1}K \left( \frac{t_{ \ell-1}-t}{h_{n}} \right) \Delta_{ \ell-1}^{n} \overbar{ \epsilon}$ is by construction independent on $u_{j'}$ when $j' \neq j$ and $k_{n} > Q+1$. We can then employ Theorem 2.2.14 in \citet*{jacod-protter:12a} by taking expectations conditional on the discrete-time filtration $\mathcal{G}(p)_{j}^{n} = \mathcal{F}_{t_{ \overbar{ \ell}_{j}^{n}(p)-1}}$. We immediately get the orthogonality condition \eqref{g4} and
\begin{equation*}
\sum_{j=0}^{j_{n}(p)} E \left[u_{j}^{n} \mid \mathcal{G}(p)_{j}^{n} \right] = 0.
\end{equation*}
As for the conditional variance:
\begin{align*}
\sum_{j=0}^{j_{n}(p)}E \left[(u_{j}^{n})^{2} \mid \mathcal{G}(p)_{j}^{n} \right]  &= \frac{1}{h_{n}^{2}} \sum_{j=0}^{j_{n}(p)} E \left[ \left(\sum_{ \ell = \overbar{ \ell}_{j}^{n}(p)}^{ \overbar{ \ell}_{j}^{n}(p)+pk_{n}-1}K \left( \frac{t_{ \ell-1}-t}{h_{n}} \right) \Delta_{ \ell-1}^{n} \overbar{ \epsilon} \right)^{2} \mid \mathcal{G}(p)_{j}^{n} \right] \\[0.10cm]
&= \frac{1}{h_{n}^{2}} \sum_{j=0}^{j_{n}(p)}\sum_{ \ell = \overbar{ \ell}_{j}^{n}(p)}^{ \overbar{ \ell}_{j}^{n}(p)+pk_{n}-1}K^{2} \left( \frac{t_{ \ell-1}-t}{h_{n}} \right)E \left[ ( \Delta_{ \ell-1}^{n} \overbar{ \epsilon})^{2} \mid \mathcal{G}(p)_{j}^{n} \right] \\[0.10cm]
&+ \frac{2}{h_{n}^{2}} \sum_{j=0}^{j_{n}(p)} \sum_{ \ell = \overbar{ \ell}_{j}^{n}(p)}^{ \overbar{ \ell}_{j}^{n}(p)+pk_{n}-1} \sum_{ \ell'= \ell+1}^{\overbar{ \ell}_{j}^{n}(p)+pk_{n}-1 }K \left( \frac{t_{ \ell-1}-t}{h_{n}} \right)K \left( \frac{t_{ \ell'-1}-t}{h_{n}} \right)E \left[ \Delta_{ \ell-1}^{n} \overbar{ \epsilon} \Delta_{ \ell'-1}^{n} \overbar{ \epsilon} \mid \mathcal{G}(p)_{j}^{n} \right].
\end{align*}
By the mean value theorem,
\begin{equation*}
K \left( \frac{t_{ \ell'-1}-t}{h_{n}} \right)= K \left( \frac{t_{ \ell-1}-t}{h_{n}} \right) + K' \left( \frac{ \xi_{t_{ \ell-1},t_{ \ell'-1}}}{h_{n}} \right) \frac{t_{ \ell'-1}-t_{ \ell-1}}{h_{n}},
\end{equation*}
for a $\xi_{t_{ \ell-1},t_{ \ell'-1}} \in ]t_{ \ell'-1}-t,t_{ \ell-1}-t[$, so that
\begin{align*}
\sum_{j=0}^{j_{n}(p)}E \left[(u_{j}^{n})^{2} \mid \mathcal{G}(p)_{j}^{n} \right] &= \underbrace{ \frac{1}{h_{n}^{2}} \sum_{j=0}^{j_{n}(p)}  \sum_{ \ell = \overbar{ \ell}_{j}^{n}(p)}^{ \overbar{ \ell}_{j}^{n}(p)+pk_{n}-1}K^{2} \left( \frac{t_{ \ell-1}-t}{h_{n}} \right) \left(E \left[( \Delta_{ \ell-1}^{n} \overbar{ \epsilon})^{2} \mid \mathcal{G}(p)_{j}^{n} \right] + 2 \sum_{ \ell' = \ell+1}^{ \overbar{ \ell}_{j}^{n}(p)+pk_n-1}E \left[ \Delta_{ \ell-1}^{n} \overbar{ \epsilon} \Delta_{ \ell'-1}^{n} \overbar{ \epsilon} \mid \mathcal{G}(p)_{j}^{n} \right] \right)}_{V_{1,n}} \\[0.10cm]
&+ \underbrace{ \frac{2}{h_{n}^{2}} \sum_{j=0}^{j_{n}(p)} \sum_{ \ell = \overbar{ \ell}_{j}^{n}(p)}^{ \overbar{ \ell}_{j}^{n}(p)+pk_{n}-1} \sum_{ \ell'= \ell+1}^{\overbar{ \ell}_{j}^{n}(p)+pk_{n}-1 }K \left(\frac{t_{ \ell-1}-t}{h_{n}} \right)K' \left( \frac{ \xi_{t_{ \ell-1},t_{ \ell'-1}}}{h_{n}} \right) \frac{t_{ \ell'-1}-t_{ \ell-1}}{h_{n}}E \left[ \Delta_{ \ell-1}^{n} \overbar{ \epsilon} \Delta_{ \ell'-1}^{n} \overbar{ \epsilon} \mid \mathcal{G}(p)_{j}^{n} \right]}_{V_{2,n}}.
\end{align*}
The term inside the first summation is the long-run variance of a stationary time series that is over-differenced, so $V_{1,n} = 0$ \citep*[see e.g.][pp. 305--306]{hassler:16a}. The conditional variance is thus dominated by the limit of $V_{2,n}$.

We use the multinomial theorem for the fourth conditional moment:
\begin{align*}
&\frac{h_{n}^{4}}{k_{n}^{2}} \sum_{j=0}^{j_{n}(p)}E \left[(u_{j}^{n})^{4} \mid \mathcal{G}(p)_{j}^{n} \right] = \frac{1}{k_{n}^{2}} \sum_{j=0}^{j_{n}(p)}E \left[ \left( \sum_{ \ell = \overbar{ \ell}_{j}^{n}(p)}^{ \overbar{ \ell}_{j}^{n}(p)+pk_{n}-1}K \left( \frac{t_{ \ell-1}-t}{h_{n}} \right) \Delta_{ \ell-1}^{n} \overbar{ \epsilon} \right)^{4} \mid \mathcal{G}(p)_{j}^{n} \right] \\[0.10cm]
&= \frac{1}{k_{n}^{2}} \sum_{j=0}^{j_{n}(p)} \Bigg(\sum_{ \ell = \overbar{ \ell}_{j}^{n}(p)}^{ \overbar{ \ell}_{j}^{n}(p)+pk_{n}-1}K^{4} \left( \frac{t_{ \ell-1}-t}{h_{n}} \right)E \left[ \left( \Delta_{ \ell-1}^{n} \overbar{ \epsilon} \right)^{4} \right] \\[0.10cm]
&+ 4 \sum_{ \ell_{1} \neq \ell_{2}}K^{3} \left( \frac{t_{ \ell_1-1}-t}{h_{n}} \right)K \left( \frac{t_{ \ell_2-1}-t}{h_{n}} \right)E \left[ \left( \Delta_{ \ell_{1}-1}^{n} \overbar{ \epsilon} \right)^{3} \Delta_{ \ell_{2}-1}^{n} \overbar{ \epsilon} \right] \\[0.10cm]
&+ 6 \sum_{ \ell_{1} \neq \ell_{2}}K^{2} \left( \frac{t_{ \ell_{1}-1}-t}{h_{n}} \right)K^{2} \left( \frac{t_{ \ell_{2}-1}-t}{h_{n}} \right)E \left[ \left( \Delta_{ \ell_{1}-1}^{n} \overbar{ \epsilon} \right)^{2} \left( \Delta_{ \ell_{2}-1}^{n} \overbar{ \epsilon} \right)^{2} \right] \\[0.10cm]
&+ 12 \sum_{ \ell_{1} \neq \ell_{2} \neq \ell_{3}}K^{2} \left( \frac{t_{ \ell_1-1}-t}{h_{n}} \right)K \left( \frac{t_{ \ell_{2}-1}-t}{h_{n}} \right)K \left( \frac{t_{ \ell_{3}-1}-t}{h_{n}} \right)E \left[ \left( \Delta_{ \ell_{1}-1}^{n} \overbar{ \epsilon} \right)^{2} \Delta_{ \ell_{2}-1}^{n} \overbar{ \epsilon} \Delta_{ \ell_{3}-1}^{n} \overbar{ \epsilon} \right] \\[0.10cm]
& +24 \sum_{ \ell_{1} \neq \ell_{2} \neq \ell_{3} \neq \ell_{4}}K \left( \frac{t_{ \ell_{1}-1}-t}{h_{n}} \right)K \left( \frac{t_{ \ell_{2}-1}-t}{h_{n}} \right)K \left( \frac{t_{ \ell_{3}-1}-t}{h_{n}} \right)K \left( \frac{t_{ \ell_{4}-1}-t}{h_{n}} \right) E \left[ \Delta_{ \ell_{1}-1}^{n} \overbar{ \epsilon}  \Delta_{ \ell_{2}-1}^{n} \overbar{ \epsilon} \Delta_{ \ell_{3}-1}^{n} \overbar{ \epsilon} \Delta_{ \ell_{4}-1}^{n} \overbar{ \epsilon} \right] \Bigg).
\end{align*}
Now, since $\ell_{1}, \ell_{2}, \ell_{3}, \ell_{4}$ are no more than $O(pk_{n} \Delta_{n}/h_{n})$ terms apart, which is going to zero in the limit, we can mean value expand the kernel again to show that
\begin{equation*}
\frac{h_{n}^{4}}{k_{n}^{2}} \sum_{j=0}^{j_{n}(p)}E \left[(u_{j}^{n})^{4} \mid \mathcal{G}(p)_{j}^{n} \right] = Q_{1,n} + Q_{2,n},
\end{equation*}
where, as for the variance term,
\begin{align*}
Q_{1,n} &= \frac{1}{k_{n}^{2}} \sum_{j=0}^{j_{n}(p)} \Bigg( \sum_{ \ell = \overbar{ \ell}_{j}^{n}(p)}^{ \overbar{ \ell}_{j}^{n}(p)+pk_{n}-1}K^{4} \left( \frac{t_{ \ell-1}-t}{h_{n}} \right) \bigg( E \left[ \left( \Delta_{ \ell-1}^{n} \overbar{ \epsilon} \right)^{4} \right] +  4 \sum_{ \ell_{1} \neq \ell_{2}}E \left[ \left( \Delta_{ \ell_{1}-1}^{n} \overbar{ \epsilon} \right)^{3} \Delta_{ \ell_{2}-1}^{n} \overbar{ \epsilon} \right] + 6 \sum_{ \ell_{1} \neq \ell_{2}}E \left[ \left( \Delta_{ \ell_{1}-1}^{n} \overbar{ \epsilon} \right)^{2} \left( \Delta_{ \ell_{2}-1}^{n} \overbar{ \epsilon} \right)^{2} \right] \\[0.10cm]
&+ 12 \sum_{ \ell_{1} \neq \ell_{2} \neq \ell_{3}}E \left[ \left( \Delta_{ \ell_{1}-1}^{n} \overbar{ \epsilon} \right)^{2} \Delta_{ \ell_{2}-1}^{n} \overbar{ \epsilon} \Delta_{ \ell_{3}-1}^{n} \overbar{ \epsilon} \right] + 24 \sum_{ \ell_{1} \neq \ell_{2} \neq \ell_{3} \neq \ell_{4}}E \left[ \Delta_{ \ell_{1}-1}^{n} \overbar{ \epsilon} \Delta_{ \ell_{2}-1}^{n} \overbar{ \epsilon} \Delta_{ \ell_{3}-1}^{n} \overbar{ \epsilon} \Delta_{ \ell_{4}-1}^{n} \overbar{ \epsilon} \right] \bigg) \Bigg) = 0.
\end{align*}
With $\ell_{1}+ \ell_{2}+ \ell_{3}+ \ell_{4}=4$, the boundedness of the fourth moment of the noise and the first equation of (5.36) in \citet{jacod-li-mykland-podolskij-vetter:09a}, showing each $\Delta_{ \ell-1}^{n} \overbar{ \epsilon}$ is of order $k_n^{-1/2}$, it follows that
\begin{equation*}
E \bigg[ \left( \Delta_{ \ell_{1}-1}^{n} \overbar{ \epsilon} \right)^{ \ell_{1}} \left( \Delta_{ \ell_{2}-1}^{n} \overbar{ \epsilon} \right)^{ \ell_{2}} \left( \Delta_{ \ell_{3}-1}^{n} \overbar{ \epsilon} \right)^{ \ell_{3}} \left( \Delta_{ \ell_{4}-1}^{n} \overbar{ \epsilon} \right)^{ \ell_{4}} \bigg] \leq C k_{n}^{-2}.
\end{equation*}
We deduce that:
\begin{align*}
|Q_{2,n}| &\leq C \frac{pk_{n} \Delta_{n}}{h_{n}} \frac{1}{k_{n}^{4}} \sum_{j=0}^{j_{n}(p)} \Bigg( \sum_{ \ell_{1} \neq \ell_{2}}K^{3} \left( \frac{t_{ \ell_{1}-1}-t}{h_{n}} \right)K' \left( \frac{t_{ \ell_{2}-1}-t}{h_{n}} \right) + \sum_{ \ell_{1} \neq \ell_{2}}K^{2} \left( \frac{t_{ \ell_{1}-1}-t}{h_{n}} \right)(K^{2})' \left( \frac{t_{ \ell_{2}-1}-t}{h_{n}} \right) \\[0.10cm]
&+ \sum_{ \ell_{1} \neq \ell_{2} \neq \ell_{3}}K^{2} \left( \frac{t_{ \ell_{1}-1}-t}{h_{n}} \right) \left(K' \left( \frac{t_{ \ell_{2}-1}-t}{h_{n}} \right)K \left( \frac{t_{ \ell_{3}-1}-t}{h_{n}} \right)+K \left( \frac{t_{ \ell_{2}-1}-t}{h_{n}} \right)K' \left( \frac{t_{ \ell_{3}-1}-t}{h_{n}} \right) + O \left( \frac{pk_{n} \Delta_{n}}{h_{n}} \right) \right) \\[0.10cm]
&+ \sum_{ \ell_{1} \neq \ell_{2} \neq \ell_{3} \neq \ell_{4}}K \left( \frac{t_{ \ell_{1}-1}-t}{h_{n}} \right) \bigg(K' \left( \frac{t_{ \ell_{2}-1}-t}{h_{n}} \right)K \left( \frac{t_{ \ell_{3}-1}-t}{h_{n}} \right)K \left( \frac{t_{ \ell_{4}-1}-t}{h_{n}} \right)+K \left( \frac{t_{ \ell_{2}-1}-t}{h_{n}} \right)K' \left( \frac{t_{ \ell_{3}-1}-t}{h_{n}} \right)K \left( \frac{t_{ \ell_{4}-1}-t}{h_{n}} \right) \\[0.10cm]
&+K \left( \frac{t_{ \ell_{2}-1}-t}{h_{n}} \right)K \left( \frac{t_{ \ell_{3}-1}-t}{h_{n}} \right)K' \left( \frac{t_{ \ell_{4}-1}-t}{h_{n}} \right) + O \left( \frac{pk_{n} \Delta_{n}}{h_{n}} \right) \bigg) \Bigg) = O_{p} \left( \frac{1}{k^3_{n}} \right).
\end{align*}
$M'(p)_{t}^{n}$ can be bounded by Doob's inequality:
\begin{align*}
E \left[ \sup_{s \leq t} \left|M'(p)_{s}^{n} \right|^{2} \right] &\leq 4 \frac{1}{h_{n}^{2}} \sum_{j=0}^{j_{n}(p)}E \left[ \left( \sum_{ \ell= \overbar{ \ell}_{j}^{n}(p)+pk_{n}}^{ \overbar{ \ell}_{j}^{n}(p)+pk_{n}+k_{n}-1}K \left( \frac{t_{ \ell-1}-t}{h_{n}} \right) \Delta_{ \ell-1}^{n} \overbar{ \epsilon} \right)^{2} \right] \\[0.10cm]
&= \frac{1}{h_{n}^{2}} \sum_{j=0}^{j_{n}(p)} \sum_{ \ell} \sum_{ \ell' = \ell+1}K \left( \frac{t_{ \ell-1}-t}{h_{n}} \right)K' \left( \frac{ \xi_{t_{ \ell-1},t_{ \ell'-1}}}{h_{n}} \right) \frac{t_{ \ell'-1}-t_{ \ell-1}}{h_{n}}E \left[ \Delta_{ \ell-1}^{n} \overbar{ \epsilon} \cdot \Delta_{ \ell'-1}^{n} \overbar{ \epsilon} \right]
\end{align*}
which is negligible compared to $V_{2,n}$, as it asymptotically is an identical term summed over a small block.

Finally, the end-effect term can also be neglected, as
\begin{equation*}
\bigg| \frac{1}{h_{n}} \sum_{ \ell= \overbar{ \ell}_{j_{n}(p)+1}^{n}(p)}^{n-k_{n}+2}K \left( \frac{t_{ \ell-1}-t}{h_{n}} \right) \Delta_{ \ell-1}^{n} \overbar{ \epsilon} \bigg| \leq \frac{1}{h_{n}} \sum_{ \ell= \overbar{ \ell}_{j_{n}(p)+1}^{n}(p)}^{n-k_{n}+2}K \left( \frac{t_{ \ell-1}-t}{h_{n}} \right) \left| \Delta_{ \ell-1}^{n} \overbar{ \epsilon} \right| \leq C \frac{k_n^{1/2}}{h_{n}},
\end{equation*}
since $\Delta_{ \ell-1}^{n} \overbar{ \epsilon} = O_{p} \left( k_{n}^{-1/2} \right)$.

We next analyze the $M_{X,n}$ term and show this is negligible too. We write $M_{X,n} = \tilde{M}(p)_{t}^{n} + \tilde{M}'(p)_{t}^{n} + \widehat{ \widehat{C}}(p)_{t}^{n}$ with an identical decomposition as for the $M_{ \epsilon,n}$ term. Arguing as above, the dominating term is $\tilde{M}(p)_{t}^{n}$. We decompose $\tilde{M}(p)_{t}^{n} \equiv \sum_{j=0}^{j_{n}(p)} \tilde{u}_{j}^{n}$, where $\tilde{u}_{j}^{n} = \frac{1}{h_{n}} \sum_{ \ell= \overbar{ \ell}_{j}^{n}(p)}^{ \overbar{ \ell}_{j}^{n}(p)+pk_{n}-1}K \left( \frac{t_{ \ell-1}-t}{h_{n}} \right) \Delta_{ \ell-1}^{n} \overbar{X}$ and then compute:
\begin{align*}
\sum_{j=0}^{j_{n}(p)}E \left[(\tilde{u}_{j}^{n})^{2} \mid \mathcal{G}(p)_{j}^{n} \right] &= \frac{1}{h_{n}^{2}} \sum_{j=0}^{j_{n}(p)} E \left[ \left( \sum_{ \ell = \overbar{ \ell}_{j}^{n}(p)}^{ \overbar{ \ell}_{j}^{n}(p)+pk_{n}-1}K \left( \frac{t_{ \ell-1}-t}{h_{n}} \right) \Delta_{ \ell-1}^{n} \overbar{X} \right)^{2} \mid \mathcal{G}(p)_{j}^{n} \right] \\[0.10cm]
&= \frac{1}{h_{n}^{2}} \sum_{j=0}^{j_{n}(p)} \sum_{ \ell = \overbar{ \ell}_{j}^{n}(p)}^{ \overbar{ \ell}_{j}^{n}(p)+pk_{n}-1}K^{2} \left( \frac{t_{ \ell-1}-t}{h_{n}} \right)E \left[( \Delta_{ \ell-1}^{n} \overbar{X})^{2} \mid \mathcal{G}(p)_{j}^{n} \right] \\[0.10cm]
&+ \frac{2}{h_{n}^{2}} \sum_{j=0}^{j_{n}(p)} \sum_{ \ell = \overbar{ \ell}_{j}^{n}(p)}^{ \overbar{ \ell}_{j}^{n}(p)+pk_{n}-1} \sum_{ \ell'= \ell+1}^{\overbar{ \ell}_{j}^{n}(p)+pk_{n}-1 }K \left( \frac{t_{ \ell-1}-t}{h_{n}} \right)K \left( \frac{t_{ \ell'-1}-t}{h_{n}} \right)E \left[ \Delta_{ \ell-1}^{n} \overbar{X} \Delta_{ \ell'-1}^{n} \overbar{X} \mid \mathcal{G}(p)_{j}^{n} \right] \\[0.10cm]
&= \frac{2}{h_{n}^{2}} \sum_{j=0}^{j_{n}(p)} \sum_{ \ell = \overbar{ \ell}_{j}^{n}(p)}^{ \overbar{ \ell}_{j}^{n}(p)+pk_{n}-1} \sum_{ \ell'= \ell+1}^{\overbar{ \ell}_{j}^{n}(p)+pk_{n}-1 }K \left( \frac{t_{ \ell-1}-t}{h_{n}} \right)K' \left( \frac{ \xi_{t_{ \ell-1},t_{ \ell'-1}}}{h_{n}} \right) \frac{t_{ \ell'-1}-t_{ \ell-1}}{h_{n}}E \left[ \Delta_{ \ell-1}^{n} \overbar{X} \Delta_{ \ell'-1}^{n} \overbar{X} \mid \mathcal{G}(p)_{j}^{n} \right],
\end{align*}
where
\begin{equation*}
E \left[ \Delta_{ \ell-1}^{n} \overbar{X} \Delta_{ \ell'-1}^{n} \overbar{X} \mid \mathcal{G}(p)_{j}^{n} \right] =E \left[ \left( \sum_{j=0}^{k_{n}-1}H_{j}^{n} \Delta_{i}^{n} X_{ \ell+j-1} \right) \left( \sum_{j=0}^{k_{n}-1}H_{j}^{n} \Delta_{i}^{n} X_{ \ell'+j-1} \right) \mid \mathcal{G}(p)_{j}^{n} \right] = \sum_{j=0}^{k_{n}- \ell'+ \ell-1}H_{j}^{n}H_{j+ \ell'- \ell}^{n} \int_{t_{j-1}}^{t_{j}} \sigma_{s}^{2} \text{d}s.
\end{equation*}
We conclude that $\tilde{M}(p)_{t}^{n} = O_{p} \left( \frac{ \sqrt{\Delta_{n} k_{n}}}{h_{n}} \right)$ is of smaller order than $M_{ \epsilon,n}$.

Next, we investigate the estimator of the long-run variance of $\hat{ \overbar{ \mu}}_{t}^{n}$, i.e. $\hat{ \overbar{ \sigma}}_{t}^{n}$. We only analyze the market microstructure component, which is the leading term:
\begin{align*}
E \left[ \hat{ \overbar{ \sigma}}_{t}^{n} \right] = & \frac{1}{h_{n}} \sum_{i=1}^{n-k_{n}+2}K^{2} \left( \frac{t_{i-1}-t}{h_{n}} \right)E \left[ \left( \Delta_{i-1}^{n} \overbar{Y} \right)^{2} \right] + \frac{2}{h_{n}} \sum_{L=1}^{L_{n}}w \left( \frac{L}{L_{n}} \right) \sum_{i=1}^{n-k_{n}-L+2}K \left( \frac{t_{i-1}-t}{h_{n}} \right)K \left( \frac{t_{i+L-1}-t}{h_{n}} \right)E \left[ \Delta_{i-1}^{n} \overbar{Y} \Delta_{i-1+L}^{n} \overbar{Y} \right] \\[0.10cm]
= &  \frac{1}{h_{n}} \sum_{i=1}^{n-k_{n}+2}K^{2} \left( \frac{t_{i-1}-t}{h_{n}} \right)E \left[ \left( \Delta_{i-1}^{n} \overbar{ \epsilon} \right)^{2} \right] \\[0.10cm]
& + \frac{2}{h_{n}} \sum_{L=1}^{L_{n}}w \left( \frac{L}{L_{n}} \right) \sum_{i=1}^{n-k_{n}-L+2}K \left( \frac{t_{i-1}-t}{h_{n}} \right)K \left( \frac{t_{i+L-1}-t}{h_{n}} \right)E \left[ \Delta_{i-1}^{n} \overbar{ \epsilon} \Delta_{i-1+L}^{n} \overbar{ \epsilon} \right] + R_{n}'' \\[0.10cm]
& \equiv \sigma_{0,n}+ R_{n}'',
\end{align*}
where $R_{n}''$ is asymptotically negligible, following the calculation of the limiting variance of the drift estimator. Assuming $L_{n} \Delta_{n}/h_{n} \rightarrow 0$, we write $K \left( \frac{t_{i+L-1}-t}{h_{n}} \right) = K \left( \frac{t_{i-1}-t}{h_{n}} \right) + K' \left( \frac{ \xi_{i-1,i+L-1}}{h_{n}} \right) \frac{L \Delta_{n}}{h_{n}}$ and decompose $\sigma_{0,n}$ into:
\begin{align*}
\sigma_{0,n} &= \underbrace{ \frac{1}{h_{n}} \sum_{i=1}^{n-k_{n}+2}K^{2} \left( \frac{t_{i-1}-t}{h_{n}} \right)E \left[ \left( \Delta_{i-1}^{n} \overbar{ \epsilon} \right)^{2} \right] + \frac{2}{h_{n}} \sum_{L=1}^{L_{n}} \sum_{i=1}^{n-k_{n}-L+2}K^{2} \left( \frac{t_{i-1}-t}{h_{n}} \right)E \left[ \Delta_{i-1}^{n} \overbar{ \epsilon} \Delta_{i-1+L}^{n} \overbar{ \epsilon} \right]}_{ \sigma_{1,n}} \\[0.10cm]
&+ \underbrace{ \frac{2}{h_{n}} \sum_{L=1}^{L_{n}}w \left( \frac{L}{L_{n}} \right)\sum_{i=1}^{n-k_{n}-L+2}K \left( \frac{t_{i-1}-t}{h_{n}} \right)K' \left( \frac{ \xi_{i-1,i+L-1}}{h_{n}} \right)L \frac{ \Delta_{n}}{h_{n}}E \left[ \Delta_{i-1}^{n} \overbar{ \epsilon} \Delta_{i-1+L}^{n} \overbar{ \epsilon} \right]}_{ \sigma_{2,n}},
\end{align*}
where $\sigma_{1,n} = 0$. Since $w(L/L_{n}) = 1 + O \left( \frac{L}{L_{n}} \right)$ after the proper rescaling the limit of $\sigma_{2,n}$ and $V_{2,n}$ are identical. \qed
\end{proof}

\begin{proof}[Proof of Theorem \ref{theorem:alternative-noise}.]
As in the proof of Theorem \ref{theorem:alternative}, we set $\tau_{ \text{db}} = 1$ and further $c_{1,t} = c_{2,t} = 1$. We also write $\widetilde{X}_{t} = X_{t} + D_{t} + V_{t}$. Based on Theorem \ref{theorem:null-noise}, $\frac{1}{h_{n}} \sum_{i=1}^{n-k_{n}+1}K \left( \frac{t_{i-1}-t}{h_{n}} \right) \left( \Delta_{i-1}^{n} \overbar{X} + \Delta_{i-1}^{n} \overbar{ \epsilon} \right) = O_{p} \left( \frac{ \sqrt{k_{n}}}{h_{n}} \right)$.
Next, we note that for suitable $\xi_{j} \in[t_{j},t_{j+1}]$ and invoking again the mean-value theorem and ignoring end-effects:
\begin{align*}
\frac{1}{h_{n}} \sum_{i=1}^{n-k_{n}+2}K \left( \frac{t_{i-1}-1}{h_{n}} \right) \Delta_{i-1}^{n} \overbar{D} &= \frac{1}{h_{n}} \sum_{i=1}^{n-k_{n}+2}K \left( \frac{t_{i-1}-1}{h_{n}} \right) \sum_{j=1}^{k_{n}-1}g_{j}^{n}(D_{t_{i+j-1}}-D_{t_{i+j-2}}) \\[0.10cm]
&= \frac{1}{h_{n}} \sum_{i=1}^{n-k_{n}+2}K \left( \frac{t_{i-1}-1}{h_{n}} \right) \sum_{j=1}^{k_{n}-1}g_{j}^{n} \frac{(1-t_{i+j})^{1- \alpha}-(1-t_{i+j-1})^{1- \alpha}}{1- \alpha} \\[0.10cm]
&= \frac{1}{h_{n}} \sum_{i=1}^{n-k_{n}+2}K \left( \frac{t_{i-1}-1}{h_{n}} \right) \sum_{j=1}^{k_{n}-1}g_{j}^{n} (1- \xi_{i+j})^{- \alpha} \Delta_{i+j,n} \\[0.10cm]
&= \frac{1}{h_{n}} \sum_{i=1}^{n-k_{n}+2}K \left( \frac{t_{i-1}-1}{h_{n}} \right) \left(1-t_{i-1} + O(k_{n} \Delta_{n}) \right)^{- \alpha} \Delta_{i,n} \sum_{j=1}^{k_{n}-1}g_{j}^{n}.
\end{align*}
As $\frac{1}{k_{n}} \sum_{j=1}^{k_{n}-1}g_{j}^{n} \rightarrow \psi_{0} = \int_{0}^{1}g(s) \text{d}s$, the above has asymptotic rate $k_{n}h_{n}^{- \alpha}$ (as for the term $A_{n}$ in the proof of Theorem \ref{theorem:alternative}) if $k_{n} \Delta_{n} \rightarrow 0$, which is assured by assumption. This dominates the noise term if $\frac{k_{n}h_{n}^{- \alpha}}{ \frac{ \sqrt{k_{n}}}{h_{n}}} \rightarrow \infty$, i.e. $\sqrt{k_{n}}h_{n}^{1- \alpha} \rightarrow \infty$.

The pre-averaged volatility burst term can be written as
\begin{align*}
\frac{1}{h_{n}} \sum_{i=1}^{n-k_{n}+2}K \left( \frac{t_{i-1}-1}{h_{n}} \right) \Delta_{i-1}^{n} \overbar{V} &= \frac{1}{h_{n}} \sum_{i=1}^{n-k_{n}+2}K \left( \frac{t_{i-1}-1}{h_{n}} \right) \sum_{j=1}^{k_{n}-1}g_{j}^{n} \int_{t_{i+j-1}}^{t_{i+j}}(1-s)^{- \beta} \text{d}W_{s} \\[0.10cm]
&= \frac{1}{h_{n}} \sum_{i=1}^{n} \int_{t_{i-1}}^{t_{i}}(1-s)^{- \beta} \text{d}W_{s} \sum_{j=1}^{k_{n}-1 \wedge i-1}g_{j}^{n}K \left( \frac{t_{i-j-1}-1}{h_{n}} \right),
\end{align*}
whose variance is
\begin{align*}
&\frac{1}{h_{n}^{2}} \sum_{i=1}^{n} \int_{t_{i-1}}^{t_{i}}(1-s)^{-2 \beta} \text{d}s \left( \sum_{j=1}^{k_{n}-1 \wedge i-1}g_{j}^{n}K \left( \frac{t_{i-j-1}-1}{h_{n}} \right) \right)^{2} \\[0.10cm]
&= \frac{1}{h_{n}^{2}} \sum_{i=1}^{n} \int_{t_{i-1}}^{t_{i}}(1-s)^{-2 \beta} \text{d}s K^{2} \left( \frac{t_{i-1}-1}{h_{n}} + O \left( \frac{ \Delta_{n}k_{n}}{h_{n}} \right) \right) \Delta_{i,n} \left( \sum_{j=1}^{k_{n}-1 \wedge i-1}g_{j}^{n} \right)^{2} = O_{p} \left(k_{n}^{2}h_{n}^{-2 \beta-1} \right),
\end{align*}
so
\begin{equation*}
\frac{1}{h_{n}} \sum_{i=1}^{n-k_{n}+2}K \left( \frac{t_{i-1}-1}{h_{n}} \right) \Delta_{i-1}^{n} \overbar{V} = O_{p} \left( k_{n}h_{n}^{- \beta-1/2} \right),
\end{equation*}
which is negligible with respect to the previous term, since $\alpha- \beta>1/2$. Thus, $\hat{ \overbar{ \mu}}_{t}^{n}$ diverges as $k_{n}h_{n}^{- \alpha}$ when $\sqrt{k_{n}}h_{n}^{1- \alpha} \rightarrow \infty$. The leading orders in the denominator are:
\begin{align*}
& \frac{1}{h_{n}} \sum_{i=1}^{n-k_{n}+2}K^{2} \left( \frac{t_{i-1}-t}{h_{n}} \right) \left( \Delta_{i-1}^{n} \overbar{D} \right)^{2} + \frac{2}{h_{n}} \sum_{L=1}^{L_{n}}w \left( \frac{L}{L_{n}} \right) \sum_{i=1}^{n-k_{n}-L+2}K \left( \frac{t_{i-1}-t}{h_{n}} \right)K \left( \frac{t_{i+L-1}-t}{h_{n}} \right) \Delta_{i-1}^{n} \overbar{D} \Delta_{i+L-1}^{n} \overbar{D} \\[0.10cm]
\overset{p}{ \sim} & \frac{1}{h_{n}} \sum_{i=1}^{n-k_{n}+2}K^{2} \left( \frac{t_{i-1}-t}{h_{n}} \right) \left( \sum_{j=1}^{k_{n}-1}g_{j}^{n} \frac{ \left(1-t_{i-1} + O(k_{n} \Delta_{n}) \right)^{- \alpha}}{1- \alpha} \right)^{2} = O_{p} \left(k_{n}^{2} \frac{ \Delta_{n}^{2(1- \alpha)}}{h_{n}} \right),
\end{align*}
and
\begin{align*}
\frac{1}{h_{n}} \sum_{i=1}^{n-k_{n}+2}K^{2} \left( \frac{t_{i-1}-t}{h_{n}} \right) \left( \Delta_{i-1}^{n} \overbar{V} \right)^{2} + \frac{2}{h_{n}} \sum_{L=1}^{L_{n}}w \left( \frac{L}{L_{n}} \right) \sum_{i=1}^{n-k_{n}-L+2}K \left( \frac{t_{i-1}-t}{h_{n}} \right)K \left( \frac{t_{i+L-1}-t}{h_{n}} \right) \Delta_{i-1}^{n} \overbar{V} \Delta_{i+L-1}^{n} \overbar{V} =
O_{p} \left(k_{n}^{2}h_{n}^{-2 \beta} \right).
\end{align*}
Rationalizing as in the proof of Theorem \ref{theorem:alternative}, we arrive at the claim of this theorem. \qed
\end{proof}

\section{Critical value of drift burst $t$-statistic} \label{appendix:critical-value}

In the paper, we show that $(T_{t_{i}^{*}}^{n})_{i=1}^{m}$ is asymptotically a sequence of independent standard normal random variables, if $m$ does not increase too fast. A standard extreme value theory can then be applied. In practice, however, the route we follow with frequent sampling of our $t$-statistic leads $(T_{t_{i}^{*}}^{n})_{i=1}^{m}$ to be constructed from overlapping data. The extent of this depends on the interplay between the sampling frequency $n$, the grid points $\left( t_{i}^{*} \right)_{i=1}^{m}$, the kernel $K$, and the bandwidth $h_{n}$. In our implementation $(T_{t_{i}^{*}}^{n})_{i=1}^{m}$ exhibits a strong serial correlation, so that $m$ severely overstates the effective number of ``independent'' copies in a given sample. This implies our test is too conservative when evaluated against the Gumbel distribution.

With the left-sided exponential kernel advocated in the paper, the autocorrelation function (ACF) of $(T_{t_{i}^{*}}^{n})_{i=1}^{m}$ turns out to decay close to that of a covariance-stationary AR(1) process (see Panel A in Figure \ref{figure:ev_inference}):
\begin{equation}
\label{equation:armodel}
Z_{i} = \rho Z_{i-1} + \epsilon_{i}, \quad i = 1, \ldots, m,
\end{equation}
where $|\rho| < 1$ and $\epsilon_{i} \overset{ \text{i.i.d.}} \sim N \left(0, 1 - \rho^{2} \right)$. In this model, $Z_{i} \sim N(0,1)$ as consistent with the limit distribution of $T_{t}^{n}$, while the ACF is $\text{cor}(Z_{i}, Z_{j}) = \rho^{|i-j|}$.

To account for dependence in $(T_{t_{i}^{*}}^{n})_{i=1}^{m}$ and get better size and power properties of our test, we simulate the above AR(1) model. We input a value of $\rho$ that is found by conditional maximum likelihood estimation of Eq. \eqref{equation:armodel} from each individual series of $t$-statistics (i.e. OLS of Eq. \eqref{equation:armodel} based on $(T_{t_{i}^{*}}^{n})_{i = 1}^{m}$). We then generate a total of 100,000,000 Monte Carlo replica of the resulting process with a burn-in time of 10,000 observations that are discarded. In each simulation, we record the extreme value $Z_{m}^{*} = \max_{i = 1, \ldots, m} |Z_{i}|$. We tabulate the quantile function of the raw and normalized $Z_{m}^{*}$ series from Eq. \eqref{equation:maxabsTn} -- \eqref{equation:gumbel} across the entire universe of simulations and use this table to draw inference.\footnote{To speed this up for practical work, we prepared in advance a file with the quantile function of the raw and normalized $Z^{*}_{m}$ based on the above setup for several choices of $m$, $\rho$ and selected levels of significance $\alpha$. This file, along with an interpolation routine to find critical values for other $m$ and $\rho$, can be retrieved from the authors at request.}

\begin{figure}[t!]
\begin{center}
\caption{Autocorrelation function and critical value of $t$-statistic.
\label{figure:ev_inference}}
\begin{tabular}{cc}
\small{Panel A: ACF.} & \small{Panel B: Critical value.}\\
\includegraphics[height=0.4\textwidth,width=0.48\textwidth]{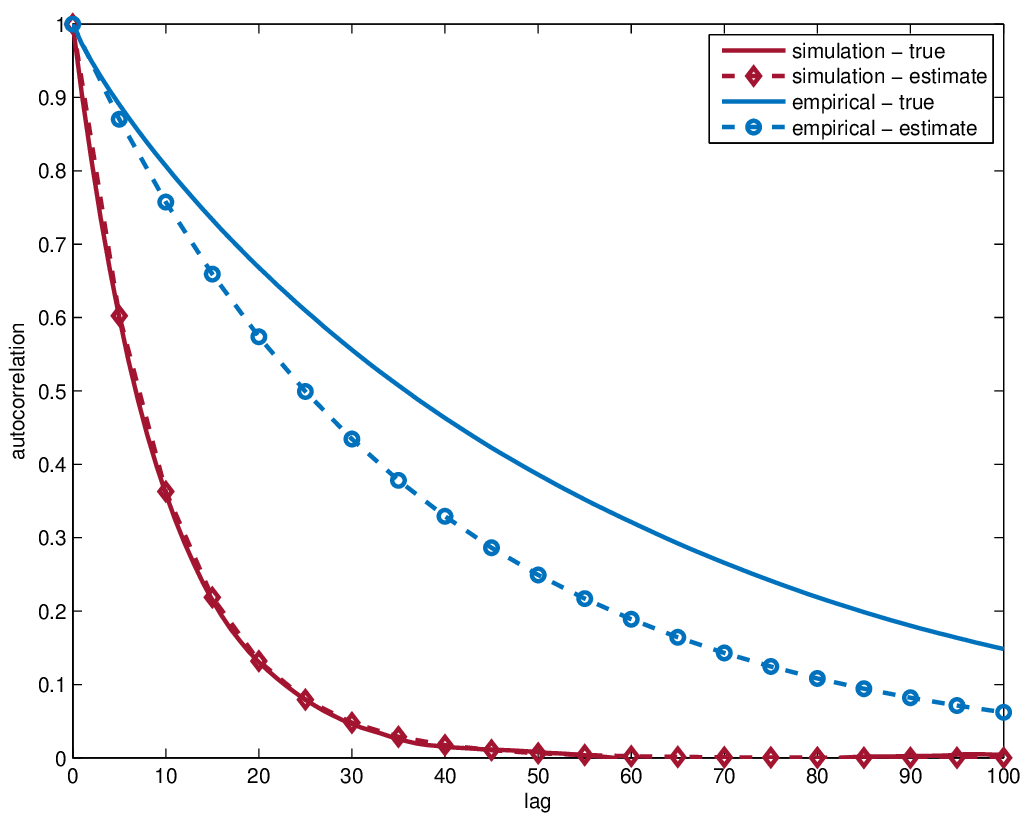} &
\includegraphics[height=0.4\textwidth,width=0.48\textwidth]{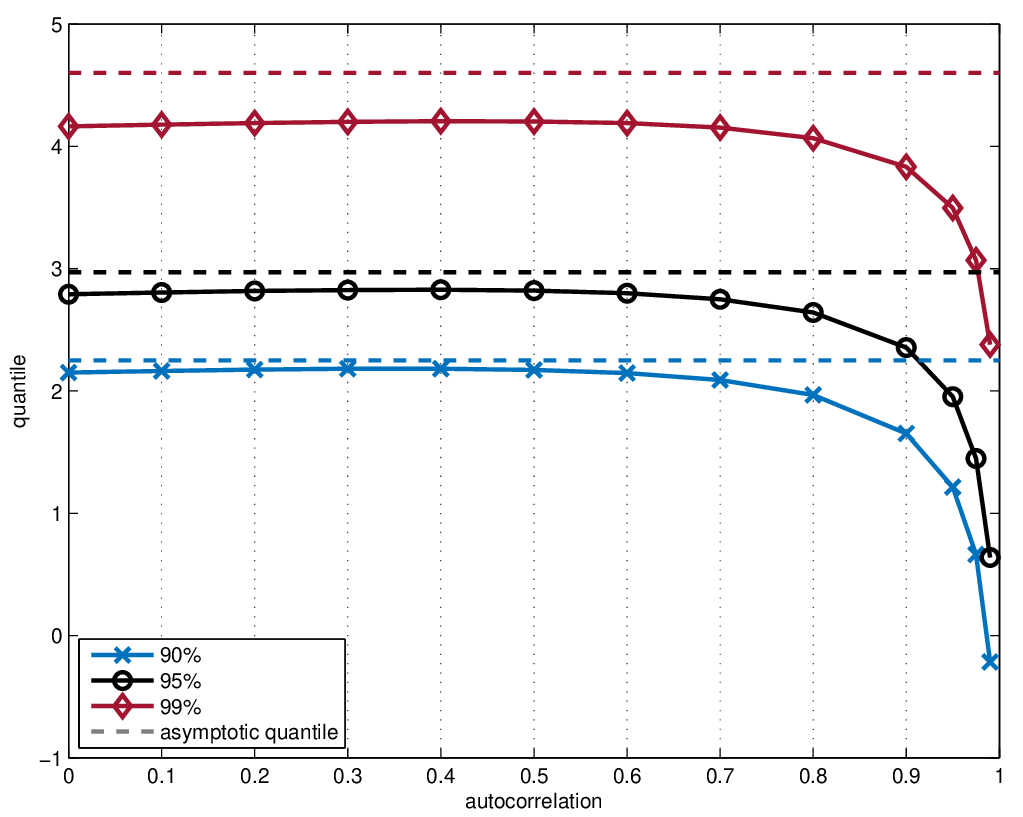}\\
\end{tabular}
\begin{scriptsize}
\parbox{0.95\textwidth}{\emph{Note.} In Panel A, we plot the ACF of $T_{t_{i}^{*}}^{n}$ from the simulation section (averaged across Monte Carlo replica) and the empirical application (averaged across asset markets and over time). The associated dashed curve is that implied by maximum likelihood estimation of the AR(1) approximation in Eq. \eqref{equation:armodel} based on the whole sequence $(T_{t_{i}^{*}}^{n})_{i=1}^{m}$ in our sample. The $t$-statistic is constructed as advocated in the main text. In Panel B, we plot the finite sample quantile found via simulation of the AR(1) model in Eq. \eqref{equation:armodel} by inputting the estimated autoregressive coefficient. This figure is based on $m = 2,500$ and shows the effect of varying the autocorrelation coefficient $\rho$ and confidence level $1 - \alpha$.}
\end{scriptsize}
\end{center}
\end{figure}

In Figure \ref{figure:ev_inference}, we provide an illustration of this approach. In Panel A, we show the ACF of our $t$-statistic for the stochastic volatility model considered in Section  \ref{section:simulation} and the empirical high-frequency data analyzed in Section \ref{section:empirical}. We also plot the curve fitted using the above AR(1) approximation. The estimated ACF is close to the observed one, although there is a slight attenuation bias for the empirical estimates. In Panel B, we report the simulated critical value, as a function of $\rho$ and $\alpha$ with $m$ fixed. These are compared to the ones from the Gumbel distribution. We note a pronounced gap between the finite sample and associated asymptotic quantile, which starts to grow noticeably wider in the region, where $\rho$ exceeds about 0.7 -- 0.8. Apart from that, the extreme value theory offers a decent description of the finite sample distribution for low confidence levels, if the degree of autocorrelation is small, while it gets materially worse, as we go farther into the tails. The latter is explained in part by the fact that even if the underlying sample is uncorrelated, and hence independent in our setting, convergence in law of the maximum term to the Gumbel is known to be exceedingly slow for Gaussian processes \citep*[e.g.][]{hall:79a}.


\section{A parametric test for drift bursts}\label{appendix:parametric}

In this section, as a robustness check, we propose an alternative drift burst test, which is based on a local parametric model.
We assume that, in the window $[0,T]$, the log-price $X$ follows the dynamics:
\begin{equation}\label{equation:model-parametric}
\text{d}X_{t} = \mu(T-t)^{- \alpha} \text{d}t + \sigma(T-t)^{- \beta} \text{d}W_{t}, \quad t \in [0,T],
\end{equation}
where $\alpha \in [0,1)$, $\beta \in [0, 1/2)$, $\mu \in \mathbb{R}$ and $\sigma > 0$ are constant.

The discretely sampled log-return is distributed as $\Delta_{i}^{n} X \overset{\text{d}}{ \sim} N( \mu_{i, \text{db}}, \sigma_{i, \text{db}}^{2})$ with:
\begin{equation}
\mu_{i, \text{db}} = \int_{t_{i-1}}^{t_{i}} \mu (T-s)^{- \alpha} \text{d}s = \frac{ \mu}{1-\alpha} \Big[ (T- t_{i-1})^{1- \alpha} - (T - t_{i})^{1- \alpha} \Big],
\end{equation}
and:
\begin{equation}
\sigma_{i, \text{db}}^{2} = \int_{t_{i-1}}^{t_{i}} \sigma^{2} (T-s)^{- 2 \beta} \text{d}s  = \frac{\sigma^{2}}{1-2 \beta} \Big[ (T- t_{i-1})^{1- 2 \beta} - (T - t_{i})^{1- 2 \beta} \Big].
\end{equation}
%
%
%
%
The log-likelihood for the model is:
\begin{align}
\log \mathcal{L}(\Theta; X) &= \sum_{j = 1}^{n} \log \phi( \mu_{j, \text{db}}, \sigma_{j, \text{db}}^{2}; \Delta_{j}^{n} X) = -\frac{n}{2} \log(2 \pi) - \frac{1}{2} \sum_{j = 1}^{n} \log ( \sigma_{j, \text{db}}^{2}) - \frac{1}{2} \sum_{j = 1}^{n} \frac{(\Delta_{j}^{n} X - \mu_{j, \text{db}})^{2}}{ \sigma_{j, \text{db}}^{2}} ,
\end{align}
where $\phi( \mu_{j, \text{db}}, \sigma_{j, \text{db}}^{2}; \Delta_{j}^{n} X)$ is the Gaussian density function.

The parameter vector $\Theta = (\mu, \sigma, \alpha, \beta)$ is estimated by maximum likelihood:
\begin{equation}
\hat{ \Theta}_{\text{ML}} = \underset{ \Theta}{\arg \max} \, \log \mathcal{L}(\Theta; X).
\end{equation}
We test the drift burst hypothesis:
\begin{equation}
\mathcal{H}_{0} : \alpha = 0 \quad \text{against} \quad \mathcal{H}_{1} : \alpha > 0
\end{equation}
with a likelihood ratio statistic $\Lambda = -2 ( \log \mathcal{L}(\hat{ \Theta}_{ \text{ML}}; X) -  \log \mathcal{L}(\Theta_{\mathcal{H}_{0}}; X)) \overset{a}{ \sim} \chi_{1}^{2}$ (a volatility burst is allowed, since $\beta$ is unrestricted). In the Online Appendix, based on simulations from the parametric model in Eq. \eqref{equation:model-parametric}, the test is found to perform reasonably in finite samples.

\begin{figure}[t!]
\begin{center}
\caption{Average drift burst log-return. \label{figure:dbsample}}
\begin{tabular}{cc}
\small{Panel A: Average log-return by direction.} & \small{Panel B: Average absolute log-return by asset.}\\
\includegraphics[height=0.4\textwidth,width=0.48\textwidth]{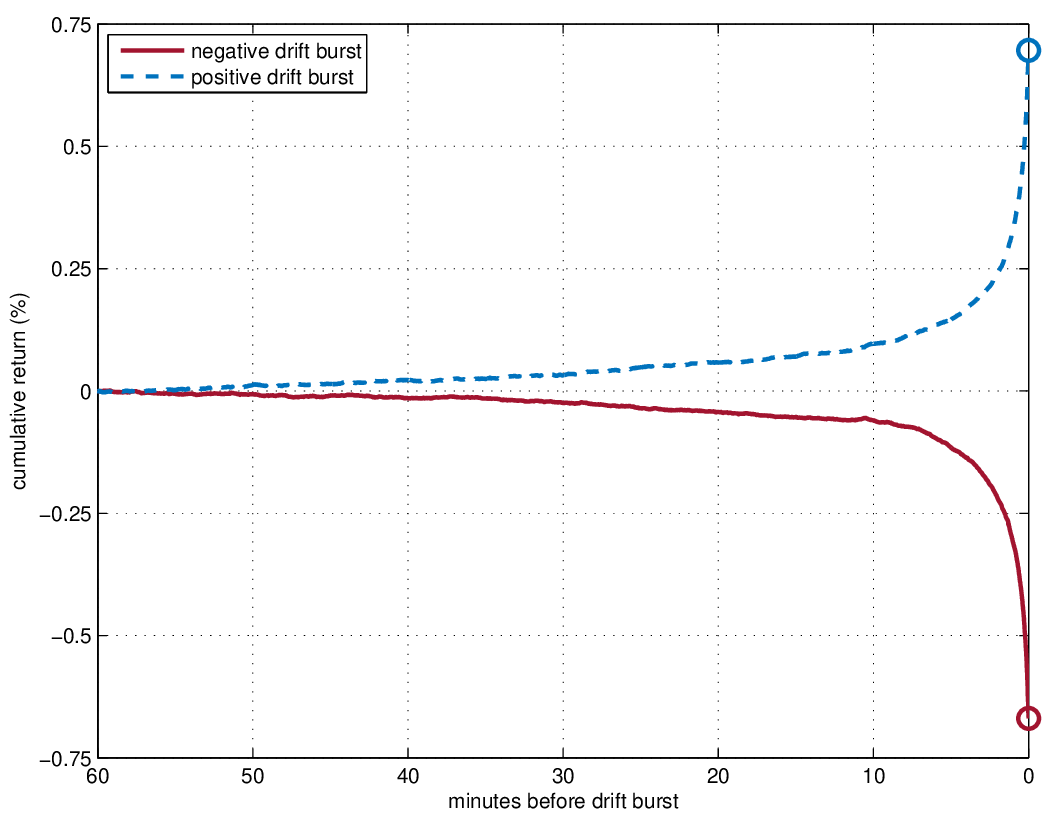} &
\includegraphics[height=0.4\textwidth,width=0.48\textwidth]{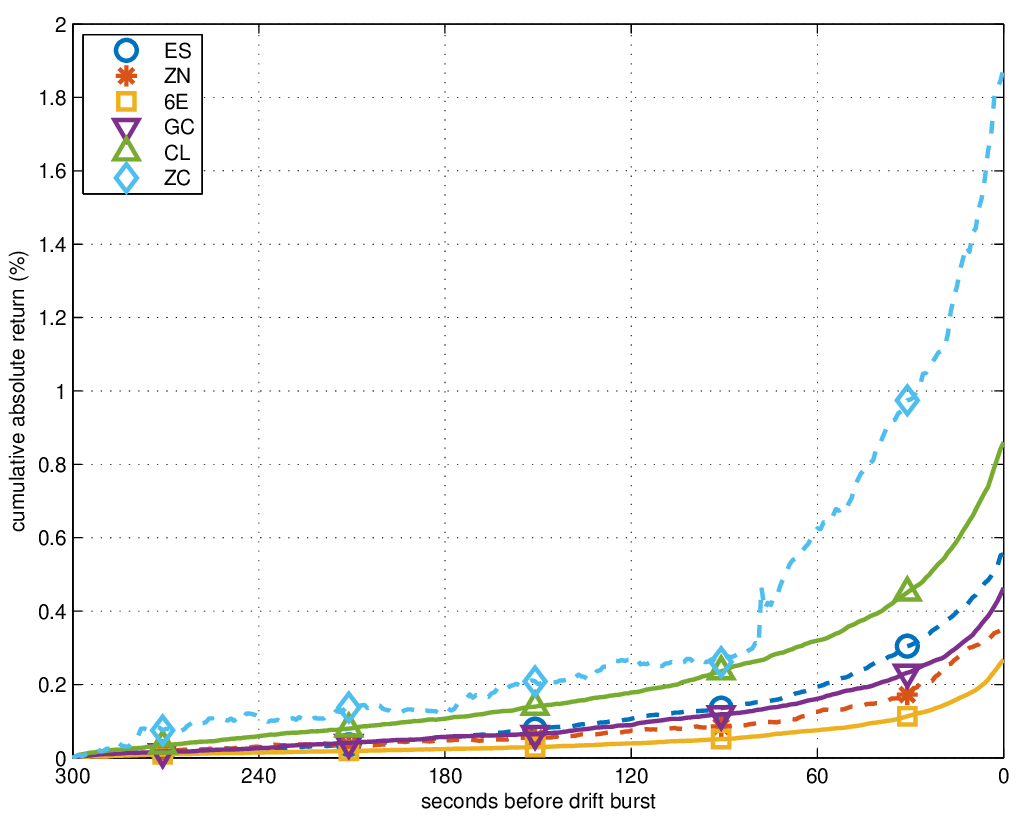} \\
\end{tabular}
\begin{scriptsize}
\parbox{\textwidth}{\emph{Note.} In Panel A, we show the cumulative log-return for positive and negative drift bursts, averaged across assets, in a one-hour window prior to the event. In Panel B, we extend this analysis by plotting the average absolute cumulative log-return by asset in a five-minute window prior to the event.}
\end{scriptsize}
\end{center}
\end{figure}
We implement the parametric test on two samples. The first, detected by the nonparametric test with a maximum $t$-statistic of at least five (in absolute value) and labeled the ``drift burst sample,'' consists of second-by-second prices sampled in a one-hour run-up window before the drift burst. The sample has 933 events. In Figure \ref{figure:dbsample}, we illustrate a typical drift burst in these data. In Panel A, we plot the average cumulative log-return of negative and positive drift bursts in the event window. In Panel B, we show the average absolute cumulative log-return for different asset classes in the last five minutes. The second ``no signal'' control sample has the corresponding data (when it is available) a week after each drift burst, which is 855 events.

\begin{figure}[t!]
\begin{center}
\caption{Empirical analysis of the parametric drift burst testing procedure. \label{figure:lr}}
\begin{tabular}{cc}
\small{Panel A: Likelihood ratio statistic.} & \small{Panel B: $\hat{ \alpha}_{ \text{ML}}$ and $\hat{ \beta}_{ \text{ML}}$.} \\
\includegraphics[height=0.4\textwidth,width=0.48\textwidth]{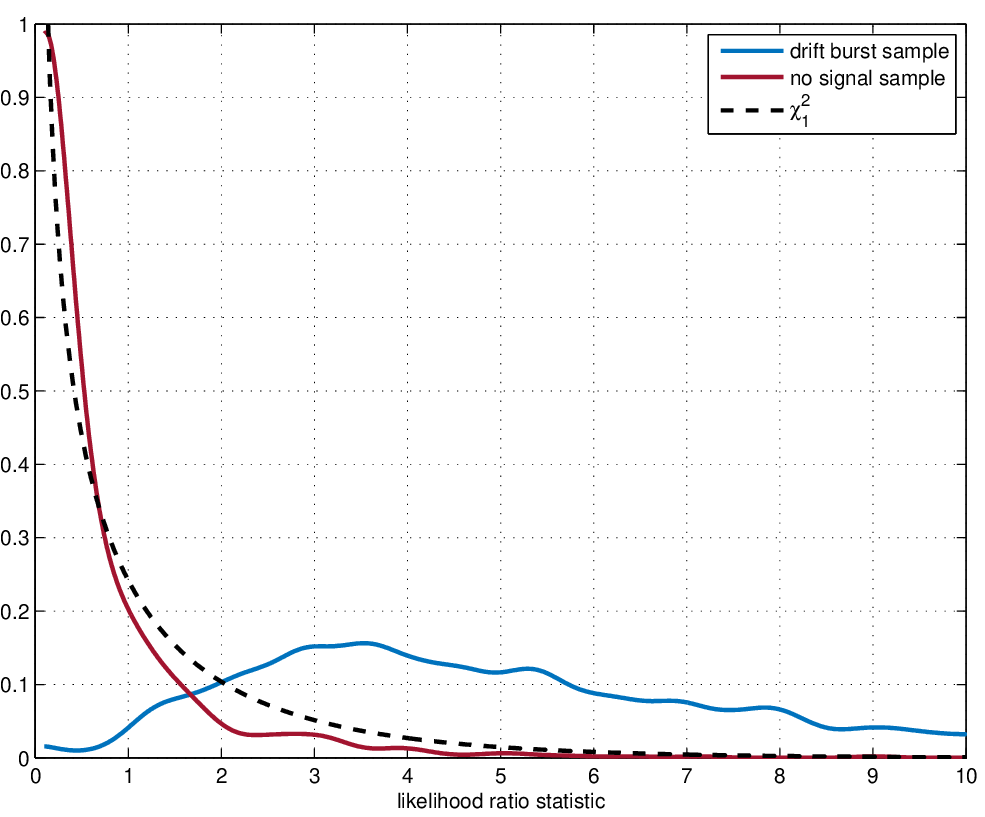} &
\includegraphics[height=0.4\textwidth,width=0.48\textwidth]{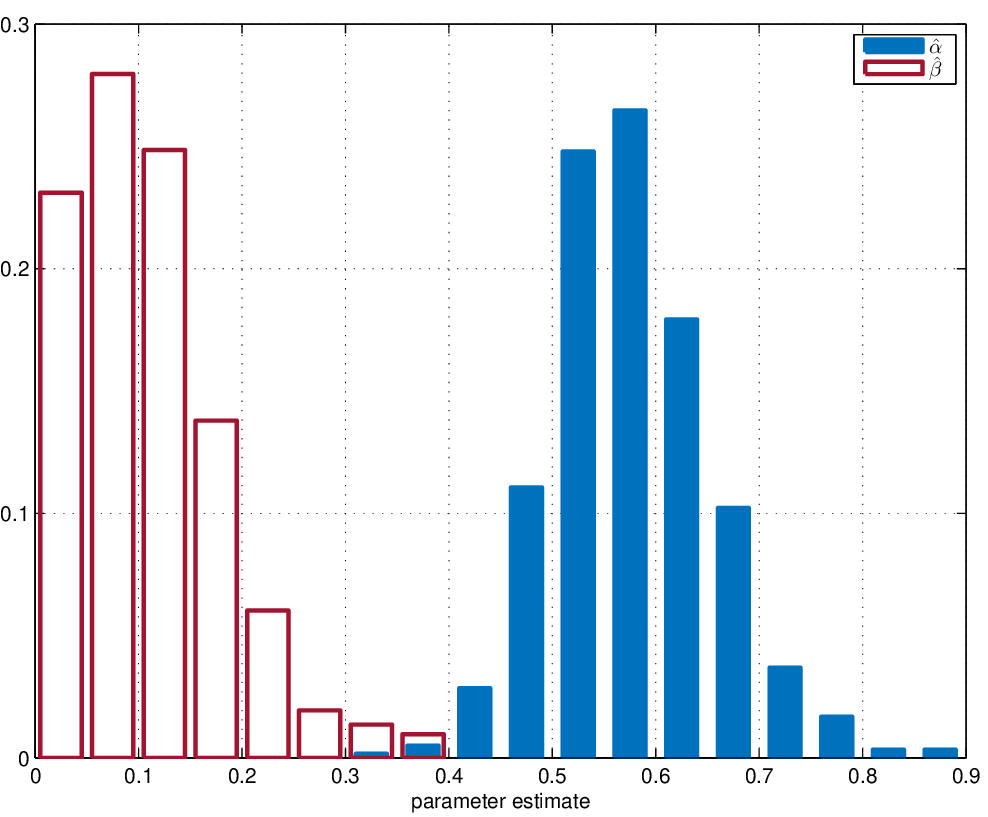} \\
\small{Panel C: $\hat{ \alpha}_{ \text{ML}} - \hat{ \beta}_{ \text{ML}}$.} & \small{Panel D: $\hat{ \alpha}_{ \text{ML}}$ versus $\hat{ \beta}_{ \text{ML}}$.} \\
\includegraphics[height=0.4\textwidth,width=0.48\textwidth]{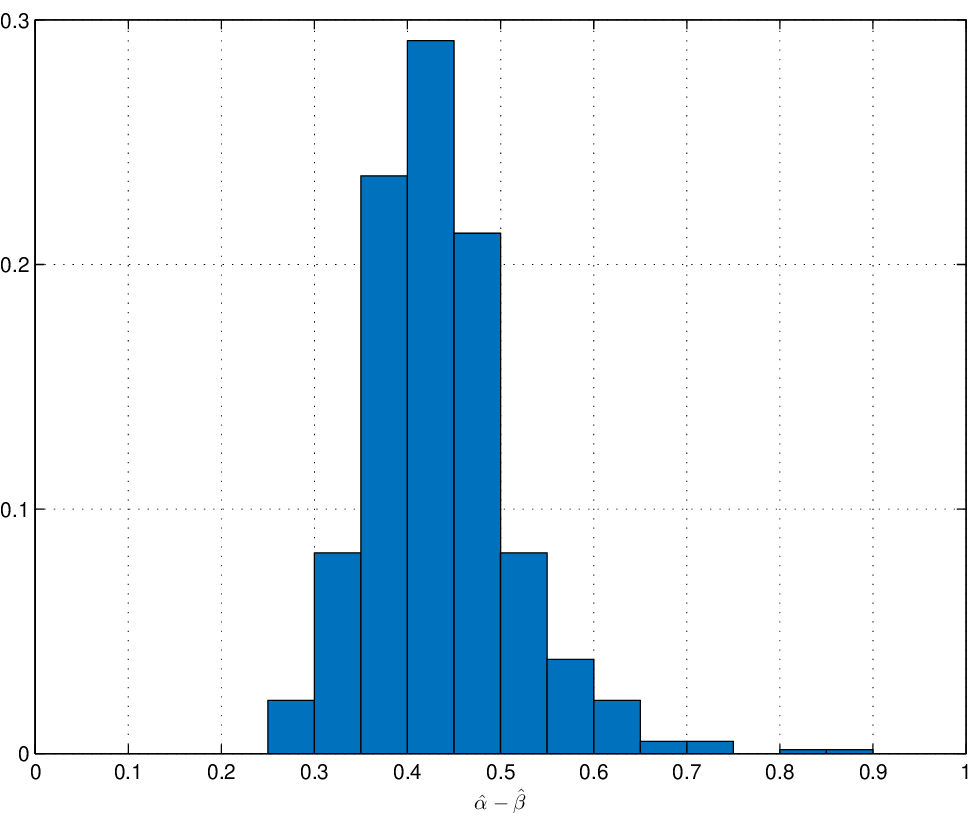} &
\includegraphics[height=0.4\textwidth,width=0.48\textwidth]{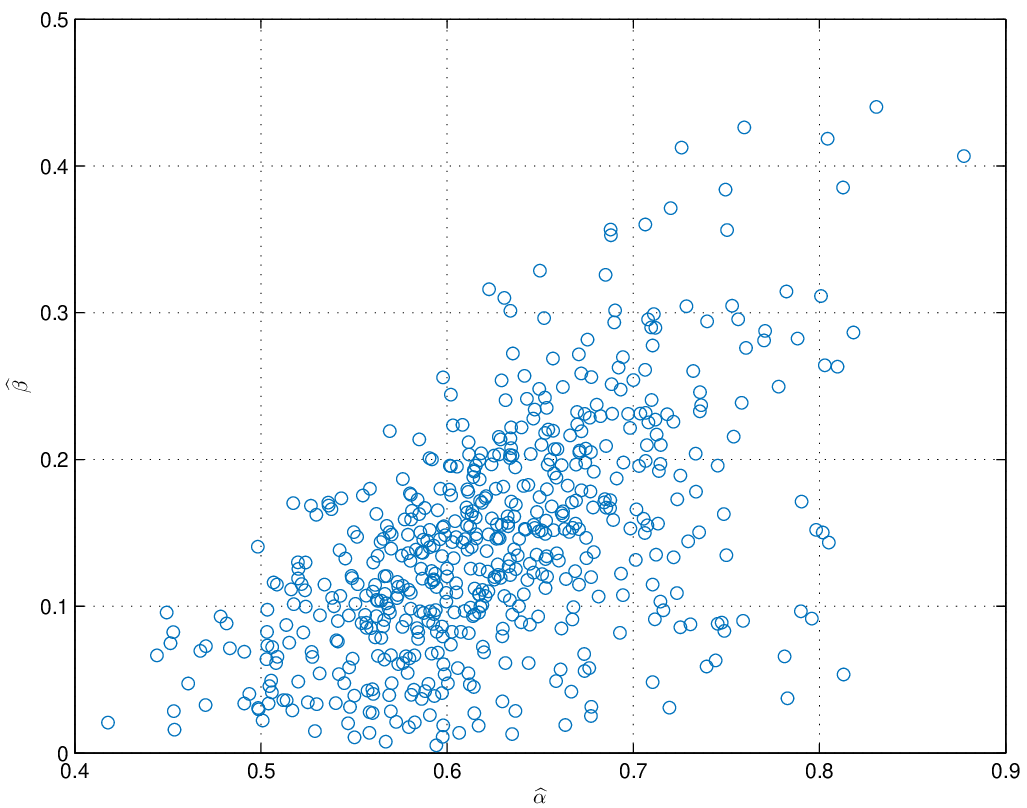}
\end{tabular}
\begin{scriptsize}
\parbox{\textwidth}{\emph{Note.} The figure reports on the outcome of the empirical analysis based on the parametric test for drift bursts. In Panel A, we show a kernel estimate of the distribution of the likelihood ratio statistic in the drift burst and no signal sample, which are compared to the $\chi_{1}^{2}$ distribution. In Panel B -- C, we plot a histogram of the maximum likelihood estimates $\hat{ \alpha}_{ \text{ML}}$, $\hat{ \beta}_{ \text{ML}}$, and $\hat{ \alpha}_{ \text{ML}} - \hat{ \beta}_{ \text{ML}}$. In Panel D, a scatter plot of $\hat{ \alpha}_{ \text{ML}}$ against $\hat{ \beta}_{ \text{ML}}$ is shown.}
\end{scriptsize}
\end{center}
\end{figure}	
In Panel A of Figure \ref{figure:lr}, we report the distribution of the likelihood ratio statistic in these two samples. As seen, $\Lambda$ is close to chi-square distributed in the no signal sample, while it is strongly skewed to the right in the drift burst sample.\footnote{As a robustness check, we applied three implementation windows: one hour, thirty minutes and ten minutes. In the no signal sample, and at a 5\% significance level, the likelihood ratio statistic rejects $\mathcal{H}_{0}$ in $1.87\%$, $1.87\%$ and $1.17\%$ of the events, while the comparable numbers in the drift burst sample are $63.99\%$, $54.88\%$ and $42.66\%$. Thus, drift bursts identified by the nonparametric statistic are also more likely to be identified by the parametric test as more data are included. This is consistent with the simulation analysis in Section \ref{section:simulation}.} In Panel B, we show a histogram of $\hat{ \alpha}_{ \text{ML}}$ and $\hat{ \beta}_{ \text{ML}}$ based on the 597 events in the drift burst sample, where the likelihood ratio test statistic is significant at a $5\%$ level. The sample averages are $\bar{ \hat{ \alpha}}_{ \text{ML}} = 0.6250$ and $\bar{ \hat{ \beta}}_{ \text{ML}} = 0.1401$. $\hat{ \beta}_{ \text{ML}}$ is typically much smaller than $\hat{ \alpha}_{ \text{ML}}$ (otherwise it is hard for the nonparametric test to detect an event). This aligns with the theory in Section \ref{section:dbh}. The average difference $\bar{ \hat{ \alpha}}_{ \text{ML}} - \bar{ \hat{ \beta}}_{ \text{ML}} = 0.4849$ is borderline with the no-arbitrage boundary of 0.5 (the full histogram is in Panel C). A high correlation of 51.94\% between $\hat{ \alpha}_{ \text{ML}}$ and $\hat{ \beta}_{ \text{ML}}$ is noticed, as also evident by the scatter plot in Panel D. It indicates that a stronger drift burst is accompanied by a stronger volatility burst, as consistent with the theory.

The parametric model can also be used to test a volatility burst:
\begin{equation}
\mathcal{H}_{0}^{ \prime} : \beta = 0 \quad \text{against} \quad \mathcal{H}_{1}^{ \prime} : \beta > 0
\end{equation}
with $\Lambda^{ \prime} = -2 ( \log \mathcal{L}( \hat{ \Theta}_{ \text{ML}}; X) -  \log \mathcal{L}( \Theta_{\mathcal{H}^{ \prime}_{0}}; X)) \overset{a}{ \sim} \chi_{1}^{2}$. If we apply this to the drift burst sample, $\mathcal{H}_{0}^{ \prime}$ is discarded at the 5\% level for 93.68\% of the events, showing that volatility is almost always co-exploding during a drift burst.

\clearpage

\renewcommand{\baselinestretch}{1.0}
\small
\bibliography{burst}

\end{document}